\def\BState{\State\hskip-\ALG@thistlm}
\newcommand{\chead}{\cellcolor{tableHeadGray}}
\newcommand{\parttitle}[1]{\vspace{-.3em}\noindent\textbf{#1.}}
\newcommand{\cList}{\thickspace \| \thickspace}
\newcommand{\mathtext}[1]{\thickspace\text{#1}\thickspace}
\newcommand{\projection}{\Pi}
\newcommand{\selection}{\sigma}
\newcommand{\aggregation}{\gamma}
\newcommand{\union}{\cup}
\newcommand{\intersection}{\cap}
\newcommand{\difference}{-}
\newcommand{\duplicate}{\delta}
\newcommand{\join}{\bowtie}
\newcommand{\crossprod}{\times}
\newcommand{\schema}[1]{\textsc{Sch}(#1)}
\newcommand{\eIf}[3]{\text{if}\thickspace #1 \thickspace
  \text{then} \thickspace #2 \thickspace \text{else} \thickspace #3}
\newcommand{\stopThresh}{\tau}
\newcommand{\ancestor}{\mathop{\stackrel{*}{\leadsto}}}
\newcommand{\allAncestor}{\mathop{\stackrel{\forall}{\leadsto}}}
\definecolor{black}{rgb}{0,0,0}
\definecolor{grey}{rgb}{0.8,0.8,0.8}
\definecolor{red}{rgb}{1,0,0}
\definecolor{green}{rgb}{0,1,0}
\definecolor{darkgreen}{rgb}{0,0.5,0}
\definecolor{darkpurple}{rgb}{0.5,0,0.5}
\definecolor{darkdarkpurple}{rgb}{0.3,0,0.3}
\definecolor{blue}{rgb}{0,0,1}
\definecolor{shadegreen}{rgb}{0.95,1,0.95}
\definecolor{shadeblue}{rgb}{0.95,0.95,1}
\definecolor{shadered}{rgb}{1,0.85,0.85}
\definecolor{oddRowGrey}{rgb}{0.95,0.95,0.95}
\definecolor{evenRowGrey}{rgb}{0.85,0.85,0.85}
\definecolor{tableHeadGray}{rgb}{0.85,0.85,0.85}
\newtheorem{Theorem}{Theorem}
\newtheorem{Definition}{Definition}
\newtheorem{Example}{Example}
\renewcommand{\ALG@beginalgorithmic}{\footnotesize}
\newcommand{\TRtitle}{Optimizing Provenance Computations}
\newcommand{\TRauthors}{Xing Niu and Boris Glavic}
\newcommand{\TRnumber}{IIT/CS-DB-2016-02}
\newcommand{\TRdate}{2016-10}
\begin{document}

\lstdefinestyle{psql}
{
tabsize=2,
basicstyle=\small\upshape\ttfamily,
language=SQL,
morekeywords={PROVENANCE,BASERELATION,INFLUENCE,COPY,ON,TRANSPROV,TRANSSQL,TRANSXML,CONTRIBUTION,COMPLETE,TRANSITIVE,NONTRANSITIVE,EXPLAIN,SQLTEXT,GRAPH,IS,ANNOT,THIS,XSLT,MAPPROV,cxpath,OF,TRANSACTION,SERIALIZABLE,COMMITTED,INSERT,INTO,WITH,SCN,UPDATED},
extendedchars=false,
keywordstyle=\color{blue},
mathescape=true,
escapechar=@,
sensitive=true
}

\lstdefinestyle{datalog}
{
basicstyle=\footnotesize\upshape\ttfamily,
language=prolog
}

\lstdefinestyle{rsl}
{
tabsize=3,
basicstyle=\small\upshape\ttfamily,
language=C,
morekeywords={RULE,LET,CONDITION,RETURN,AND,FOR,INTO,REWRITE,MATCH,WHERE},
extendedchars=false,
keywordstyle=\color{blue},
mathescape=true,
escapechar=@,
sensitive=true
}

\lstdefinestyle{pseudocode}
{
  tabsize=3,
  basicstyle=\small,
  language=c,
  morekeywords={if,else,foreach,case,return,in,or},
  extendedchars=true,
  mathescape=true,
  literate={:=}{{$\gets$}}1 {<=}{{$\leq$}}1 {!=}{{$\neq$}}1 {append}{{$\listconcat$}}1 {calP}{{$\cal P$}}{2},
  keywordstyle=\color{blue},
  escapechar=&,
  numbers=left,
  numberstyle={\color{green}\small\bf}, 
  stepnumber=1, 
  numbersep=5pt,
}

\lstdefinestyle{xmlstyle}
{
  tabsize=3,
  basicstyle=\small,
  language=xml,
  extendedchars=true,
  mathescape=true,
  escapechar=£,
  tagstyle={\color{blue}},
  usekeywordsintag=true,
  morekeywords={alias,name,id},
  keywordstyle={\color{red}}
}


\lstset{style=psql}

\twocolumn[{

\vspace{1cm}

\begin{minipage}{0.5\linewidth}
  \colorbox{black}{\includegraphics[width=1\linewidth]{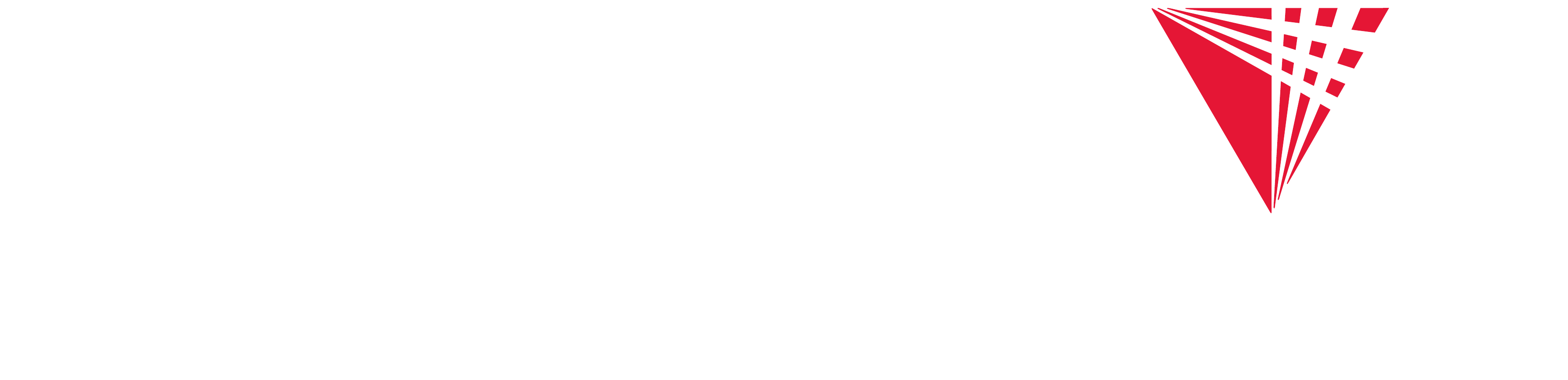}}
\end{minipage}
\hfill
\begin{minipage}{0.16\linewidth}
  \includegraphics[width=1\linewidth]{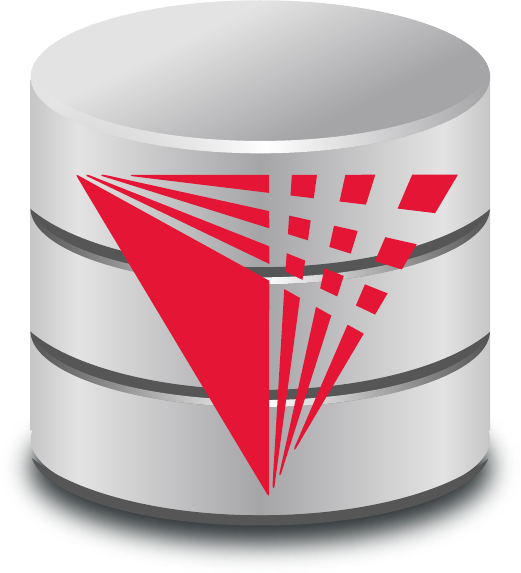}  
\end{minipage}\\
\vspace{4cm}

\centering
\begin{minipage}{1.0\linewidth} 
\centering
{\Huge \bf \TRtitle}
\end{minipage}
\\
\vspace{1cm}

{\huge \TRauthors}\\
\vspace{1cm}

{\huge \tt IIT DB Group Technical Report \TRnumber}\\
\vspace{1cm}

{\Large \TRdate}
\vspace{3cm}

{\huge \url{http://www.cs.iit.edu/~dbgroup/}}

\vspace{3cm}
\begin{minipage}{1.0\linewidth}
\textbf{LIMITED DISTRIBUTION NOTICE}: The research presented in this report may be submitted as a whole or in parts for publication  and will probably be copyrighted if accepted for publication. It has been issued as a Technical Report for early dissemination of its contents. In view of the transfer of copyright to the outside publisher, its distribution outside of IIT-DB prior to publication should be limited to peer communications and specific requests. After outside publication, requests should be filled only by reprints or legally obtained copies of the article (e.g. payment of royalties).  
\end{minipage}

}]

\clearpage


\begin{abstract}
Data provenance 
is essential for debugging query results, auditing data in cloud environments, and explaining outputs of Big Data analytics.  
%
A well-established technique is to represent provenance as annotations on data and
to \emph{instrument} queries to propagate these annotations to produce results annotated with provenance. 
However, even sophisticated optimizers are often incapable of producing efficient execution plans for instrumented queries, because of their inherent complexity and unusual structure.
Thus, while instrumentation enables provenance support for databases without requiring any modification to the DBMS, the performance of this approach is far from optimal.
In this work, we develop provenance-specific optimizations to address this problem.
Specifically, we introduce algebraic equivalences targeted at instrumented queries 
and discuss alternative, equivalent ways of instrumenting a query for provenance capture.
Furthermore, we present an extensible heuristic and cost-based optimization (CBO) framework that governs the application of these  optimizations and implement this framework in our \emph{GProM} provenance system. 
%
 Our CBO is agnostic to the plan space shape, uses a DBMS for cost estimation, and enables retrofitting of optimization choices into existing code by adding a few LOC. 
Our experiments confirm that these optimizations are highly effective, often improving performance by several orders of magnitude for diverse provenance tasks.
\end{abstract}



\section{Introduction}\label{sec:intro}
Database provenance, information about the origin of data and the queries and/or updates that produced it, is critical for debugging queries, auditing, establishing trust in data, and many other use cases.
%
%
%
The de facto standard for database provenance~\cite{KG12,karvounarakis2013collaborative,GA12} is to model provenance as annotations on data and define an annotated semantics for queries that determines how annotations propagate. 
Under such a semantics, each output tuple $t$ of a query $Q$ is annotated with its provenance, i.e.,
a combination  of input tuple annotations that explains how these inputs were used by $Q$ to derive $t$. 

\begin{figure}[t]
  \centering
\subfloat[
Provenance is captured using an annotated semantics of relational algebra which is compiled into standard relational algebra over a relational encoding of annotated relations and then translated into SQL code. 
]{  \label{fig:general-rewrite-approach}
  \begin{minipage}{0.96\linewidth}
  \centering
\includegraphics[width=1\columnwidth]{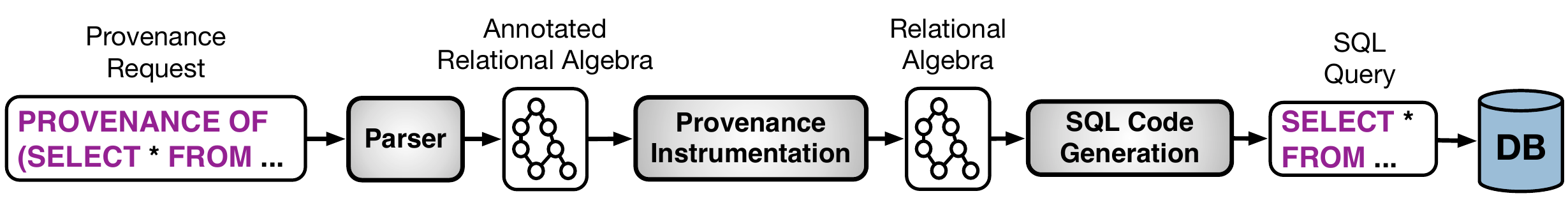}
\end{minipage}
}\\[-0.5mm]
\subfloat[
In addition to the steps of \textbf{(a)}, this pipeline contains a step called \emph{reenactment} that  compiles annotated updates into annotated query semantics. 
]{\label{fig:trans-rewrite-approach}
  \begin{minipage}{0.98\linewidth}
  \centering
  \includegraphics[width=0.9\columnwidth]{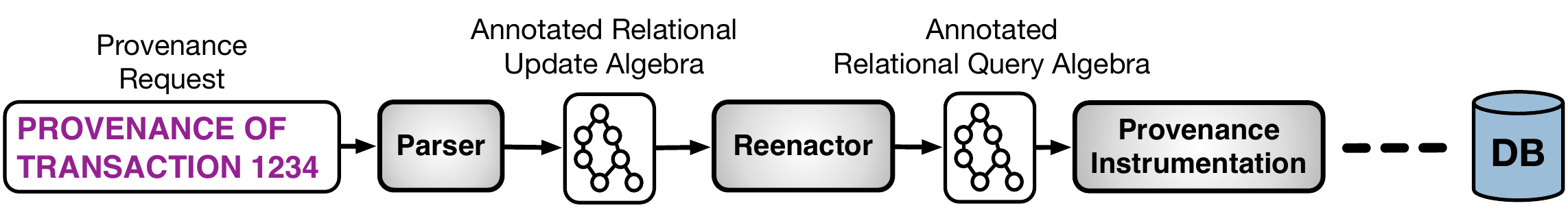}
\end{minipage}
}\\[-0.5mm]
\subfloat[Computing the edge relation of provenance graphs for Datalog queries based on a rewriting  called \emph{firing rules}. The instrumented Datalog program is compiled into relational algebra which in turn is translated into SQL.]{  \label{fig:DL-rewrite-approach}
\begin{minipage}{0.98\linewidth}
  \centering
\includegraphics[width=1\columnwidth]{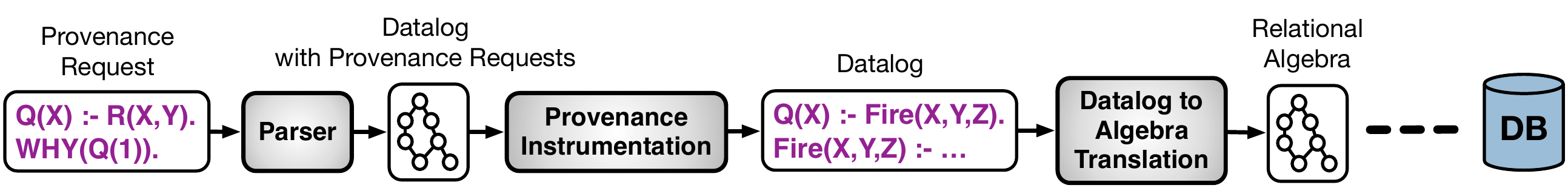}
\end{minipage}  
}
\caption{Instrumentation pipelines for capturing provenance for \textbf{(a)} SQL queries, \textbf{(b)} transactions, and \textbf{(c)} Datalog queries.}
%

\end{figure}



Database provenance systems such as  Perm~\cite{glavic2013using}, GProM~\cite{arab2014generic}, DBNotes~\cite{bhagwat2005annotation}, LogicBlox~\cite{GA12}, 
declarative Datalog debugging~\cite{KL12},   ExSPAN~\cite{ZS10}, and many others use a relational encoding of provenance annotations. These systems typically compile queries with annotated semantics into relational queries
that produce this encoding of provenance annotations following the process outlined in Fig.~\ref{fig:general-rewrite-approach}. We refer to this reduction from annotated to standard 
relational semantics as \textit{provenance instrumentation} or instrumentation for short.
The technique of compiling non-relational languages 
into relational languages (e.g., SQL) has also been 
applied for translating  XQuery into SQL over a shredded  representation of XML~\cite{grust2010let} and 
for compiling languages over nested collections into SQL~\cite{CL14}. 
%
%
The example below introduces a relational encoding of provenance polynomial~\cite{KG12} and the instrumentation approach for this model implemented in Perm~\cite{glavic2013using}.


\begin{Example}\label{ex:simple-prov-ex}
Consider a query  
over the database in Fig.~\ref{fig:Example-database} returning shops that sell items which cost more than \$20:
\\[-4mm]
$$
\projection_ {name} (shop \join_{name=shop} sale \join_{item=id} \selection_{price > 20}(item))
$$\\[-6mm]
The result of this query is shown in Fig.~\ref{fig:Example-database-result}. 
Using provenance polynomials
to represent provenance, tuples in the database are annotated with  variables
representing tuple identifier. We show these annotations to the left of each tuple. Each query result is annotated with
a polynomial over these variables that explains how the tuple was derived by combining input
tuples. The addition operation in these polynomials corresponds to alternative use of
tuples such as in a union or projection and multiplication represents
conjunctive use (e.g., a join). For example, the query result $(Walmart)$ was
derived by joining tuples $s_1$, $a_1$, and $i_1$ ($s_1 \cdot a_1 \cdot i_1$) or
alternatively by joining tuples $s_1$, $a_3$, and $i_3$
($s_1 \cdot a_3 \cdot i_3$).
Fig.~\ref{fig:provenance-result-example-database}
shows a relational encoding of these annotations as supported by the
Perm~\cite{glavic2013using} and GProM~\cite{arab2014generic} systems: 
variables are represented by the tuple
they are annotating, multiplication is represented by concatenating the encoding
of the factors, and addition is represented by encoding
each summand as a separate tuple.\footnote{The details 
  are
  beyond the scope of this paper, e.g., the input
  polynomial is normalized into a sum of products. The interested reader is
  referred to~\cite{glavic2013using}.}
This encoding is computed by compiling the input query with annotated semantics into  
 relational algebra. The resulting \emph{instrumented} query is shown below. This query adds
 attributes from the input relations to the final projection and renames them (represented as $\to$)
to denote that they store provenance.\\[-2mm]  
\resizebox{1\linewidth}{!}{
  \begin{minipage}{1.0\linewidth}
\begin{align*}
Q_{join} &= shop \join_{name=shop} sale \join_{item=id} \selection_{price > 20}(item)\\
Q&= \projection_{name,name \to P(name),numEmp \to P(numEmp), \ldots} (Q_{join})
\end{align*}
\end{minipage}
}\\[1mm]
The instrumentation we are using here is defined for any SPJ (Select-Project-Join) query (and beyond) based on a set of algebraic rewrite rules (see~\cite{glavic2013using} for details).
\end{Example}

\begin{figure}[t]
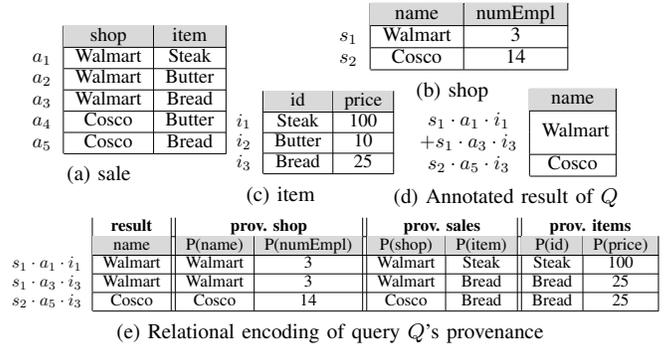

\centering
\begin{minipage}{0.30\linewidth}
  \centering


  \subfloat[sale]{%
    \centering
          \resizebox{0.8\linewidth}{!}{
            \begin{minipage}{1.0\linewidth}
              \centering
    \begin{tabular}{c|c|c|} \cline{2-3} 
     & \chead shop& \chead item\\ \cline{2-3}
     $a_1$ & Walmart & Steak\\ \cline{2-3}
     $a_2$ & Walmart & Butter\\ \cline{2-3}
     $a_3$ & Walmart & Bread\\ \cline{2-3}
     $a_4$ & Cosco & Butter \\ \cline{2-3} 
     $a_5$ & Cosco & Bread \\ \cline{2-3}
    \end{tabular}
  \end{minipage}
  }
  }
  
\end{minipage}
\begin{minipage}{0.65\linewidth}
  \centering
  \hspace{0.1\linewidth}
  \begin{minipage}{0.8\linewidth}
    \centering
    \subfloat[shop]{%
      \centering
      \resizebox{0.8\linewidth}{!}{
        \begin{minipage}{1.0\linewidth}
          \centering
    \begin{tabular}{c|c|c|} \cline{2-3}
 & \chead name& \chead numEmpl\\\cline{2-3}
   $s_1$ & Walmart & 3\\ \cline{2-3}
   $s_2$ & Cosco & 14 \\ \cline{2-3}
    \end{tabular}
  \end{minipage}
  }
  }
\end{minipage}\\[-2mm]

\begin{minipage}{0.33\linewidth}
  \centering
\subfloat[item]
{%
  \centering
            \resizebox{0.8\linewidth}{!}{
  \begin{minipage}{1.0\linewidth}
        \centering
    \begin{tabular}{c|c|c|} \cline{2-3} 
 & \chead id& \chead price\\ \cline{2-3}
     $i_1$ & Steak & 100\\ \cline{2-3} 
     $i_2$ & Butter & 10\\ \cline{2-3}
     $i_3$ & Bread & 25\\ \cline{2-3}
    \end{tabular}
  \end{minipage}
  }
  }
\end{minipage}\hspace{0.1cm}
\begin{minipage}{0.63\linewidth}
  \centering
  \subfloat[Annotated result of $Q$]{\label{fig:Example-database-result}
    \centering
    \resizebox{0.8\linewidth}{!}{
      \begin{minipage}{1.0\linewidth}
        \centering
    \begin{tabular}{c|c|} \cline{2-2}
 &\chead name\\ \cline{2-2} 
      $s_1 \cdot a_1 \cdot i_1 $  & \multirow{2}{1.1cm}{Walmart} \\
      $+ s_1 \cdot a_3 \cdot i_3$ &\\
      \cline{2-2}
  $s_2 \cdot a_5 \cdot i_3$& Cosco \\  \cline{2-2}
    \end{tabular}
  \end{minipage}
  }
  }
\end{minipage}
\end{minipage}\\[1mm]

\subfloat[Relational encoding of query $Q$'s provenance ]{\label{fig:provenance-result-example-database}
\resizebox{1\linewidth}{!}{
  \begin{tabular}{c|c||c|c||c|c||c|c|} 
&\multicolumn{1}{c||}{\bf result} & \multicolumn{2}{c||}{\bf prov. shop} & \multicolumn{2}{c||}{\bf prov. sales} & \multicolumn{2}{c}{\bf prov. items}\\ \cline{2-8} 
 & \chead  name& \chead P(name)& \chead P(numEmpl)& \chead P(shop)& \chead P(item)& \chead P(id)& \chead P(price)\\ \cline{2-8} 
 $s_1 \cdot a_1 \cdot i_1$  &  Walmart  & Walmart & 3 & Walmart & Steak & Steak & 100\\ \cline{2-8} 
$s_1 \cdot a_3 \cdot i_3 $&  Walmart  & Walmart & 3 & Walmart & Bread & Bread & 25\\ \cline{2-8} 
 $s_2 \cdot a_5 \cdot i_3$&  Cosco  & Cosco & 14 & Cosco & Bread & Bread & 25\\ \cline{2-8} 
  \end{tabular}
}
}\\[-2mm]

\caption{Example database and query provenance}
\label{fig:Example-database}

\end{figure}
\subsection{Instrumentation Pipelines}
\label{sec:instr-pipel}

\parttitle{Provenance for SQL Queries}
The instrumentation technique shown in Fig.~\ref{fig:general-rewrite-approach} and explained in the example above is applied by many relational provenance systems. For instance, the DBNotes~\cite{bhagwat2005annotation} system uses instrumentation to propagate attribute-level annotations according to Where-provenance~\cite{CC09}.
Variants of this particular instrumentation pipeline targeted in this work are discussed below.  While the query used for provenance computation in Ex.~\ref{ex:simple-prov-ex} is rather straightforward and is likely to be optimized in a similar fashion as the input query, this is not true for more complex provenance computations. 
%

\parttitle{Provenance for Transactions}
Fig.~\ref{fig:trans-rewrite-approach} shows a pipeline used to retroactively capture the provenance of updates and transactions~\cite{AG16a,AG17} in GProM. In addition to the steps from Fig.~\ref{fig:general-rewrite-approach}, this pipeline uses an additional compilation step called \textit{reenactment}. Reenactment translates transactional histories with annotated semantics into equivalent temporal queries with annotated semantics. Such queries can be executed using any DBMS with support for time travel to capture the provenance of a past transaction. While the details of this approach are beyond the scope of this work, consider the following simplified SQL example. The SQL update \lstinline!UPDATE R SET b = b + 2 WHERE a = 1! over relation \texttt{R(a,b)} can be reenacted using a query \lstinline!SELECT a, CASE WHEN a=1 THEN b+2 ELSE b END AS b! \lstinline!FROM R!. If executed over the version of the database seen by the update, this query is guaranteed to return the same result and have the provenance as the update.

\parttitle{Provenance for Datalog}
The pipeline shown in Fig.~\ref{fig:DL-rewrite-approach} generates provenance graphs that explain why or why-not a tuple is in the result of a Datalog query~\cite{LS16}. Such graphs store which successful and failed rule derivations of the query were relevant for (not) deriving the (missing) result tuple of interest. 
This pipeline compiles such provenance requests into a Datalog program that computes the edge relation of the provenance graph and then translates this program into  SQL. 

\parttitle{Provenance Export}
This pipeline extends Fig.~\ref{fig:general-rewrite-approach} with an additional step that translates the relational provenance encoding of a query produced by this pipeline into PROV-JSON,  which is the JSON serialization of the WC3 recommended provenance exchange format. This method~\cite{NX15} uses additional complex projections on top of the query instrumented for provenance capture to construct JSON document fragments and concatenate them into a single PROV-JSON document stored as a relation with a single attribute and single tuple.  
%




\subsection{Performance Bottlenecks of Instrumentation}
\label{sec:motivation}
While instrumentation enables diverse 
provenance features to be implemented on top of da\-ta\-bases without the need to modify the DBMS itself, the performance of generated queries is often far from optimal. Based on our extensive experience with instrumentation systems~\cite{LS16,NX15,arab2014generic,AG16a,glavic2013using} and a preliminary evaluation we have identified bad plan choices by the DBMS backend as a major   
bottleneck.
Since relational optimizers have to balance time spend on optimization versus improvement of query performance, optimizations that do not benefit common workloads are typically not considered.
Thus, most optimizers are incapable of simplifying instrumented queries, will not explore relevant parts of the plan space, or will spend excessive time on optimization. 
We now give a brief overview of problems that we have encountered in this space.
\parttitle{P1. Blow-up in Expression Size} The instrumentation for transaction provenance~\cite{AG16a,AG17} shown in Fig.~\ref{fig:trans-rewrite-approach} may produce queries with a large number of query blocks. This can lead to long optimization times in systems that unconditionally pull-up subqueries (such as Postgres) because the subquery pull-up would result in \lstinline!SELECT! clause expressions of size exponential in the number of stacked query blocks. While more advanced optimizers do not apply the subquery pull-up transformation unconditionally, they will at least consider it leading to the same blow-up in expression size during optimization. 

\parttitle{P2. Common Subexpressions} 
The Datalog provenance pipeline (Fig.~\ref{fig:DL-rewrite-approach}) instruments the input program using so-called firing rules to capture rule derivations. Compiling such queries into relational algebra leads to  algebra graphs with many common subexpressions and a large number of duplicate elimination operators. The provenance export instrumentation mentioned above constructs the PROV output using multiple projections over an instrumented subquery that captures provenance. The large number of common subexpressions in both cases may result in very long optimization time. Furthermore, if subexpressions are not reused then this significantly increases the query size. For the Datalog queries, the choice of when to remove duplicates significantly impacts performance. 

\parttitle{P3. Blocking Join Reordering} Provenance instrumentation as implemented in 
GProM~\cite{arab2014generic} is based on rewrite rules. 
For instance, provenance annotations are propagated through an aggregation by joining the aggregation with the provenance instrumented version of the aggregation's input on the group by attributes. Such transformations increase a query's size  and lead to 
interleaving of joins with operators such as aggregation or duplicate elimination. This interleaving may block optimizers from reordering joins  leading to suboptimal join orders. 

\parttitle{P4. Redundant Computations} Most provenance approaches instrument a query to capture provenance one operator at a time using operator-specific rewrite rules (e.g., the rewrite rules used by Perm~\cite{glavic2013using}). To be able to apply such operator-specific rules to rewrite a complex query, the rules have to be generic enough to be applicable no matter how operators are combined by the query. In some cases that may lead to redundant computations, e.g., an instrumented operator generates a new column that is not needed by any downstream operators. 

\begin{figure}[t]
\centering
\includegraphics[width=1\columnwidth]{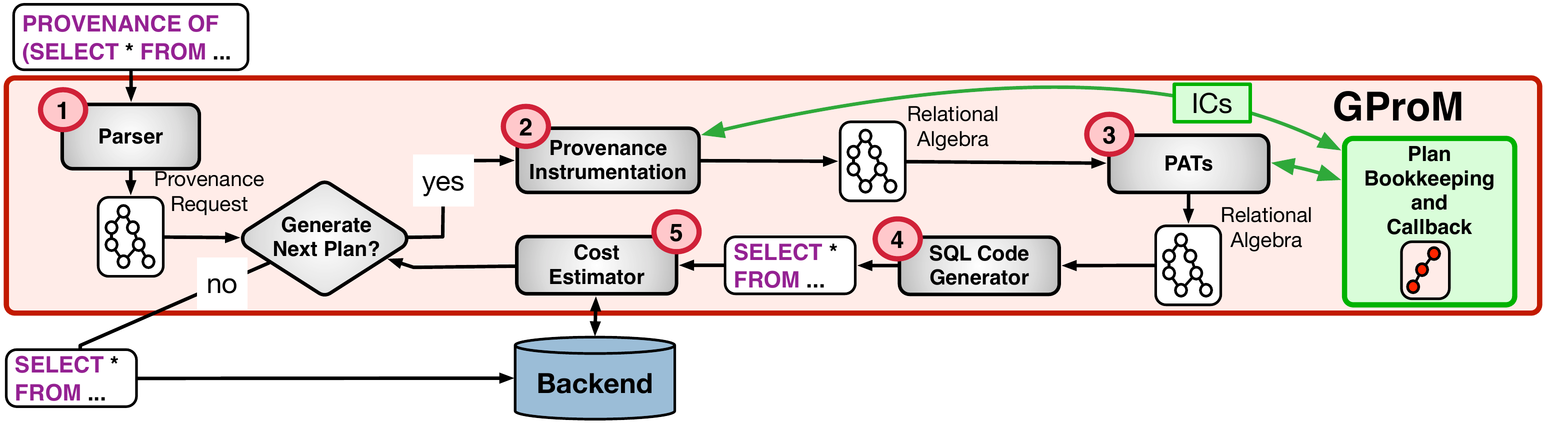}
$,$\\[-6mm]
\caption{GProM with Cost-based Optimizer}
\label{fig:cbo-arch}
\end{figure}


\section{Solution Overview}
\label{sec:solut-overv-contr}

We address the performance bottlenecks of instrumentation by developing heuristic and cost-based optimization techniques. 
%
While optimization has been recognized  as an important problem in provenance management, previous work has almost exclusively focused on how to compress provenance to reduce  storage cost, e.g., see~\cite{AB09,CJ08a,wu2013subzero}.
In contrast, in this work we assume that the provenance encoding is given, i.e., the user requests a particular type of provenance, 
and study the orthogonal problem of \textbf{improving the performance of instrumented queries} that compute provenance.


We now give a brief overview of our solution and 
 contributions.
An important advantage of our approach is that it applies to any database backend and instrumentation pipeline. 
New transformation rules and cost-based choices can be added with ease. 
We implement these optimizations in GProM~\cite{arab2014generic} (see Fig.~\ref{fig:cbo-arch}), our  provenance middleware that supports multiple DBMS backends
(available as open source at \url{https://github.com/IITDBGroup/gprom}).
%
%
Our optimizations which are applied during the compilation of a provenance request into SQL on average improve performance by over 4 orders of magnitude compared to unoptimized instrumented queries.
When optimizing instrumented queries, 
we can target any of the query languages used within the pipeline, 
e.g., 
if 
relational algebra is the output language for a compilation step then we can apply equivalence preserving transformations to the generated algebra expression before passing it on to the next stage of the pipeline. In fact, we develop several \textit{provenance-specific algebraic transformations} (or \textit{PAT}s for short). In addition, we can optimize during a compilation step, 
i.e., if we know two equivalent ways of translating an annotated algebra operator into standard relational algebra, we should choose the one which results in a better plan. 
We call such decisions \textit{instrumentation choices} (\textit{ICs}).
We developed
an effective set of PATs and ICs as 
our first major contribution.





\parttitle{PATs}
We identify algebraic equivalences which are usually not applied by databases, but are effective for speeding up provenance computations. For instance, 
we factor references to attributes to enable merging of projections without blow-off in expression size, pull up projections that 
create provenance annotations, and 
remove unnecessary duplicate elimination and window operators.
Following the approach presented in~\cite{grust2010let} we infer local and non-local properties such as candidate keys  for the algebra operators of a query. This enables us to define transformations that rely on non-local information.

\parttitle{ICs}
We introduce two ways for instrumenting an aggregation operator for provenance capture: 1) using a \textit{join} (this rule is used by Perm~\cite{glavic2013using}) to combine the aggregation with the provenance of the aggregation's input; 2) using \textit{window} functions (SQL \lstinline!OVER! clause) to directly compute the aggregation functions over inputs annotated with provenance. 
We also present two ways for pruning tuples that are not in the provenance early-on when computing the provenance of a transaction~\cite{AG16a}. 

\parttitle{CBO for Instrumentation}
Some  PATs are not always beneficial and for some ICs there is no clearly superior choice. Thus, there is a need for \textit{cost-based optimization} (CBO).
Our second contribution is 
a 
CBO framework for instrumentation pipelines. Our CBO algorithm can be applied to any such pipeline no matter what compilation steps and  intermediate languages are used. This is made possible by decoupling the plan space exploration from actual plan generation.

Our optimizer treats the instrumentation pipeline as a blackbox which it calls repeatedly  to  produce SQL queries (\textbf{plans}). Each such plan is sent to the backend database for 
planning and 
cost estimation. 
We refer to an execution of the pipeline as an \textbf{iteration}.
It is the responsibility of the pipeline's components to signal to the optimizer the existence of optimization choices (called \textbf{choice points}) through the optimizer's \textbf{callback API}. The optimizer responds to a call from one of these components by instructing it which of the available \textbf{options} to choose. 
We keep track of which choices had to be made, which options exist for each choice point, and which options were chosen. This information is sufficient to iteratively enumerate the plan space by making different choices during each iteration.
Our approach provides great flexibility in terms of supported optimization
decisions, e.g., we can choose whether to apply a PAT or select which ICs to use.
Adding a new optimization choice only requires adding a few LOC to the instrumentation pipeline to inform the optimizer about the availability of options. 
While our approach (Fig.~\ref{fig:cbo-arch}) has some aspects in common with cost-based query transformation~\cite{AL06}, it is to the best of our knowledge the first one that is \textbf{plan space and query language agonistic}.
Since costing a plan requires us to use the DBMS to optimize a query, the number of iterations that can be run 
within reasonable time is limited. In addition to randomized search techniques, we also support an approach that balances optimization vs. execution time, i.e., it stops optimization once a ``good enough'' plan has been found. 


Our approach peacefully coexists with the DBMS optimizer. We use the DBMS optimizer 
where it is effective (e.g., join reordering) and use our optimizer to address the database's shortcomings with respect to provenance computations. To maintain the advantage of database independence, we implement PATs and ICs in a middleware, but these optimizations could also be implemented as an extension of a regular database optimizer (e.g., as cost-based transformations~\cite{AL06}).

\section{Background and Notation}\label{sec:background}
%


A database schema ${\bf D} = \{{\bf R_1}, \ldots, {\bf R_n}\}$ is a set of relation schemas ${\bf R_1}$ to ${\bf R_n}$. A relation schema ${\bf R}(a_1, \ldots, a_n)$ consists of a name (${\bf R}$) and a list of attribute names $a_1$ to $a_n$. The arity of a relation schema is the number of attributes in the schema.
%
%
%
Here we use the bag-semantics version of the relational model.
%
 Let $\cal U$ be a domain of values. An instance $R$ of an n-ary relation schema ${\bf R}$  is a function ${\cal U}^n \to \mathbb{N}$ with finite support $\mid\{ t \mid R(t) \neq 0\}\mid$.  We use $t^m \in R$ to denote that tuple $t$ occurs with multiplicity $m$, i.e., $R(t) = m$ and $t \in R$ to denote that $R(t) > 0$.
%
An n-ary relation $R$ is contained in another n-ary relation $S$ iff $\forall t \in {\cal U}^n: R(t) \leq S(t)$, i.e., each tuple in $R$ appears in $S$ with the same or higher multiplicity. We abuse notation and write $R \subseteq S$ to denote that $R$ is contained in $S$. 



\begin{table}
  \centering
 \begin{tabular}{|p{1.3cm}|p{6cm}|} \hline 
\rowcolor[gray]{.9}  Operator & Definition\\ \hline 
  $\selection$ & $\selection _\theta (R) = \{ t^n|t^n \in R \wedge t \models \theta  \}$ \\ \hline
  $\projection$ & $\projection_A (R) = \{t^n|n = \sum_{u.A = t} R(u) \} $ \\ \hline
  $\union$ & $R \union S = \{ t^{n+m}|t^n \in R \wedge t^m \in S \} $\\ \hline
  $\intersection$ & $R \intersection S = \{ t^{min(n,m)}|t^n \in R \wedge t^m \in S \} $\\ \hline
  $\difference$ & $R-S = \{ t^{max(n-m,0)}|t^n \in R \wedge t^m \in S \}$ \\ \hline
  $\crossprod$ & $R \crossprod S = \{ (t,s)^{n*m} |t^n \in R \wedge s^m \in S \} $ \\ \hline
  $\aggregation$ & $_{G}\aggregation_{f(a)} (R) = \{ (t.G, f(G_t))^1|t \in R \} $ \\ 
                 & $G_t = \{ (t_1.a)^n |{t_1}^n \in R \wedge t_1.G = t.G \} $ \\ \hline
  $\duplicate$ & $\duplicate (R) = \{ t^{1} |t  \in R \} $ \\ \hline
$\omega$ & $\omega_{f(a) \to x, G\|O}(R) \equiv \{ (t,f(P_t))^n | t^n \in R \}  $\\ 
                 & $P_t = \{ (t_1.a)^n |t_1^n \in R \wedge t_1.G = t.G \wedge t_1 \leq_O t \} $ \\ \hline
 \end{tabular}\\[-2mm]
 \caption{Relational algebra operators}
\label{tab:rel-algebra-def}
\end{table}




Table~\ref{tab:rel-algebra-def} shows the definition of the bag-semantics version of relational algebra we use in this work.
In addition to set operators, selection, projection, crossproduct, duplicate elimination, and join, we also support aggregation and windowed aggregation.
Aggregation $_{G}\aggregation_{f(a)} (R)$ groups tuples according to their values in attributes $G$ and computes the aggregation function $f$ over the values of attribute $a$ for each group. 
Window operator $\omega_{f(a) \to x, G\|O}(R)$ applies function $f$ to the window generated by partitioning the input on expressions $G$ and ordering tuples by $O$. For each input tuple $t$, the window operator returns $t$ with an additional attribute $x$ storing the result of the window function. We use $Q(I)$ to denote the result of query $Q$ over database instance $I$.
We use 
$op_1 \ancestor op_2$ to denote that operator $op_2$ is an ancestor of (downstream of) $op_1$ and $op_1 \allAncestor op_2$ to denote that $op_2$ lies on all paths from $op_1$ the root of the query these operators belong to.
Furthermore, we use $Q[Q_1 \leftarrow Q_2]$ to denote the substitution of subexpression $Q_1$ in $Q$ with $Q_2$ and
 $\schema{Q}$ to denote the schema of the result of an algebra expression $Q$. 

\section{PAT}\label{sec:heuristic}

We now introduce equivalence-preserving PAT rules that address some of the problems mentioned in Sec.~\ref{sec:motivation}. 
These results are applied after the compilation of a provenance request into relational algebra. 
We present each rule as $\frac{pre}{q \rightarrow q'}$ which has to be read as ``If condition $pre$ holds, then $q$ can be rewritten as $q'$''. Similar to  Grust et al.~\cite{grust2010let}, we infer properties for the operators of an algebra expression and use these properties in preconditions of rules. 
These properties provide us with essential local  (e.g., candidate keys) and non-local information (e.g., which attributes of the operator's result are necessary for evaluating downstream operators).


\subsection{Operator Properties}
\parttitle{set}
Boolean property \emph{set} denotes 
if an ancestor of an operator $op$ is a duplicate elimination operator and the number of duplicates is irrelevant for computing operators on the path from $op$ to the next duplicate elimination operator. We use \emph{set} to remove or add duplicate elimination operators. 
\begin{Definition} \label{def:def_set}
  Let $op$ be an operator in a query $Q$. $set(op) = true$ iff $\exists op': op \allAncestor op'$ with $op' = \delta$ and $\forall op'': op \ancestor op'' \ancestor op'$ we have $op'' \not\in \{ \aggregation, \omega \}$.
\end{Definition}


\parttitle{keys} 
Keys are the candidate keys of an operator's output. For example, consider a relation $R(a,b,c,d)$ where $\{a\}$ and $\{b,c\}$ are unique, 
then  
$keys(R) = \{\{a\},\{b,c\}\}$. We compute candidate keys using a best effort approach since if we allow for user defined functions in projection and selection expressions it the problem of computing candidate keys for the relation produced by an operator is in undecidable. For instance, consider a projection $Q = \projection_{f(a) \to c, b}(R)$ over a relation $R(a_1, \ldots, a_n,b)$ where $f$ is expressed as a polynomial over $\{a_1, \ldots, a_n\}$. Furthermore, assume that $R$ has a single candidate key $\{a\}$. If $f$ is injective, then $Q$ has a candidate keys $\{c\}$, else $R$ either has no keys or a candidate key $\{c,b\}$. However, deciding whether the function $f$ is injective can be reduced to Hilbert's tenth problem which is known to be undecidable. Thus, it is not possible in general to compute the set of candidate keys for an operator's output. Using the same example, it is also easy to see that checking whether a set of attributes is a candidate key for the result of a query $Q$ is also undecidable.  Given this negative result, we restrict our approach to computing a set of keys that is not necessarily complete nor is each key in this set guaranteed to be minimal. Note that this is not a big drawback for our approach since our algebraic transformations will only reason over the fact that some key exists. That is, we may miss a chance of applying a transformation since our approach may not be able to determine a key that holds, but we will never apply a transformation that relies on the existence of a key if no such key exists.

\begin{Definition} \label{def:def_keys}
Let $Q = op(Q')$ be a query. A set 
$E \subseteq \schema{Q}$ is a \emph{super key} for $op$ iff for every instance $I$ we have  $\forall t,t' \in Q(I): t.E = t'.E \rightarrow t=t'$ and $t^n \in Q(I) \rightarrow n \leq 1$. 
\end{Definition}

\parttitle{EC}
The equivalence class (EC) property records equivalence classes. Equivalence classes are sets of attributes such that for an operator's output we are only interested in tuples where attributes in the same equivalence class  have the same value, i.e., this means we are allowed to enforce a restricted form of \textit{equality generating dependencies} (egds).\footnote{An egd is universal formula of the form $\forall \vec{x}: \phi(\vec{x}) \rightarrow \bigwedge_{i \in \{1,\ldots,m\}} x_{l_i} = x_{r_i}$ where $\vec{x} = x_1, \ldots, x_n$ is a vector of variables, $\phi$ is a conjunctive first-order formula over relational predicates and comparisons,  $m \leq n$, and for each $i$ we have $x_{l_i}, x_{r_i} \in \{1,\ldots, n\}$. Equivalence classes model edgs of the form $\forall \vec{x}: R(\vec{x}) \rightarrow x_i = x_j$ for $i,j \in \{1,\ldots, n\}$ or $\forall \vec{x}: R(\vec{x}) \rightarrow x_i = c$ where $c$ is a constant.} We model this condition using query equivalence: if $a$ and $b$ are in the same equivalence class for a query $Q$ then $Q \equiv \selection_{a=b}(Q)$. For example, if $EC(R)=\{\{a\},\{b,c\},\{d\}\}$ then we know that $t.b = t.c$ for each tuple $t \in R$. Note that in this example, the egd $\forall x_1,x_2,x_3,x_4: R(x_1, x_2,x_3,x_3) \rightarrow x_2 = x_3$ is guaranteed to hold and thus can be safely enforced (e.g., by adding a selection on $b=c$). There are other cases where we can enforce an egd even though this egd does not hold for the output of a particular operator. For instance, consider the query $\selection_{a=b}(R)$. Condition $a=b$ does not necessarily hold for all tuples in $R$, but it is obvious that we can enforce this condition.


\begin{Definition} \label{def:def_ec}
Let $Q_{sub} = op(Q_{sub}')$ be a subquery of query $Q$. A set of attributes $E \subseteq \schema{Q}$ is an \emph{equivalence class} (EC) for $op$ iff for all instances $I$ we have $\forall t \in Q(I):  \forall a,b \in E: Q \equiv Q[Q_{sub} \leftarrow \selection_{a=b}(Q_{sub})]$. Equivalence classes may also contain a constant $c$. In this case we additionally require that $\forall t \in Q(I):  \forall a \in E: Q \equiv Q[Q_{sub} \leftarrow \selection_{a=c}(Q_{sub})]$.
\end{Definition}

As is the case for candidate keys, we cannot hope to find an algorithm that computes all equivalences that can be enforced for any query using generalized projection with turing complete functions. Consider the query $Q = \projection_{f(a) \to b, g(a) \to c}(R)$ over relation $R(a)$ where $f$ and $g$ are functions expressed in a turing-complete language. If $f$ and $g$ compute the same function, then $b$ and $c$ will have the same value in every tuple in the result of $Q$. However, checking whether this is the case can be reduced to program equivalence which is an undecidable problem. We use an inference algorithm for equivalence classes that is sound, but not complete. That is, if the algorithm places attributes $a$ and $b$ in the same equivalence class then $a=b$ can safely be enforced, but there is no guarantee that the returned equivalence classes are maximal.


\parttitle{icols}
This property records which attributes are needed to evaluate ancestors of an operator. For example, attribute $d$ in $\projection_{a} (\projection_{a, b+c \to d}(R))$ is not needed to evaluate $\projection_a$.

\begin{Definition} \label{def:def_icols}
Let $Q$ be a query and $Q_{sub} = op(Q_{sub}')$ be a subquery of $Q$, $icols(op)$ is the minimal set of attributes $E \subseteq \schema{Q_{sub}}$ such that $Q \equiv Q[Q_{sub} \leftarrow \projection_{icols(op)}(Q_{sub})]$.
\end{Definition}



\subsection{Property Inference}
We infer properties for operators 
through 
traversals of the algebra graph.
During a bottom-up traversal the property $P$ for an operator $op$ is computed based on the values of $P$ for the operator's children. Conversely, during a top-down traversal the property $P$ of an operator $op$ is computed based on the values of $P$ for the  parents of $op$. 
We use $\circledast$ to denote the root of a query tree, for example, if $Q = \projection_{A,B}(R)$, then we enclose the top most operator using $\circledast$ (like $\circledast(\projection_{A,B}(R))$) to denote that this operator is the root of the query tree. 
In the following, we show the inference rules of property $EC$ in Table~\ref{tab:bottom-up} and~\ref{tab:top-down}, the inference rules of property $icols$ in Table~\ref{tab:top-down-icols}, the inference rules of property $set$ in Table~\ref{tab:top-down-set} and the inference rules of property $key$ in Table~\ref{tab:bottom-up-key}.

\begin{table*}

  \begin{minipage}{1.0\linewidth}
\centering
\caption{Bottom-up inference of property \textit{EC} (\textit{Equivalence Class}) of operator $\Diamond$}
\label{tab:bottom-up}
  \begin{tabular}{|c|c|c|} \hline 
\rowcolor[gray]{.9}  Rule & Operator $\Diamond$ & Inferred property \textit{EC} of $\Diamond$\\ \hline 
  1 & R & $\{\{a\}\mid a \in \schema{R} \}$ \\ \hline
  2 & $\selection_ {(\theta_{1} \wedge ... \wedge \theta_{n})}(R)$ & ${\cal E}^* (EC(R) \cup 
\{\{a,b\}\mid \exists i: \theta_{i}=(a=b) \} 
)$

    \\ \hline                                                          
 3 & $\projection_ {a_{1} \rightarrow b_{1},...,a_{n} \rightarrow b_{n}}(R)$ & ${\cal E}^* ( \{\{ b_i, b_j \} \mid \exists E \in EC(R) \wedge a_i \in E \wedge a_j \in E \} \union \{\{b_i\} | i \in \{1,\ldots,n\}\})$ \\
  
  \hline
 4 & $R \join_ {a=b} S$ & ${\cal E}^*(EC(R) \cup EC(S) \cup \{\{a,b\}\}) $\\ \hline
 5 & $R \crossprod S$ & $EC(R) \cup EC(S) $\\ \hline
 6 & $ _{b_{1},...,b_{n}} \aggregation _{F(a)}(R) $ & $\{  \{ b_{1},...,b_{n} \} \cap E \mid E \in EC(R) \}  \cup \{\{F(a)\}\} $ \\ \hline
 7 & $\duplicate(R)$ & $EC(R) $ \\ \hline
 8 & $R \union S$ & ${\cal E}^* (\{ E \cap E' \mid E \in EC(R) \wedge E' \in EC(S)[\schema{S}/\schema{R}] \}$
    \\ \hline
 9 & $R \intersection S$ & ${\cal E}^*( EC(R) \union EC(S)[\schema{S}/\schema{R}])$ \\ \hline
10 &  $R \difference S$ & $EC(R)$ \\ \hline
11 & $\omega_{f(a) \to x, G\|O}(R)$ & $EC(R) \cup \{\{x\}\}$ \\ \hline
  \end{tabular}
\end{minipage}
\begin{minipage}{1.0\linewidth}
  \centering
\caption{Top-down inference of property \textit{EC} (\textit{Equivalence Class}) for the input(s) of operator $\Diamond$}
\label{tab:top-down}
  \begin{tabular}{|c|c|c|} \hline 
\rowcolor[gray]{.9}  Rule & Operator $\Diamond$ & Inferred property \textit{EC} of input(s) of $\Diamond$\\ \hline 
  1 & $\selection_ {(\theta_{1} \wedge ... \wedge \theta_{n})}(R)$ & $ EC(R) = {\cal E}^* (EC(\selection_ {(\theta_{i} \wedge ... \wedge \theta_{n})}(R)) \union EC(R))$ 
  
  \\ \hline
  2 & $\projection_ {a_{1} \rightarrow b_{1},...,a_{n} \rightarrow b_{n}}(R)$ & $EC(R) = {\cal E}^* ( \{\{ a_i, a_j \} \mid \exists E \in EC(\projection_ {a_{1} \rightarrow b_{1},...,a_{n} \rightarrow b_{n}}(R)) \wedge b_i \in E \wedge b_j \in E \} \union EC(R))$ \\ \hline
3 & $R \join_ {a=b} S$ & $EC(R) = {\cal E}^* ( \{E-\schema{S}|E \in EC(R \join_ {a=b} S)\} \union EC(R)) $   \\ 
                   &  & $EC(S) = {\cal E}^* ( \{E-\schema{R}|E \in EC(R \join_ {a=b} S)\} \union EC(S))$ \\ \hline
                     
 4 & $R \crossprod S$ & $EC(R) = {\cal E}^*( \{E-\schema{S}|E \in EC(R \join_ {a=b} S)\} \union EC(R)) $   \\ 
                &   & $EC(S) = {\cal E}^*(\{E-\schema{R}|E \in EC(R \join_ {a=b} S)\} \union EC(S)) $ \\ \hline
5 &  $_{b_{1},...,b_{n}} \aggregation _{F(a)}(R)$ & $EC(R) = {\cal E}^*(\{ E \cap \schema{R} |E \in EC(_{b_{1},...,b_{n}} \aggregation _{F(a)}(R))\} \cup EC(R))  $ \\ 
   \hline
 6 & $\duplicate(R)$ & $EC(R) = {\cal E}^*(EC(\duplicate(R)) \union EC(R))$ \\ \hline
  7 & $R \union S$ & $ EC(R) = {\cal E}^*(EC(R \union S) \cup EC(R))$ \\ 
                &                                   & $ EC(S) = {\cal E}^*(EC(R \union S)[\schema{R}/\schema{S}] \cup EC(S))$ \\ \hline
8 &  $R \intersection S$ & $ EC(R) = {\cal E}^* (EC(R \intersection S) \union EC(R))$  \\ 
                                               &           & $ EC(S) = {\cal E}^* (EC(R \intersection S)[\schema{R}/\schema{S}] \union EC(S))$\\ \hline
 9 & $R \difference S$ & $ EC(R) = {\cal E}^* (EC(R \difference S) \union EC(R))$ \\ 
                                   &                     & $ EC(S) = \{\{a\}\mid a \in \schema{S} \}$ \\ \hline
10 & $\omega_{f(a) \to x, G\|O}(R)$ & $ EC(R) = {\cal E}^*(\{ E \cap \schema{R} |E \in EC(\omega_{f(a) \to x, G\|O}(R))\} \cup EC(R))$  \\ \hline
  \end{tabular}
\end{minipage}
\begin{minipage}{1.0\linewidth}
  \centering
\caption{Top-down inference of property \textit{EC} (\textit{Equivalence Class}) for the input(s) of operator $\Diamond$}
\label{tab:top-down}
  \begin{tabular}{|c|c|c|} \hline 
\rowcolor[gray]{.9}  Rule & Operator $\Diamond$ & Inferred property \textit{EC} of input(s) of $\Diamond$\\ \hline 
  1 & $\selection_ {(\theta_{1} \wedge ... \wedge \theta_{n})}(R)$ & $ EC(R) = {\cal E}^* (EC(\selection_ {(\theta_{i} \wedge ... \wedge \theta_{n})}(R)) \union EC(R))$ 
  
  \\ \hline
  2 & $\projection_ {a_{1} \rightarrow b_{1},...,a_{n} \rightarrow b_{n}}(R)$ & $EC(R) = {\cal E}^* ( \{\{ a_i, a_j \} \mid \exists E \in EC(\projection_ {a_{1} \rightarrow b_{1},...,a_{n} \rightarrow b_{n}}(R)) \wedge b_i \in E \wedge b_j \in E \} \union EC(R))$ \\ \hline
3 & $R \join_ {a=b} S$ & $EC(R) = {\cal E}^* ( \{E-\schema{S}|E \in EC(R \join_ {a=b} S)\} \union EC(R)) $   \\ 
                   &  & $EC(S) = {\cal E}^* ( \{E-\schema{R}|E \in EC(R \join_ {a=b} S)\} \union EC(S))$ \\ \hline
                     
 4 & $R \crossprod S$ & $EC(R) = {\cal E}^* \{E-\schema{S}|E \in EC(R \join_ {a=b} S)\} \union EC(R)) $   \\ 
                &   & $EC(S) = \{E-\schema{R}|E \in EC(R \join_ {a=b} S)\} \union EC(S)) $ \\ \hline
5 &  $_{b_{1},...,b_{n}} \aggregation _{F(a)}(R)$ & $EC(R) = {\cal E}^*(\{ E \cap \schema{R} |E \in EC(_{b_{1},...,b_{n}} \aggregation _{F(a)}(R))\} \cup EC(R))  $ \\ 
   \hline
 6 & $\duplicate(R)$ & $EC(R) = {\cal E}^*(EC(\duplicate(R)) \union EC(R))$ \\ \hline
  7 & $R \union S$ & $ EC(R) = {\cal E}^*(EC(R \union S) \cup EC(R))$ \\ 
                &                                   & $ EC(S) = {\cal E}^*(EC(R \union S)[\schema{R}/\schema{S}] \cup EC(S))$ \\ \hline
8 &  $R \intersection S$ & $ EC(R) = {\cal E}^* (EC(R \intersection S) \union EC(R))$  \\ 
                                               &           & $ EC(S) = {\cal E}^* (EC(R \intersection S)[\schema{R}/\schema{S}] \union EC(S))$\\ \hline
 9 & $R \difference S$ & $ EC(R) = {\cal E}^* (EC(R \difference S) \union EC(R))$ \\ 
                                   &                     & $ EC(S) = \{\{a\}\mid a \in \schema{S} \}$ \\ \hline
10 & $\omega_{f(a) \to x, G\|O}(R)$ & $ EC(R) = {\cal E}^*(\{ E \cap \schema{R} |E \in EC(\omega_{f(a) \to x, G\|O}(R))\} \cup EC(R))$  \\ \hline
  \end{tabular}
\end{minipage}
\begin{minipage}{0.5\linewidth}
  \centering
\caption{Top-down inference of property \textit{icols} for the input(s) of operator $\Diamond$}
  \resizebox{0.83\linewidth}{!}{
  \begin{minipage}{0.9\linewidth}
\centering

\label{tab:top-down-icols}
  \begin{tabular}{|c|c|c|} \hline 
\rowcolor[gray]{.9} Rule & Operator $\Diamond$ & Inferred property \textit{icols} of input(s) of $\Diamond$\\ \hline 
 1 & $ \circledast(R)$ & $ icols(R) =  \schema{R}$ 
  
  \\ \hline
  
  2 & $\selection_ {\theta}(R)$ & $  icols(R)  =icols \union cols(\theta) $   
  
  \\ \hline
3 & $\projection_ {A_{1} \rightarrow B_{1},...,A_{n} \rightarrow B_{n}(R)}$ & $   icols(R)  = \displaystyle\bigcup_{\forall i:i \in \{1,...,n\} \wedge B_i \in icols}^{} cols(A_i)$ \\ \hline
 4 & $R \join_ {a=b} S$ & $  icols(R)  = (icols \union \{a,b\}) \intersection \schema{R}$   \\ 
                  &   & $  icols(S)  = (icols \union \{a,b\}) \intersection \schema{S}$   \\ \hline
                     
  5 & $R \crossprod S$ & $   icols(R)  = icols \intersection \schema{R} $   \\ 
                 &  & $  icols(S)  = icols \intersection \schema{S} $ \\ \hline
 6 & $_{b_{1},...,b_{n}} \aggregation _{F(a)}(R)$ & $ icols(R) = \{b_1, ... , b_n, a \}$
\\   \hline
 7 & $\duplicate(R)$ & $  icols(R)  = \schema{R} $ \\ \hline
8 &  $R \union S$ & $  icols(R)  = icols $ \\ 
                                                 &  & $  icols(S)  = icols [\schema{R}/\schema{S}] $ \\ \hline
 9 & $R \intersection S$ & $  icols(R)  = \schema{R}  $  \\ 
                                                     &     & $  icols(S)  = \schema{S}  $\\ \hline
 10 & $R \difference S$ & $  icols(R)  = \schema{R}  $ \\ 
                                                     &   & $  icols(S)  = \schema{S} $ \\ \hline
11 & $\omega_{f(a) \to x, G\|O}(R)$ &  $icols(R) = icols - \{x\} \cup \{a\} \cup G \cup O$\\ \hline
  \end{tabular}
 \end{minipage}
 }
\end{minipage}
\begin{minipage}{0.5\linewidth}
\centering
\caption{Top-down inference of Boolean property \textit{set} for operator $\Diamond$}
\label{tab:top-down-set}
  \begin{tabular}{|c|c|} \hline 
\rowcolor[gray]{.9}  Operator $\Diamond$ & Inferred property \textit{set} of input(s) of $\Diamond$\\ \hline 
    $\circledast(R)$ & $ set(R) = false$ 
  
  \\ \hline
  
  $\selection_ {(\theta_{1} \wedge ... \wedge \theta_{n})}(R)$ & $ set(R) = set(R) \wedge set $ 
  
  \\ \hline
  $\projection_ {a_{1} \rightarrow b_{1},...,a_{n} \rightarrow b_{n}(R)}$ & $  set(R) = set(R) \wedge set  $ \\ \hline
 $R \join_ {a=b} S$ & $  set(R) = set(R) \wedge set  $   \\ 
                     & $  set(S)= set(S) \wedge set  $ \\ \hline
                     
  $R \crossprod S$ & $  set(R) = set(R) \wedge set  $   \\ 
                   & $  set(S) = set(S) \wedge set  $ \\ \hline
  $_{b_{1},...,b_{n}} \aggregation _{F(a)}(R)$ & $ set(R) =  false  $ \\ 
   \hline
  $\duplicate(R)$ & $  set(R) = set(R) \wedge true $ \\ \hline
  $R \union S$ & $ set(R) = set(R) \wedge set $ \\ 
                                                   & $ set(S) = set(S) \wedge set  $ \\ \hline
  $R \intersection S$ & $ set(R) = set(R) \wedge set  $  \\ 
                                                          & $set(S) = set(S) \wedge set   $\\ \hline
  $R \difference S$ & $ set(R) = set(R) \wedge set $ \\ 
                                                        & $set(S) = set(S) \wedge set  $ \\ \hline
$\omega_{f(a) \to x, G\|O}(R)$ & $set(R) = false$ \\ \hline
  \end{tabular}  
\end{minipage}

\end{table*}

\begin{table*}[h]
\centering
\caption{Bottom-up inference of property \textit{key} for operator $\Diamond$}
\label{tab:bottom-up-key}
  \begin{tabular}{|c|c|c|} \hline 
\rowcolor[gray]{.9}  Rule & Operator $\Diamond$ & Inferred property \textit{key} of $\Diamond$\\ \hline 
  1 & $\selection_ {\theta}(R)$ & $ key = key(R) $ 
  
  \\ \hline
     2 & $\projection_ {a_{1} \rightarrow b_{1},...,a_{n} \rightarrow b_{n}(R)}$ 
                     & $ key = \{ E[B/A] | E \in key(R) \wedge E \subseteq \{a_1,..., a_n\} \}$ for $A = \{a_1,\ldots, a_n\}$ and $B = \{b_1, \ldots, b_n\}$ \\ \hline
    3 & $R \join_ {a=b} S$ &   $key =
MIN( \{ (E_1 \cup \{a\}) \cup (E_2 - \{b\}) | E_1 \in key(R) \wedge E_2 \in key(S) \}$\\
&&\hspace{15mm}$\cup  \{ (E_2 \cup \{b\}) \cup (E_1 - \{a\}) | E_1 \in key(R) \wedge E_2 \in key(S) \})$
  \\ \hline
  4 & $R \crossprod S$ & $ key = \{E_1 \union E_2 | E_1 \in key(R) \wedge E_2 \in key(S)\} $   \\ \hline
                   

    5 & $_{b_{1},...,b_{n}} \aggregation _{F(a)}(R)$  & $key =
                                                        \begin{cases}
                                                          \{\{b_1,..., b_n\}\}  & \mathtext{if} \nexists x \in key(R): x \subseteq  \{b_1,..., b_n\}\\
                                                          \{ \{b_1,..., b_n\}  \intersection x \mid x \in key(R) \wedge x \subseteq  \{b_1,..., b_n\} \} &\mathtext{otherwise}\\
                                                        \end{cases}$ \\
   \hline
6 & $\aggregation_{F(a)}(R)$ & $\{\{F(a)\}\}$ \\ \hline
    7 & $\duplicate(R)$ & $key =
                          \begin{cases}
                            key(R) & \mathtext{if} key(R) \neq \emptyset\\
                            \{\schema{R}\} & \mathtext{otherwise}\\
                          \end{cases}$ \\
  \hline
  8 & $R \union S$ & $ key = \emptyset $ \\ \hline 
                                                    
  9 & $R \intersection S$ & $ key = key(R) \union key(S)[\schema{R}/\schema{S}] $  \\ \hline
                                                          
  10 & $R \difference S$ & $ key = key(R)  $ \\ \hline

  11 & $\omega_{f(a) \to x, G\|O}(R)$ & $ key = key(R)  $ \\ \hline
  \end{tabular}
\end{table*}

\parttitle{Inferring the icols Property}
We compute icols in a top-down traversal, Table~\ref{tab:top-down-icols} shows the inference rules for the top-down traversal. We initialize the virtual root $\circledast$ of the algebra tree for query $Q$ with the attributes of $\schema{Q}$ since all these attributes are returned as part of the query result (Rule 1).  All attributes from a selection's condition $\theta$ (denoted by $cols(\theta)$ are needed to evaluate the selection. Thus, all attributes needed to evaluate the ancestors plus $cols(\theta)$ are required to evaluate the selection (Rule 2). For a projection we need all attributes that are used to compute result attributes of the projection that are part of icols (Rule 3). For crossproduct we restrict the columns needed to evaluate ancestors of the cross product to its inputs (Rule 5). The inference rule for join is a combination of the rules for cross product and join (Rule 4). For an aggregation we need all its group-by attributes to guarantee that the same number of tuples are returned (even if some group-by attributes are not needed by ancestors). Additionally, we need the attribute used as input to the aggregation function (Rule 6). All input attributes (Rule 7) are needed to evaluate a duplicate elimination operator (removing an attribute may change the number of results). The number of duplicates produced by a bag union is not affected by additional projections (Rule 8). Thus, only attributes needed to evaluate ancestors of the union are needed (her $\schema{R}/\schema{S}$ denotes renaming the attributes from $\schema{R}$ as $\schema{S}$. For both intersection and difference we need all input attributes to not affect the result (Rules 9 and 10). To evaluate a window operator we need all attributes that are used to compute the aggregation function, order-by and partition-by parameters (Rule 11).

\parttitle{Inferring the set Property}
We compute set in a top-down traversal, Table~\ref{tab:top-down-set} shows the inference rules for the top-down traversal. Note that the definition of this property is purely syntactical. Thus, the inference rules simply implement the condition of this property in a recursive fashion. For instance, the child of any aggregation or window operator  does not fullfil this condition since then the aggregation (respective window) operator will be on a path between the operator and the ancestor duplicate elimination operator if it exists.

\parttitle{Inferring the key Property}
We compute key in a bottom-up traversal, Table~\ref{tab:bottom-up-key} shows the inference rules for the bottom-up traversal. In the inference rules we define $MIN(S) = \{e| e\in S \wedge \nexists e' \in S : e' \subset e\}$ to remove any key that is a superset of another key, e.g., $\{a,b,c\}$ contains $\{a,b\}$. Any key that holds for the input of a selection is naturally also a key of the selection's output since a selection returns a subset of its input relation (Rule 1). A projection returns one result tuple for every input tuple. 
Thus, any key $k$ that holds over the input of a projection operator will hold (modulo renaming) in the projection's output unless some of the key attributes are projected out (Rule 2). 
A cross product returns all combinations $(t,s)$ of tuples $t$ from the left and $s$ from the right  input. It is possible to uniquely identify $t$ and $s$ using a pair of keys from the left and right input (Rule 4). For all tuples returned by an equality join $R \join_{a=b} S$, we know that $a=b$ holds. Thus, the functional dependencies $a \to b$ and $b \to a$. 
Thus, attribute $a$ can be substituted with attribute $b$ in any key for $R$ and $b$ with $a$ in any key for $S$ without changing the values of the key attributes for any tuple. For any key $k$ in $key(R)$ we know that $k \to a$ and, thus, $k \to b$. By a symmetric argument for any $k$ in $key(S)$ we have $k \to a$. It follows, that for $k \in key(R)$ and $k' \in key(S)$, $k \cup k' - \{b\}$ and $k \cup k' - \{a\}$ are keys for the join. Since for two keys $k$ and $k'$ generated in this fashion, it may be the case that $k \subseteq k'$, we apply $MIN$ to remove keys that contain other keys (Rule 3). For aggregation operators we consider two cases: 1) aggregation with group-by and 2) without group-by. For an aggregation with group-by the values for group-by attributes are unique in the output and, thus, are a superkey for the relation. Furthermore, all keys that are subsets of the group-by attributes are still keys in the output. Hence, if none of the keys are contained in the group-by attributes we can use the group-by attributes as a key and otherwise use all contained keys (Rules 5). Aggregation without group-by returns a single tuple. For this type of aggregation the aggregation function result is a trivial key (Rule 6). The bag union of two input relations does not have a key even if both inputs have keys because we do not know whether they values for these keys overlap (Rule 8). The result relation computed by an intersection $R \intersection S$ is a subset of both $R$ and $S$ (Rule 9). Thus, any key from either input is guaranteed to hold over the output (of course attributes from keys of $S$ have to be renamed). Set difference returns a subset of the left input relation. Thus, any key that holds over the left input is guaranteed to hold over the output (Rules 10). The window operator adds a new attribute value to every tuple from its input. Thus, every key that holds over the input also holds over the window operator's output (Rule 11).

\parttitle{Inferring the EC Property}
%
We compute EC in a bottom-up traversal followed by a top-down traversal.
Table~\ref{tab:bottom-up} 
shows the inference rules for the bottom-up traversal. 
In the inference rules we use an operator ${\cal E}^*$ that takes a set of ECs as input and merges classes if they are overlapping. This corresponds to repeated application of transitivity: $a = b \wedge b = c \rightarrow a=c$.
Formally, operator ${\cal E}^*$  is defined as the least fixed-point of operator $\cal E$ shown below:\\[-12mm]
%
\begin{center}
\resizebox{1\linewidth}{!}{
  \begin{minipage}{1.1\linewidth}
    \begin{align*}
      {\cal E}(EC) = &\{ E \union E' \mid E \in EC \wedge E' \in EC \wedge E \cap E' \neq \emptyset \wedge E \neq E' \} \\
                     &\cup \{ E \mid E \in EC \wedge  \not\exists
                       E' \in EC: E \cap E' \neq \emptyset \}
    \end{align*}
  \end{minipage}
}
\end{center}

In the following we first discuss the bottom-up inference rules and then the top-down rules.

\parttitle{Top-down Inference of EC}
For a base relation we place each attribute in its own equivalence class.
For selections (Rule 2), we assume that selection conditions have been translated into  conjunctive normal form $\theta_{1} \wedge ... \wedge \theta_{n}$. 
If $\theta_i$ is an equality comparison $a=b$ where both $a$ and $b$ are attributes of relation R, then we need to merge their ECs
(we denote the EC containing attribute $a$ as $EC(R,a)$).
If $b$ is a constant, then we need to add $b$ to $EC(R,a)$.  
This is realized by adding $\{a,b\}$ to the input set of ECs and merging overlapping ECs using ${\cal E}^*$.
For example, if  ${\bf R} = (a,b,c)$ and $EC(R)=\{\{a,b\},\{c\}\}$, then $EC(\selection_{a=5 \wedge c<9}(R)) = \{\{a,b,5\},\{c\}\}$.             
Any equivalences $a=b$ that hold over the input of a projection (Rule 2), also hold over its output as long as attributes $a$ and $b$ are present in the output. Since only a subset of the attributes from an equivalence class of the input may be present, we reconstruct equivalence classes based on present attributes using the ${\cal E}^*$ operator. 
\textit{Join}: For each tuple $t$ in the result of a join $R \join_ {a=b} S$ 
, we know that $t.a = t.b$, because otherwise the tuple cannot be in the result. Thus, we merge $EC(R,a)$  with $EC(S,b)$ using the approach presented above (adding $\{a,b\}$).
For example, if 
$EC(R)=\{\{a,b\},\{c\}\}$ and 
$EC(S)=\{\{d\},\{e,f\}\}$. Then, $EC(R \join_{a=d} S)=\{\{a,b,d\},\{c\},\{e,f\}\}$.     
A crossproduct (Rules 5) does not enforce any

\parttitle{Union} 
In the union rule $R \union S$ (Table~\ref{tab:bottom-up}), we rename the attributes of $S$ in $EC(S)$ to the attribute of $R$ by using $EC(S)[\schema{S}/\schema{R}]$. Then we combine $EC(S)$ with $EC(R)$. Finally, we use ${\cal E}^*$ to merge  overlapping classes. For example, if relation $R$ has schema $\{A,B\}$ with $EC=\{\{A\},\{B\}\}$ and relation $S$ has schema $\{C,D\}$ with $EC=\{\{C,D\}\}$, then $EC(R \union S)=\{\{A\},\{B\}\}$.

\begin{figure*}[t]
  \centering
%
\begin{minipage}{0.46\linewidth}
\begin{equation}\label{eq:pulling-up-provenance-projections}
     \frac{ a \subseteq \schema{\Diamond(\projection_{A}(R))} \wedge b \not\in icols(\Diamond(\projection_{A}(R))) }{\Diamond(\projection_{A,a \to b}(R)) \to  
     \projection_{\schema{\Diamond(\projection_{A}(R))},a \to b}(\Diamond(\projection_A (R)))}
\end{equation}
\end{minipage}
\hspace{1cm}
\begin{minipage}{0.15\linewidth}  
\begin{equation}\label{eq:duplicate-remove}
    \frac{keys(R) \neq \emptyset}{\duplicate (R) \rightarrow R} 
\end{equation}
\end{minipage}
\hspace{1cm}
\begin{minipage}{0.15\linewidth}  
\begin{equation}\label{eq:duplicate-remove-set}
    \frac{set(\duplicate(R))}{\duplicate (R) \rightarrow R} 
\end{equation}
\end{minipage}\\[-1mm]
\begin{minipage}{0.16\linewidth}  
\begin{equation}\label{eq:remove-redundant-columns1}
 \frac{A=icols(R)}{R \rightarrow \projection_A (R)}
\end{equation}
\end{minipage}
\hspace{1cm}
\begin{minipage}{0.25\linewidth}  
\begin{equation}\label{eq:window-function}
 \frac{x \not\in icols(\omega_{f(a) \to x} (R))}{\omega_{f(a) \to x}(R) \rightarrow R}
\end{equation}
\end{minipage}
\hspace{1cm}
\begin{minipage}{0.45\linewidth}  
\begin{equation}\label{eq:attribute-factoring}
 \frac{e_1 = if \  \theta \  then \  A + c \  else \  A}
{\projection_{e_1,...,e_m}(R) \to \projection_{ A +  \  if \  \theta \  then \  c \  else \  0, e_2,...,e_m}(R)}
\end{equation}
\end{minipage}\\[-2mm]
  \caption{Provenance-specific transformation (PAT) rules}
  \label{fig:algebraic-rules}
\end{figure*}

\subsection{Provenance-specific Transformations Rules}
\label{subsection:PATs-rules}

We now introduce the subset of our PAT rules shown in Fig.~\ref{fig:algebraic-rules}, prove their correctness, and then discuss how these rules address the performance bottlenecks discussed in Sec.~\ref{sec:motivation}.

\parttitle{Provenance Projection Pull Up}
Provenance instrumentation~\cite{arab2014generic,glavic2013using} 
seeds provenance annotations by duplicating attributes of the input using projection. This increases the size of tuples in intermediate results. 
We can delay this duplication of attributes if the attribute we are replicating is still available in ancestors of the projection. 
In Rule~\eqref{eq:pulling-up-provenance-projections}, 
$b$ is an attribute storing provenance generated by duplicating attribute $a$. If $a$ is available in the schema of $\Diamond(\projection_{A}(R))$ ($\Diamond$ can be any operator) and $b$ is not needed to compute $\Diamond$,
then  we can pull the projection on $a \to b$ through operator $\Diamond$. 
For example, consider a query $Q = \selection_{a<5}(R)$ over relation $R(a,b)$.
Provenance instrumentation yields: 
$\selection_{a<5}(\projection_{a,b,a \to P(a), b \to P(b)}(R))$. This projection can be pulled up to reduce the size of the selection's result tuples:
$
\projection_{a,b,a \to P(a), b \to P(b)}(\selection_{a<5}(R))
$.


\parttitle{Remove Duplicate Elimination}
Rules~\eqref{eq:duplicate-remove} and~\eqref{eq:duplicate-remove-set} remove duplicate elimination operators. If a relation $R$  has at least one candidate key, then it cannot contain any duplicates. Thus, a duplicate elimination applied to  $R$ can be safely removed (Rule~\eqref{eq:duplicate-remove}). Furthermore, if the output of a duplicate elimination $op$ is again subjected to duplicate elimination further downstream and the operators on the path between these two operators are not sensitive to the number of duplicates (property \textit{set} is true), then $op$ can be removed (Rule~\eqref{eq:duplicate-remove-set}).



\parttitle{Remove Redundant Attributes}
Recall that $icols(R)$ are the attributes of relation $R$ which are needed to evaluate ancestors of $R$ in the query.
If $icols(R) = A$, 
 then we use Rule~\eqref{eq:remove-redundant-columns1} to remove all other attributes by projecting $R$ on $A$.
For example, if 
$icols(R)=\{a, b\}$, then $R \to \projection_{a,b}(R)$ is a valid rewrite.
Operator $\omega_{f(a) \to x} (R)$ extends each tuple $t \in R$ by adding a new attribute $x$ that stores the result of window function $f(a)$.
Rule~\eqref{eq:window-function} removes $\omega$ if $x$ is not needed by any ancestor of $\omega (R)$.  
For example, if $\schema{R} = \{a,b\}$, then  $\schema{\omega_{sum(b) \to x} (R)} = \{a,b,x\}$. 
If $icols(\omega_{sum(b) \to x} (R)) =\{a,b\}$, then we can remove the window operator. 

\parttitle{Attribute Factoring}\label{sec:PAT-factor-attrs}
Attribute factoring restructures projection expressions in such a way that adjacent projections can be merged without blow-up in expression size. 
For instance, when projections $\projection_{b+b+b \to c}(\projection_{a+a+a\to b}(R))$ are merged, this increases the number of references to $a$ to 9 (each mention of $b$ is replaced with $a+a+a$). This blow-up can occur when computing the provenance of transactions where multiple levels of \lstinline!CASE! expressions are used. In relational algebra we represent  \lstinline!CASE! as $if \ \theta \  then \ e_1 \ else \ e_2$.   Rule~\eqref{eq:attribute-factoring} addresses a common case that arises when reenacting an update~\cite{AG16a}. 
For example, update \lstinline!UPDATE R SET a = a + 2 WHERE b = 2! would be expressed as $\projection_{if \ b = 2 \ then \ a+2 \ else \ a, b}(R)$ which can be rewritten as $\projection_{a + if \ b = 2 \ then \ 2 \ else \ 0, b}(R)$. Note how the number of references to attribute $a$ was reduced from 2 to 1. We define analog rules for any arithmetic operation which has a neutral element (e.g.,  multiplication). 


\subsection{Addressing Instrumentation Bottlenecks through PATs}
\label{sec:rule-problem-address}

Rule~\eqref{eq:attribute-factoring} is a preprocessing step that helps us to avoid a blow-up in expression size when merging projections (Sec.~\ref{sec:motivation} \textbf{P1}).
Rules~\eqref{eq:duplicate-remove} and~\eqref{eq:duplicate-remove-set} can be used to remove unnecessary duplicate elimination operators (\textbf{P2}). 
Bottleneck \textbf{P3} is addressed by removing operators that block join reordering:  Rules~\eqref{eq:duplicate-remove}, \eqref{eq:duplicate-remove-set}, and~\eqref{eq:window-function} remove such operators. Even if such operators cannot be removed, Rule~\eqref{eq:pulling-up-provenance-projections} and Rule~\eqref{eq:remove-redundant-columns1} remove attributes that are not needed which reduces the schema size of intermediate results. 
\textbf{P4} can be addressed by using Rules~\eqref{eq:duplicate-remove}, \eqref{eq:duplicate-remove-set}, and \eqref{eq:window-function} to remove redundant operators. Furthermore, Rule~\eqref{eq:remove-redundant-columns1} 
removes columns that are not needed.
In addition to the rules discussed so far, we apply standard equivalences, because our transformations often benefit from these equivalences and they also allow us to further simplify a query. For instance, we apply \textit{selection move-around} (which benefits from the EQ property), merging of selections and projections (only if this does not result in a significant increase in expression size), and remove redundant projections (projections on all input attributes). 

\subsection{Correctness}
\label{sec:correctness}

\begin{Theorem}
The PATs from Fig.~\ref{fig:algebraic-rules} are equivalence preserving.  
\end{Theorem}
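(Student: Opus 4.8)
The plan is to prove the theorem rule by rule, since Figure~\ref{fig:algebraic-rules} collects six independent transformations. A preliminary observation organizes the work: the preconditions of \eqref{eq:pulling-up-provenance-projections}, \eqref{eq:duplicate-remove-set}, \eqref{eq:remove-redundant-columns1}, and \eqref{eq:window-function} depend on the \emph{non-local} properties $icols$ and $set$, so for these ``equivalence preserving'' must be read contextually, i.e.\ for a rule $q \to q'$ I must establish $Q \equiv Q[q \leftarrow q']$ for the surrounding query $Q$, not the stronger isolated claim $q \equiv q'$. I would first dispatch the two purely local rules. For attribute factoring \eqref{eq:attribute-factoring}, a tuple-wise case split on the truth value of $\theta$ suffices: when $\theta$ holds both projection expressions evaluate to $A + c$, and when $\theta$ fails they evaluate to $A$ and to $A + 0 = A$ (using that $0$ is the neutral element of $+$); since $\projection$ preserves multiplicities, the two projection lists agree on every tuple. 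For duplicate removal \eqref{eq:duplicate-remove}, Definition~\ref{def:def_keys} gives that a nonempty $keys(R)$ forces $t^n \in R \Rightarrow n \le 1$, so every present tuple has multiplicity exactly $1$; as $\duplicate$ merely resets positive multiplicities to $1$, we get $\duplicate(R) = R$ as relations.

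Next I would handle the two rules that follow almost directly from Definition~\ref{def:def_icols}. Rule \eqref{eq:remove-redundant-columns1} is immediate, since $A = icols(R)$ makes $Q \equiv Q[R \leftarrow \projection_A(R)]$ the defining property of $icols$. For window removal \eqref{eq:window-function} I would combine two facts: by the semantics in Table~\ref{tab:rel-algebra-def}, $\omega_{f(a)\to x}$ preserves each input tuple and its multiplicity and only appends the attribute $x$, so $\omega_{f(a)\to x}(R)$ and $R$ coincide on $\schema{R}$ and have identical multiplicities; and since $x \notin icols(\omega_{f(a)\to x}(R))$ while $\schema{\omega_{f(a)\to x}(R)} = \schema{R} \cup \{x\}$, we have $icols(\omega_{f(a)\to x}(R)) \subseteq \schema{R}$, so by Definition~\ref{def:def_icols} the ancestors depend only on attributes on which the two agree. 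Hence replacing $\omega_{f(a)\to x}(R)$ by $R$ leaves $Q$ unchanged.

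The duplicate-removal-by-$set$ rule \eqref{eq:duplicate-remove-set} needs an auxiliary support-preservation lemma. Writing $\mathit{supp}(R) = \{t \mid R(t) > 0\}$, I would show that every operator outside $\{\aggregation, \omega\}$ computes its output support as a function of its input supports alone. By Definition~\ref{def:def_set}, $set(\duplicate(R))$ guarantees a downstream $\duplicate$ lying on \emph{all} paths to the root with no $\aggregation$ or $\omega$ in between; since $\mathit{supp}(\duplicate(R)) = \mathit{supp}(R)$, deleting the inner $\duplicate$ leaves the support at that point unchanged, the lemma propagates this invariance up to the downstream $\duplicate$, and that operator normalizes all multiplicities to $1$, so the final result — and thus $Q$ — is identical. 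The delicate step here is the lemma itself: bag set difference is multiplicity-sensitive (e.g.\ $\duplicate(R) - S$ and $R - S$ can have different supports), so the crux for \eqref{eq:duplicate-remove-set} is checking operator by operator that the class of operators the $set$ property treats as transparent genuinely preserves support, with $\difference$ being the case demanding the most care.

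The main obstacle, however, is the provenance-projection pull-up \eqref{eq:pulling-up-provenance-projections}, because it quantifies over an arbitrary operator $\Diamond$. Here $\projection_{A,a\to b}$ augments its input with a column $b$ that is a verbatim copy of the surviving attribute $a$, and the two preconditions are designed to make this copy transparent: $a \subseteq \schema{\Diamond(\projection_A(R))}$ keeps $a$ available above $\Diamond$ so that $b$ can be recreated, and $b \notin icols(\Diamond(\projection_A(R)))$ ensures no ancestor consumes $b$. I would reduce the rule to the commutation lemma $\projection_{\schema{\Diamond(\projection_A(R))}}\bigl(\Diamond(\projection_{A,a\to b}(R))\bigr) = \Diamond(\projection_A(R))$ together with the invariant that the functional dependency $b = a$ survives through $\Diamond$ (so the top projection $a \to b$ recreates $b$ exactly). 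Establishing the commutation lemma requires a case analysis over the operator types of Table~\ref{tab:rel-algebra-def}; the hard cases will be $\aggregation$ and $\omega$, where one must argue that $b$ is passed through as a pure payload rather than used for grouping, partitioning, ordering, or aggregation, so that (since $b = a$) grouping or partitioning by $\{\dots,a,b\}$ coincides with doing so by $\{\dots,a\}$. Combined with $b \notin icols$, which renders the relocated column irrelevant to every ancestor, this yields the contextual equivalence $Q \equiv Q[\,\Diamond(\projection_{A,a\to b}(R)) \leftarrow \projection_{\schema{\Diamond(\projection_A(R))},a\to b}(\Diamond(\projection_A(R)))\,]$ and completes the proof.
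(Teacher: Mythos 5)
Your rule-by-rule decomposition and the argument you give for each rule coincide with the paper's own proof: a tuple-wise case split on $\theta$ for Rule~\eqref{eq:attribute-factoring}, unfolding Definition~\ref{def:def_keys} for Rule~\eqref{eq:duplicate-remove}, unfolding Definition~\ref{def:def_icols} for Rules~\eqref{eq:remove-redundant-columns1} and~\eqref{eq:window-function}, a ``the copy $b$ of $a$ can be recreated above $\Diamond$'' argument for Rule~\eqref{eq:pulling-up-provenance-projections}, and a duplicate-insensitivity argument for Rule~\eqref{eq:duplicate-remove-set}. Your version is, if anything, more explicit about the contextual reading of equivalence for the rules whose preconditions involve the non-local properties $icols$ and $set$.

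The one place where you stop short is exactly where the argument needs more than the paper provides. For Rule~\eqref{eq:duplicate-remove-set} you correctly observe that the support-preservation lemma you rely on is \emph{false} for bag difference, but you then leave it as ``the case demanding the most care'' without resolving it. This is a genuine gap, and it cannot be closed as stated: with $R = \{t^2\}$ and $S = \{t^1\}$, the query $\duplicate(\duplicate(R) \difference S)$ evaluates to $\emptyset$ while $\duplicate(R \difference S) = \{t^1\}$, yet Definition~\ref{def:def_set} makes $set(\duplicate(R))$ true here because $\difference \notin \{\aggregation,\omega\}$. So the rule, with $set$ as defined, is unsound on queries containing $\difference$; the paper's proof glosses over this by asserting that every operator other than $\aggregation$ and $\omega$ is insensitive to duplicates, which your own observation refutes. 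To complete the proof one must either add $\difference$ (on both inputs --- $R \difference \duplicate(S)$ can also differ from $R \difference S$ in support) to the operators that falsify the $set$ property in Definition~\ref{def:def_set} and Table~\ref{tab:top-down-set}, or restrict Rule~\eqref{eq:duplicate-remove-set} so that no $\difference$ occurs between the two duplicate eliminations; with that amendment your support-preservation lemma holds for all remaining operators and your argument goes through.
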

\begin{proof}
\underline{Rule~\eqref{eq:pulling-up-provenance-projections}:}
The value of attribute $b$ is the same as the value of $a$ (follows from $a \to b$). Since $b$ is not needed to evaluate $\Diamond(\projection_A(R))$, we can delay the computation $b$ after $\Diamond$ has been evaluated.
\underline{Rule~\ref{eq:duplicate-remove}:}
Since $keys(R) \neq \emptyset$, by Def.~\ref{def:def_keys} it follows that no duplicate tuples exist in $R$ ($t^n \in R \rightarrow n \leq 1$). Thus, we get $\duplicate (R) \rightarrow R$.
\underline{Rule~\ref{eq:duplicate-remove-set}:}
We say an operator $\Diamond(R)$ is insensitive to duplicates if for all $R$ we have $t \in \Diamond(\delta(R)) \leftrightarrow t \in \Diamond(R)$. That is, which tuples are returned by the operator is independent of input tuple multiplicities.
Since $set(\duplicate(R))$ = true, we know that there exists $op' = \delta$ with $op \ancestor op'$ and $\forall op'': \duplicate(R) \ancestor op'' \ancestor op'$ such that $op'' \not\in \{ \aggregation, \omega \}$. Note that operator types other than $\aggregation$ and $\omega$ are insensitive to duplicates. Thus, the set of tuples in the input of $op'$ is not affected by the rewrite $\duplicate(R) \to R$. While the multiplicities of these tuples can be affected by the rewrite, the final result of $op' = \duplicate$ is not affected. 
\underline{Rule~\ref{eq:remove-redundant-columns1}:}
Suppose $A=icols(R)$, by definition~\ref{def:def_icols} we get $R \rightarrow \projection_A (R)$.
\underline{Rule~\ref{eq:window-function}:}
From $x \not\in icols(\omega_{f(a) \to x} (R))$ follows $Q[\omega_{f(a) \to x} (R) \gets \projection_{\schema{R}}(\omega_{f(a) \to x} (R))] \equiv Q$. Based on the definition of $\omega$ it follows that $t^n \in \projection_{\schema{R}}(\omega_{f(a) \to x} (R)) \leftrightarrow t^n \in R$. Thus, $Q[\omega_{f(a) \to x} (R) \gets R] \equiv Q$.
\underline{Rule~\eqref{eq:attribute-factoring}:}
Let $e_1' = A + \eIf{\theta}{c}{0}$.
We distinguish two cases if $\theta$ holds, then both $e_1$ and $e_1'$ evaluate to $A+c$. If $\neg \theta$ holds, then both $e_1$ and $e_1'$ evaluate to $A$.
\end{proof}

\begin{Definition}
We use $d$ to denote the depth of the query $Q$, e.g., if $Q = R$ while $R$ is a relation, then $d(Q) = 1$; If $Q = op(Q_1)$, then $d(Q)  = d(Q_1) + 1$; If $Q = op(Q_1,Q_2)$, then $d(Q) = max(d(Q_1),d(Q_2)) + 1$. We also define the depth of operators, for example, if $Q = \projection_A(\selection_{\theta}(R))$ with $d(Q)=3$, then $d(\projection_A) = 1$, $d(\selection_{\theta}) = 2$ and $d(R) = 3$.

\end{Definition}

\begin{Theorem} \label{thm:thm_icols_subset}
Assume $Q=op(Q_1)$ and $Q_1 = op_1(Q_2)$, then $icols(op) \subseteq icols(op_1)$ expect for the operator which introduced new attributes. Following we use $New(op)$ to denote the new introduced attributes of $op$. 

$$New(op)= 
\begin{cases}
\emptyset         &  op = \selection_{\theta}, \join_{a=b}, \crossprod, \duplicate, \union, \intersection, \difference \\
\{ b_1,...,b_n \} & op = \projection_{a_1 \to b_1,..., a_n \to b_n} \\
\{ F(a) \}           & op = _{b_{1},...,b_{n}} \aggregation _{F(a)}(R) \\
\{ f(a) \}           & op = \omega_{f(a) \to x, G\|O}(R) \\
\end{cases}$$


\end{Theorem}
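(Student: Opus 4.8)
The plan is to prove the statement by a structural case analysis on the operator type of the parent $op$, exploiting the fact that the $icols$ property is computed in a single top-down pass (Table~\ref{tab:top-down-icols}). Because of this, $icols(op_1)$ is obtained by applying the inference rule for $op$'s type to $icols(op)$, so in each case I can read off an explicit expression for $icols(op_1)$ in terms of $icols(op)$ and then verify the claimed containment directly. Since Def.~\ref{def:def_icols} guarantees $icols(op) \subseteq \schema{op(Q_1)}$, I would make the target inclusion precise as $icols(op) \setminus New(op) \subseteq icols(op_1)$, additionally restricting $icols(op)$ to $\schema{Q_1}$ for the binary operators, whose output attributes are split among two inputs.

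First I would dispatch the operators that introduce no new attributes ($New(op) = \emptyset$). For selection, Rule~2 gives $icols(op_1) = icols(op) \cup cols(\theta) \supseteq icols(op)$, so the inclusion is immediate. For duplicate elimination, intersection, and difference, the rules set the child's $icols$ to the child's entire schema, which contains $icols(op)$ (using $icols(op) \subseteq \schema{op(Q_1)}$ together with the fact that these operators leave the relevant schema unchanged). For union, $icols(op_1)$ equals $icols(op)$ on the left input and its renamed copy on the right, so the inclusion holds modulo the renaming $\schema{R}/\schema{S}$. For join and cross product the parent's needed attributes are partitioned between the two inputs, and here Rules~4 and~5 intersect $icols(op)$ (possibly augmented by the equijoin attributes) with the child's schema, giving exactly $icols(op)\cap\schema{Q_1} \subseteq icols(op_1)$. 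The projection case is then trivial: every output attribute lies in $New(op) = \{b_1,\ldots,b_n\} = \schema{op(Q_1)}$, so $icols(op) \subseteq New(op)$ and hence $icols(op)\setminus New(op) = \emptyset$.

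The two remaining cases are the schema-extending operators. For aggregation the output schema is $\{b_1,\ldots,b_n\}\cup\{F(a)\}$ with $New(op) = \{F(a)\}$, so $icols(op)\setminus New(op) \subseteq \{b_1,\ldots,b_n\}$, and Rule~6 yields $icols(op_1) = \{b_1,\ldots,b_n,a\}$, which contains it. For the window operator, $New(op)$ is the freshly added attribute $x$ (storing $f(a)$), and Rule~11 gives $icols(op_1) = (icols(op)\setminus\{x\})\cup\{a\}\cup G\cup O \supseteq icols(op)\setminus\{x\}$. I expect the main obstacle to be bookkeeping rather than conceptual: one must reconcile the distinct ways operators alter the attribute name space — the renaming $A_i \to B_i$ in projections, the splitting of attributes across the two inputs of binary operators, and the conflation of the result expressions $F(a)$ and $f(a)$ with their concrete output attribute names — and phrase the inclusion with the restriction $\cap\,\schema{Q_1}$ so that it remains valid for the binary operators. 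Once the statement is pinned down in this form, each case collapses to a one-line set-containment check against the corresponding rule.
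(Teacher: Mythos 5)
There is a genuine logical problem with your approach: it is circular relative to the paper's development. The theorem is a statement about $icols$ as defined semantically in Def.~\ref{def:def_icols} (the minimal attribute set $E$ with $Q \equiv Q[Q_{sub} \leftarrow \projection_{E}(Q_{sub})]$), and the paper uses this theorem as a lemma inside its subsequent proof that the inference rules of Table~\ref{tab:top-down-icols} correctly compute that semantic property (see, e.g., the appeals to Theorem~\ref{thm:thm_icols_subset} in Cases~1--3 of that proof). Your very first step --- ``$icols(op_1)$ is obtained by applying the inference rule for $op$'s type to $icols(op)$'' --- assumes exactly the correctness of those rules, so reading $icols(op_1)$ off the table and checking containment rule by rule proves the theorem only modulo a result whose proof in turn depends on the theorem. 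The case checks themselves are largely fine as verifications \emph{of the rules}, but they do not establish the stated property of the semantically defined $icols$.

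The paper's own proof avoids this by arguing directly from Def.~\ref{def:def_icols}: if some attribute $a \in icols(op) \setminus New(op)$ were absent from $icols(op_1)$, then replacing $Q_1$ by $\projection_{icols(op_1)}(Q_1)$ would remove $a$ from $op$'s input; since $op$ does not introduce $a$, the substituted query could not produce $a$ in $op$'s output, contradicting $a \in icols(op)$ (equivalence, indeed well-formedness, would fail). That one-paragraph semantic argument is what you should give; your bookkeeping about renaming, the schema split for binary operators, and the restriction to $\schema{Q_1}$ is a useful sharpening of the informal statement, but it belongs on top of the definitional argument, not in place of it.
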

\begin{proof}
If we remove the attributes $E \subseteq icols(parent)$ from the $icols(child)$, the parent operator would be broke since the child operator does not contain the attributes needed in its parent operator. Thus the $icols$ of a child of an operator have to be a superset of $icols$ of its parent except for attributes being produced by its parent, for example, $\projection_{(A+B \leftarrow C)}R$, $C$ is in the $icols(\projection)$ but does not exist in $R$.
\end{proof}

\begin{Theorem}
The Top-down inference of property icols shows in Table~\ref{tab:top-down-icols} is correct for any query.
\end{Theorem}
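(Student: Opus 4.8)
The plan is to prove correctness by structural induction on the query tree, traversing top-down (from the root toward the leaves). The key observation is that Definition~\ref{def:def_icols} characterizes $icols(op)$ as the minimal set $E \subseteq \schema{Q_{sub}}$ such that projecting the output of $op$ onto $E$ does not change the overall query result, i.e., $Q \equiv Q[Q_{sub} \leftarrow \projection_{icols(op)}(Q_{sub})]$. So for each operator $\Diamond$ I must show that the set $E$ computed by the corresponding rule in Table~\ref{tab:top-down-icols} (i) suffices, meaning projecting onto $E$ preserves $Q$, and (ii) is minimal, meaning no proper subset works. First I would set up the induction: the base case is the virtual root $\circledast$, where Rule~1 sets $icols(R) = \schema{R}$, which is trivially correct since every attribute is needed to produce the final output. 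The inductive hypothesis assumes $icols(op)$ is correct for the parent operator $op$, and I would then verify each child-computation rule.

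The inductive step proceeds operator by operator. For each rule I would argue the two directions separately. For \emph{sufficiency}, I rely on the operator's semantics from Table~\ref{tab:rel-algebra-def}: e.g.\ for selection (Rule~2), the attributes in $cols(\theta)$ are needed to evaluate the predicate and the parent needs $icols$, so their union suffices; for projection (Rule~3), only the input attributes feeding those output attributes that the parent actually uses ($B_i \in icols$) are required; for aggregation (Rule~6), the group-by attributes $b_1,\ldots,b_n$ must be retained to preserve the grouping (and hence the output multiplicities) plus the aggregated attribute $a$. For duplicate-sensitive operators — duplicate elimination (Rule~7), intersection (Rule~9), and difference (Rule~10) — I would emphasize that \emph{all} input attributes are required because removing an attribute can alter tuple multiplicities and thus the set of surviving tuples, which is exactly why these rules return the full schema. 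For \emph{minimality}, I would exhibit for each operator that dropping any attribute from the computed $E$ breaks the equivalence, either by changing an evaluable condition, the grouping, or the multiplicity behavior. I can also invoke Theorem~\ref{thm:thm_icols_subset} to justify the monotonicity structure: $icols$ of a child is a superset of its parent's $icols$ except for attributes in $New(op)$ that the operator itself introduces, which explains why rules such as the window rule (Rule~11) subtract $x$ and add back $\{a\}\cup G\cup O$.

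The main obstacle will be the duplicate-sensitive operators and the multiplicity argument. Unlike set semantics, where projecting away an unused attribute is harmless, under the bag semantics adopted in this paper (Table~\ref{tab:rel-algebra-def}) a projection $\projection_A$ collapses tuples that agree on $A$ and sums their multiplicities. Hence sufficiency for rules like join (Rule~4), which keeps $icols \cup \{a,b\}$ restricted to each input, must be argued carefully: I must show that retaining the join attributes $a,b$ together with the ancestor-needed columns is enough to reproduce both the correct tuples \emph{and} their correct multiplicities downstream, and that no smaller set does so. The subtle case is demonstrating that for union (Rule~8) additional projections do \emph{not} affect multiplicities (so only the ancestor-required columns are needed, after renaming via $\schema{R}/\schema{S}$), whereas for intersection and difference they \emph{do}, forcing the full schema. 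Making this multiplicity-preservation reasoning rigorous across all eleven rules — rather than relying on the set-semantics intuition — is the crux of the proof.
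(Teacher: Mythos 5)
Your plan follows essentially the same route as the paper's proof: induction over the depth of the query tree starting from the virtual root (where $icols = \schema{Q}$), a per-operator case analysis showing each rule yields the minimal set satisfying Definition~\ref{def:def_icols}, and an appeal to Theorem~\ref{thm:thm_icols_subset} for the superset/monotonicity structure. Your treatment is, if anything, more careful than the paper's on the bag-semantics multiplicity issues for duplicate elimination, intersection, difference, and union, which the paper handles only informally via example-based ``tuple losing'' arguments.
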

\begin{proof}
We are proving the theorem by induction over the depth of the operator of the query. 

\underline{BASE CASE:} 
Let $op_1$ be the operator of depth 1, we firstly prove the theorem for the icols of $op_1$ of arbitrary query Q is correct no matter what the type of $op_1$ is. Assume $Q=op_1(Q_{sub})$, since $d(op_1) = 1$, then $icols(op_1) = \schema{Q}$. By definition~\ref{def:def_icols}, $Q[Q_{sub} \leftarrow \projection_{icols(op_1)}(Q_{sub})] = Q[Q_{sub} \leftarrow \projection_{\schema{Q}}(Q_{sub})] \equiv Q$. Next we prove $\schema{Q}$ is the minimal set of attributes $E \in \schema{Q}$ such that $Q \equiv Q[Q_{sub} \leftarrow \projection_{\schema{Q}}(Q_{sub})]$. WLOG, assume $A \subseteq \schema{Q}$, then $\projection_{\schema{Q}/A}Q \not \equiv Q$, thus $\schema{Q}$ is the minimal set. 
Now assume we have proved icols for the operator of depth up to n of arbitrary query Q , we denote the operator in depth n as $op_n$ and use $icols(op_n)$ to denote its icols. Now let us prove it for the operator of depth up to n+1, WLOG, let $op_{n+1}$ be an operator with $d(op_{n+1}) = n+1$ and use $icols(op_{n+1})$ to denote its icols.
Since $icols(op_{n+1})$ is based on operator $op_n$, now we discuss different types of $op_n$. 

\underline{CASE 1 ($\selection$):} If $op_n = \selection$, by definition~\ref{def:def_icols}, $icols(op_{(n+1)})$ should be the minimal set of attributes $E \subseteq \schema{Q_{sub}}$ such that $Q \equiv Q[Q_{sub} \leftarrow \projection_{icols(op_{(n+1)})}(Q_{sub})]$ while $Q_{sub} = op_{(n+1)}(Q_{sub}')$ be a subquery of $Q$. 
By Theorem~\ref{thm:thm_icols_subset}, $icols(op_n) \subseteq icols(op_{(n+1)})$. Since in the selection operator, the attributes in its condition $\theta$ are needed to evaluate the operator (we denote the set of attributes as $cols(\theta)$). If we remove attributes in $cols(\theta)$, the condition would be ill-defined. Thus $icols(op_n) \cup cols(\theta)$ is the minimal set, now we proved Rule 2.

\underline{CASE 2 ($\projection$):} If $op_n = \projection$, the proof steps are similar to the proof of CASE 1, $icols(op_{(n+1)})$ should satisify the definition~\ref{def:def_icols}. By Theorem~\ref{thm:thm_icols_subset}, $\projection$ introduced new attributes, thus $icols(op_n)$ should contain the attributes which used to construct the attributes in $icols(op_n)$. For example, $\projection_{(A+B \to C)}(R(A,B))$, the $icols(\projection) = \{C\}$ such that $icols(R)$ must contain attributes $A, B$, otherwise $\projection$ would be ill-defined. Thus if we have a query like $\projection_ {A_{1} \rightarrow B_{1},...,A_{n} \rightarrow B_{n}(R)}$ where $A_i$ and $B_i$ are expresions and $i \in \{1,...,n\}$, if $B_i \in icols$, then $icols(R)$ should contain all of the attributes exist in $A_i$. Now we proved the Rule 3.




\underline{CASE 3 ($\join$):} If $op_n =$ $\join_{a=b}$, $icols(op_{(n+1)})$ should contain the attributes in the join condition, here is $\{a,b\}$, otherwise the join operator $op_n$ would be ill-defined. By Theorem~\ref{thm:thm_icols_subset}, $icols(R \join S) \subseteq icols(R) \union icols(S)$, but $icols(op_{(n+1)})$ ($icols(R)$ or $icols(S)$) should contain attributes in its own icols, otherwise $op_{(n+1)}$ would be ill-defined. 
Thus $(icols(op_{n}) \cup \{a,b\})  \intersection icols(op_{(n+1)})$ is the the minimum set for satisfying the definition~\ref{def:def_icols}, now we proved the Rule 4.

\underline{CASE 4 ($\crossprod$):} If $op_n = \crossprod$, same with the proof of CASE 4, the difference is without the need of the attributes in the join condition, now we proved Rule 5.

\underline{CASE 5 ($\aggregation$):} If $op_n = \aggregation$, by Theorem~\ref{thm:thm_icols_subset}, $ _{b_{1},...,b_{n}} \aggregation _{F(a)}(R) $ introduced new attributes $\{F(a)\}$. To satisfy the definition~\ref{def:def_icols}, $icols(op_{(n+1)})$ equal to the group-by attributes and function attributes which used to construct the $\schema{op}$, otherwise the aggregation operator $op_n$ would be ill-defined. Now we proved the Rule 6.

\underline{CASE 6 ($\duplicate$):} If $op_n = \duplicate$, to satisfy the definition~\ref{def:def_icols}, $icols(op_{(n+1)})$ have to equal to the $\schema{op_n}$, otherwise tuple losing may be caused. For example, for the query $Q = \projection_A(\duplicate(R(A,B)))$ while relation R contains the tuples \{(1,2),(1,3)\} and $icols(\projection) = \{A\}$, then the result of Q is \{1,1\}. Assume $icols(\duplicate) \neq \schema{R}$ which only contains attribute A, by definition~\ref{def:def_icols}, $Q= \projection_A(\duplicate(\projection_A(R(A,B))))$ whose result is \{1\}, one tuple was lost. 

\underline{CASE 7 ($\union $):} If $op_n = \union$, by Theorem~\ref{thm:thm_icols_subset} no new attribiutes was introduced and in this paper we only consider bag-semantics, thus $icols(op_n) = icols(op_{(n+1)})$, now we proved the Rule 8.

\underline{CASE 8 ($\difference$):} If $op_n = \difference$, to satisfy the definition~\ref{def:def_icols}, the $icols$ of its child operator have to equal to the schema of its child, otherwise tuple losing may be caused. For example, for the query $Q = \projection_A(\difference(R(A,B),S(C,D)))$, relation R contains data \{(1,2),(1,3)\}, relation S contains data \{(2,4)\}, the result of Q is \{1,1\}. $icols(\projection) = {A}$, if $icols(R(A,B)) \neq \schema{R}$ which only contains attribute A, by applying definition~\ref{def:def_icols}, query Q would be rewrited to $Q = \projection_A(\difference(\projection_A(R(A,B)),\projection_C(S(C,D))))$ whose result is \{1\}, another tuple \{1\} was lost. Thus we proved the Rule 10.

\underline{CASE 9 ($\intersection$):} If $op_n = \intersection$, the proof of Rule 10 is similar with CASE 8.

\underline{CASE 10 ($\omega$):} If $op_n = \omega$, the proof is similar with CASE 5 ($\aggregation$). Since without the attirubtes in G, O and the function, the operator $op_n$ would be ill-defined. 


\end{proof}



\begin{Theorem} \label{thm:thm-trans}
Except for the operator $\union$ and $\difference$, if two attributes are in the same $EC$ in the child operator and these attributes exist in the parent operator, then they are also should be in the same $EC$ in the parent operator. 
\end{Theorem}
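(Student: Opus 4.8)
The plan is to read the $EC$ property locally, as the bottom-up table of Table~\ref{tab:bottom-up} in fact computes it: a block $E \in EC(op)$ asserts that for every instance $I$ and every output tuple $t$ of $op$, all attributes of $E$ agree on $t$ (i.e. $t.a=t.b$ for $a,b\in E$), which is exactly what makes $op(\cdots) \equiv \selection_{a=b}(op(\cdots))$ a no-op under bag semantics and what soundly implies the global Def.~\ref{def:def_ec}. Under this reading the theorem becomes a statement about \emph{value inheritance}: for every operator other than $\union$ and $\difference$, the value an output tuple carries in a surviving pass-through attribute is copied verbatim from a contributing input tuple, so any equality that already holds in the child survives in the parent whenever both attributes remain in the output schema. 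First I would make this inheritance principle precise, then discharge it by a case analysis over the operator types.

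I would group the cases by the shape of the inheritance argument. The \emph{subset} operators $\selection_\theta$, $\duplicate$, $\intersection$, and the left input of $\difference$ return a sub-multiset of (one of) their inputs, so if every child tuple satisfies $a=b$ then a fortiori every surviving tuple does; this covers Rules~2, 7, 9, and the $R$-part of Rule~10. The window operator $\omega_{f(a)\to x,G\|O}$ only appends the fresh attribute $x$ and leaves all original attribute values untouched, so equalities among original attributes are preserved (Rule~11). For $\crossprod$ and $\join_{c=d}$ each output tuple is a concatenation $(t,s)$ of a left tuple $t$ and a right tuple $s$; if $a,b$ lie in a common child block they lie on a single side, and the output reads their values directly off that side's tuple, so $t.a=t.b$ (resp.\ $s.a=s.b$) carries over, which justifies retaining $EC(R)\cup EC(S)$ in Rules~4 and~5 (and explains why an attribute of $R$ and an attribute of $S$ can never start in a common child block, since $EC$ is per-relation). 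For projection I would track the renaming: if input attributes $a_i,a_j$ sit in one child block and are emitted as $b_i,b_j$, then $t.b_i$ and $t.b_j$ copy the values $t.a_i$ and $t.a_j$ of the contributing input tuple, so $b_i,b_j$ land in a common output block (Rule~3).

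The one genuinely non-copy case is aggregation $_{G}\aggregation_{F(a)}$, which I expect to be the main obstacle, since an output tuple is \emph{not} a copy of any single input tuple. Here the only original attributes appearing in the output are the group-by attributes $G$, and each output group carries the common $G$-value shared by all input tuples of that group. Thus if $a,b\in G$ and all input tuples satisfy $a=b$, then in particular every tuple of each group does, so the group's representative values for $a$ and $b$ coincide and $a=b$ holds in the output; this is precisely why Rule~6 keeps $G\cap E$ for each child block $E$. Finally I would explain the two exclusions: for bag union $\union$ an output tuple may come from the \emph{other} input, on which an equality inferred for one child need not hold, so only the common refinement $\{E\cap E'\}$ of the two child partitions survives (Rule~8); and for the right input $S$ of $\difference$ the output is drawn entirely from $R$, so an equality holding only in $S$ is not reflected downstream. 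Both failures are witnessed by a one-line instance, and it is exactly these that place $\union$ and $\difference$ outside the theorem's scope.
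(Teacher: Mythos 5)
Your proposal is correct and follows essentially the same route as the paper's own proof: a case analysis over operator types resting on the observation that every operator other than $\union$ and $\difference$ passes surviving attribute values through unmodified (with aggregation handled via the group-by attributes), plus a small counterexample for the excluded operators. Your write-up is in fact more careful than the paper's --- notably in spelling out why aggregation, the one non-copy case, still preserves equalities among group-by attributes, and in observing that for $\difference$ only the right input is genuinely problematic --- but the underlying argument is the same.
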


\begin{proof}
Assume attributes $a$ and $b$ in the result schema, for each operator (except for the $\union$ and $\difference$), if the operator preserve that it is not manipulating its values (the values are the same in the input and the output), then if they are equals before, they are equals afterwards.
For $\selection, \join, \crossprod, \duplicate$ and $\intersection$, attributes are in the output of the operator are not modified the iterms of values, thus if they are equals before, they are equals afterwards. 
For $\projection$, if the attributes in the output igonring renaming, then this holds.  
For $\aggregation$ and $\omega$, the attributes in the result are only the part of group by attributes which are not manipulated, thus this holds.
For $\union$, this does not hold. For example, relation $R$ has attributes a and b and relation $S$ has attributes c and d, $EC(R) = \{\{a,b\}\}$ and $EC(S) = \{\{c\},\{d\}\}$. If $R$ contains the value $\{(1,1)\}$ and $S$ contains the value $\{(3,4)\}$, then $R \union S = \{(1,1),(3,4)\}$. It is obviously that $EC(R) = \{\{a,b\}\}$ does not hold in $R \union S$. The same for the operator $\difference$. 

\end{proof}

\begin{Theorem}
Bottom-up inference of property EC in Table~\ref{tab:bottom-up} is correct.
\end{Theorem}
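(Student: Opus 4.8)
The plan is to prove a slightly stronger, purely \emph{local} invariant by structural induction over the query tree, and then observe that this invariant immediately implies the query-level equivalence demanded by Def.~\ref{def:def_ec}. Concretely, I would show that for every operator $\Diamond$ and every class $E$ produced by the bottom-up rules of Table~\ref{tab:bottom-up}, the equality $a=b$ (or $a=c$ for a constant $c$) genuinely holds for all tuples in $\Diamond$'s output on every instance $I$, i.e. $\forall t \in \Diamond(\ldots)(I): t.a = t.b$. This local property is far easier to manipulate than the equivalence $Q \equiv Q[Q_{sub} \leftarrow \selection_{a=b}(Q_{sub})]$ in Def.~\ref{def:def_ec}, and it implies it at once: if $a=b$ already holds on every tuple of the subquery output, then inserting $\selection_{a=b}$ there is the identity, so the rewritten query is equivalent to $Q$.

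The base case is Rule~1: for a base relation $R$ the inferred classes are the singletons $\{\{a\} \mid a \in \schema{R}\}$, for which the invariant holds vacuously since $t.a = t.a$. For the inductive step I assume the invariant for the child (or children) and argue rule by rule, noting that the equalities inside an inferred class come from two sources: equalities inherited from a child, and equalities freshly forced by the operator. The fresh equalities are handled locally and directly: a selection conjunct $a=b$ guarantees $t.a=t.b$ on the output (Rule~2), an equijoin $R \join_{a=b} S$ guarantees $t.a=t.b$ (Rule~4), and the new attribute of an aggregation or window operator is placed in its own singleton, which is trivially sound (Rules~6 and~11). For the inherited equalities I invoke Theorem~\ref{thm:thm-trans}: for every operator other than $\union$ and $\difference$, any child equality whose attributes survive into the parent's schema still holds in the parent's output, because these operators do not alter the values they pass through. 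Selection, join, crossproduct, duplicate elimination, and intersection pass the relevant values unchanged; projection does so modulo the renaming applied in Rule~3; aggregation retains only its group-by attributes unchanged, which is why Rule~6 intersects each child class with the group-by set; and the window operator preserves all input attributes and merely appends $x$, so Rule~11 keeps $EC(R)$ in full. The soundness of the ${\cal E}^*$ closure is then a direct consequence of transitivity of equality: if $t.a=t.b$ and $t.b=t.d$ hold on every tuple, so does $t.a=t.d$, which is exactly the merging of overlapping classes performed by the least fixed point of ${\cal E}$.

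The main obstacle is the union rule (Rule~8), the one case where child equalities must \emph{not} simply be propagated; the invariant can fail for a class holding in only one input, as the counterexample in the proof of Theorem~\ref{thm:thm-trans} already shows. I would justify the rule by observing that a tuple of $R \union S$ originates from either $R$ or $S$, so $t.a=t.b$ holds on the whole output only if it holds on both inputs; taking $E \cap E'$ over all pairs $E \in EC(R)$, $E' \in EC(S)$ (after renaming $\schema{S}$ to $\schema{R}$) keeps exactly the attribute pairs co-classed in both inputs, which is sound by the induction hypothesis applied to each side, and the ${\cal E}^*$ closure is again justified by transitivity. Set difference (Rule~10) is the dual easy case: since $R \difference S \subseteq R$, every equality holding on $R$ survives, so keeping $EC(R)$ is sound; symmetrically intersection (Rule~9) may keep the union of both inputs' classes because $R \intersection S$ is contained in both $R$ and $S$. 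Beyond the union case, the remaining care points are the treatment of constant members of a class (the same local argument applies with $t.a=c$ in place of $t.a=t.b$) and checking that the schema restrictions in Rules~6 and~11 correctly discard attributes whose values the operator does not preserve.
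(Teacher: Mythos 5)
Your proposal is correct and follows essentially the same route as the paper's proof: structural induction over the query tree with a per-operator case analysis, invoking Theorem~\ref{thm:thm-trans} for equalities inherited from children, direct arguments for the equalities freshly forced by selections and equijoins, and the intersection-of-classes treatment of union as the one non-trivial case. The only (welcome) refinement is that you state explicitly the local invariant --- $t.a=t.b$ holds on every output tuple, hence inserting $\selection_{a=b}$ is the identity and Def.~\ref{def:def_ec} follows --- which the paper uses implicitly, e.g.\ in its contradiction argument for the selection case.
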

Recall that we are not claiming that our EC are maximum, becasue when the definition~\ref{def:def_ec} holds, definitely the attributes are being in the same subset of EC. For example, for a query $Q=\projection_{f(a) \to c, g(b) \to d}(R)$, even through $c=d$, but the equivalence of $f(a)$ and $g(b)$ is undecidable. Thus in the following proof, we only prove our rules are correct.

\begin{proof}
\underline{BASE CASE:} Assume a query Q, $d(Q)=n$ and $d(op_n) = n$, we firstly prove the theorem for the EC of $op_n$ is correct of arbitrary query Q is correct. Since $op_n$ must be a relation operator, each element in $EC(op_n)$ only contains one set which only has one attribute, then we get the Rule 1. Now assume we have proved EC for the operator of depth up to n-k from n of arbitrary query Q, we denote the operator in depth n-k as $op_{n-k}$. Now let us prove it for the operator of depth up to n-(k+1) from n, WLOG, let $op_{n-(k+1)}$ be an operator with $d(op_{n-(k+1)}) = n - (k+1)$. Here we consider different type of operator for $op_{n-(k+1)}$.

\underline{CASE 1 ($\selection$):} if $op_{n-(k+1)} = \selection$, 
for every $a=b$ exist as conjunction in $\theta_i$, a and b should be in the same subset of EC. Assume there exists the result tuple of the $\selection$ where $a \neq b$, then obviously this does not fulfill this $\theta_i$, since this is conjunctive condition, the result tuple does not fulfill the selection condition which is contradiction.

\underline{CASE 2 ($\projection$):} if $op_{n-(k+1)} = \projection$, since it is undecidable for the equivalence of expressions, it is unsafe to put new attributes derived by expressions into the same subset of EC. 
By theorem~\ref{thm:thm-trans} we get Rule 3.

\underline{CASE 3 ($\join$):} if $op_{n-(k+1)} =$ $\join$, it is same with \underline{CASE 1 ($\selection$)}, we get Rule 4.

\underline{CASE 4 ($\crossprod$):} if $op_{n-(k+1)}  = \crossprod$, since no new equality was introduced, by theorem~\ref{thm:thm-trans} we get Rule 5.

\underline{CASE 5 ($\aggregation$):} if $op_{n-(k+1)} = \aggregation$, since $\aggregation$ does not change the value of group by attributes and the value of the attributes derived by function can not be predetermined. By theorem~\ref{thm:thm-trans} we get the Rule 6.

\underline{CASE 6 ($\duplicate$):} if $op_{n-(k+1)} = \duplicate$, since $\duplicate$ does not change the value of its attributes and by theorem~\ref{thm:thm-trans} we get Rule 7.

\underline{CASE 7 ($\union$):} if $op_{n-(k+1)} = \union$, since the result tuples of $\union$ either from its left child or from its right child, its EC can only choose the common part by ignoring the renaming, by theorem~\ref{thm:thm-trans} we get Rule 8.

\underline{CASE 8 ($\intersection$):} if $op_{n-(k+1)} = \intersection$, since the result tuples of $\intersection$ exists both in its left child and right child, ignoring the renaming its EC satisifys its both children. Thus by theorem~\ref{thm:thm-trans} we get Rule 9.

\underline{CASE 9 ($\difference$):} if $op_{n-(k+1)} = \difference$, since the result tuples of $\difference$ only from its left child. Thus by theorem~\ref{thm:thm-trans} we get Rule 10.

\underline{CASE 10 ($\omega$):} if $op_{n-(k+1)} = \omega$, since new attribute was intorduced from the function and its value can not be predetermined. Then by theorem~\ref{thm:thm-trans} we get Rule 10.

\end{proof}

For the \underline{CASE 2 ($\projection$)},  it is unsafe to add more into the EC since it is undeciable for the expression. For example, for a query $Q=\projection_{f(a) \to c, g(b) \to d}(R)$, even through $c=d$, but the equivalence of $f(a)$ and $g(b)$ is undecidable.
For the \underline{CASE 7 ($\union$)}, it is unsafe to add more into the EC since the result tuples of $\union$ either from its left child or from its right child. For example, for q query $Q = R \union S$ where relation R(a,b) contains the tuples $\{(1,1)\}$ and S(c,d) contains the tuples $\{(1,2)\}$. Then we know $a=b$ in R, but add \{a,b\} into $EC(R \union S)$ is unsafe, since $c \neq d$ in relation S. 


\begin{Theorem} \label{thm: thm-sel-push}
If a selection can be pushed down through the parent operator to the child operator where this selection is build based on the attributes in the EC(parent), the attrbutes exist in the same set of EC(parent) also should be in a same set of EC(child). 
\end{Theorem}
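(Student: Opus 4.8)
The plan is to reduce the statement directly to Definition~\ref{def:def_ec} by composing two query equivalences. Following the notation of that definition, write the parent operator as $Q_{sub} = op(Q_{sub}')$, a subquery of $Q$, let $E \in EC(op)$ be an equivalence class of the parent, and pick $a,b \in E$. I want to show that $a$ and $b$ lie in a common class of the child, i.e.\ that enforcing $a=b$ on the output of $Q_{sub}'$ leaves $Q$ unchanged.

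First I would apply the defining property of the parent class. Since $a,b \in E \in EC(op)$, Definition~\ref{def:def_ec} gives
\[
Q \equiv Q[\,Q_{sub} \leftarrow \selection_{a=b}(Q_{sub})\,] = Q[\,Q_{sub} \leftarrow \selection_{a=b}(op(Q_{sub}'))\,].
\]
Next I would invoke the push-down hypothesis of the statement: the selection $\selection_{a=b}$, being built over attributes of $EC(op)$, can be pushed through $op$, i.e.\ $\selection_{a=b}(op(Q_{sub}')) \equiv op(\selection_{a=b}(Q_{sub}'))$. Substituting this equivalence into the line above yields
\[
Q \equiv Q[\,Q_{sub} \leftarrow op(\selection_{a=b}(Q_{sub}'))\,] = Q[\,Q_{sub}' \leftarrow \selection_{a=b}(Q_{sub}')\,],
\]
which is exactly the condition of Definition~\ref{def:def_ec} for $a$ and $b$ to belong to the same equivalence class of the root operator of $Q_{sub}'$. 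Since $E$ is generated from such pairs by transitive closure under ${\cal E}^*$, the entire class is then inherited by the child, which is what we wanted.

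The main obstacle is pinning down precisely when the push-down hypothesis holds, since this is exactly the step that can fail. A selection $\selection_{a=b}$ is push-able through $op$ only when $a$ and $b$ already appear in $\schema{Q_{sub}'}$ and $op$ does not modify their values; this breaks for attributes that $op$ newly introduces or whose origin across $op$ is ambiguous --- projection/aggregation/window outputs and, crucially, the two sides of a $\union$ or $\difference$. This is the reason the statement excludes $\union$ and $\difference$ and why the top-down rules of Table~\ref{tab:top-down} first intersect each propagated class with $\schema{Q_{sub}'}$ (and rename across the union boundary via $[\schema{R}/\schema{S}]$) before pushing it down. I would therefore make the hypothesis explicit as ``$\selection_{a=b}$ is push-able through $op$'', and then discharge it operator-by-operator for the cases covered by Table~\ref{tab:top-down}, mirroring the value-preservation argument already used for the bottom-up direction in Theorem~\ref{thm:thm-trans}.
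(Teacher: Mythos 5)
Your proposal is correct and follows essentially the same route as the paper: insert $\selection_{a=b}$ at the parent via Definition~\ref{def:def_ec}, move it to the child using the push-down hypothesis, and apply Definition~\ref{def:def_ec} again to conclude membership in the child's class (the paper phrases this as a one-line contradiction over a worked example). Your write-up is in fact the more rigorous, fully general version of the paper's sketch, and your closing observation that the real content lies in discharging the push-down hypothesis operator-by-operator matches how the paper actually uses this theorem in its subsequent case analysis of Table~\ref{tab:top-down}.
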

For example, a query $Q=\projection_{a,b}(\projection_{a,b,c}(R))$ with $EC(\projection_{a,b}) = \{\{a,b\}\}$. By definition~\ref{def:def_ec} we can add a selection $\selection_{a=b}$ on query Q such that $Q=\selection_{a=b}(\projection_{a,b}(\projection_{a,b,c}(R)))$. Since pushing down $\selection_{a,b}$ does not affect the result of query Q, then $Q=\projection_{a,b}(\selection_{a=b}(\projection_{a,b,c}(R)))$. By definition~\ref{def:def_ec}, $\{a,b\} \in E$ where $E \in EC(\projection_{a,b,c})$.

\begin{proof}
Reconsider above example, assume after pushing down $\selection_{a=b}$, for any  $E \in EC(\projection_{a,b,c})$, $\{a,b\} \not \in E$ which is contradict with definition~\ref{def:def_ec}, thus proof done. 
\end{proof}

\begin{Theorem} 
Top-down inference of property EC in Table~\ref{tab:top-down} is correct.
\end{Theorem}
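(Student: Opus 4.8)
The plan is to mirror the structure of the earlier correctness proofs by inducting on the depth $d$ of an operator, but this time proceeding \emph{from the root downward}: the base case is the root operator at depth $1$, whose $EC$ is left unchanged by the top-down pass and is therefore correct by the already-established correctness of the bottom-up inference in Table~\ref{tab:bottom-up}. The semantic anchor throughout is Definition~\ref{def:def_ec}: attributes $a$ and $b$ lie in a common class of $EC(op)$ exactly when the selection $\selection_{a=b}$ may be enforced on the output of $op$ without changing the result of the whole query. Each top-down rule refines a child's classes by \emph{adding} the equalities that the (already correct) parent is permitted to enforce and then closing under ${\cal E}^*$. Since ${\cal E}^*$ is just repeated application of transitivity, it preserves soundness (a transitive consequence of enforceable equalities is again enforceable); so the entire argument reduces to showing that each equality contributed by the parent can be legitimately propagated to the child.

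For the inductive step I would fix an operator $op$ at depth $n$ whose $EC$ is sound by the induction hypothesis, and verify operator by operator that the rule for $op$ produces a sound $EC$ for each input at depth $n+1$. The workhorse is Theorem~\ref{thm: thm-sel-push}: if $a=b$ can be enforced on $op$'s output and the selection $\selection_{a=b}$ can be pushed through $op$ onto the relevant input, then $a$ and $b$ must already lie in a common class of that input's $EC$. Thus for selection, join, crossproduct, intersection, and duplicate elimination, where a selection on surviving attributes pushes down unchanged, I simply take the parent classes (restricted to the child's schema) together with the child's own bottom-up classes and apply ${\cal E}^*$. Because the theorem claims only soundness and not maximality (as was already flagged for the bottom-up direction), it suffices that every equality we insert is genuinely enforceable; we never need to argue that we have found all of them.

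The cases that require real care are those where parent and child do not share the same attributes. For projection the parent's output equality $b_i=b_j$ must be translated back to the input attributes feeding $b_i$ and $b_j$, which is why the rule harvests a pair $\{a_i,a_j\}$ only when $b_i$ and $b_j$ already lie in a common parent class. For union and intersection the parent's classes must be read against $S$ after the renaming $\schema{R}/\schema{S}$, yet both inputs still receive the pushed-down equalities, because every output tuple originates in (union) or is common to (intersection) the inputs. For aggregation and the window operator, intersecting each parent class with $\schema{R}$ discards the freshly produced attribute $F(a)$ (resp.\ $x$), which does not occur in the child and through which no selection can be pushed. The genuine exception, and the step I expect to be the main obstacle, is set difference (Rule~9): a selection on surviving attributes pushes soundly onto the left input $R$, but it cannot be pushed onto the subtrahend $S$, since altering which tuples $S$ contributes would change which tuples are removed from $R$ and hence the result. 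The precondition of Theorem~\ref{thm: thm-sel-push} therefore fails for $S$, and the rule conservatively resets $EC(S)$ to the trivial singleton classes $\{\{a\}\mid a\in\schema{S}\}$. Making this failure of push-down through the subtrahend precise is the delicate part; every remaining operator follows the uniform template of pushing the selection down and closing under ${\cal E}^*$.
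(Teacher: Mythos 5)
Your proposal is correct and follows essentially the same route as the paper's proof: a per-operator case analysis in which each parent-level equality is justified for the child by pushing the corresponding selection through the operator via Theorem~\ref{thm: thm-sel-push}, with renaming handled for projection, union, and intersection, new attributes discarded for aggregation and window, the right input of difference reset to trivial singleton classes, and the result finally unioned with the child's own classes and closed under ${\cal E}^*$. Your treatment of the set-difference subtrahend is in fact spelled out more carefully than the paper's, but the underlying argument is the same.
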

\begin{proof}
Assume operator $op$ is the parent of operator $op_{child}$ and $EC(child)$ is based on $EC(op)$, 
thus now we consider different type of operator for $op$. (If $op$ contains two children, we split $EC(op_{child})$ as $EC(op_{left})$ and $EC(op_{right})$). 

\underline{CASE 1 ($\selection$):} if $op = \selection$, since any selection can be pushed down throught a selection, by theorem~\ref{thm: thm-sel-push} proof done, we get the Rule 1.


\underline{CASE 2 ($\projection$):} if $op = \projection$, by reverse the renaming, selection can be pushed down. By theorem~\ref{thm: thm-sel-push} proof done, we get the Rule 2.

\underline{CASE 3 ($\join$):} if $op =$ $\join$, if both attributes in a selection exist in the same child, then this selection can be pushed down. By theorem~\ref{thm: thm-sel-push} proof done, we get the Rule 3.


\underline{CASE 4 ($\crossprod$):} if $op = \crossprod$, the same with CASE 3, then we get Rule 4.

\underline{CASE 5 ($\aggregation$):} if $op = \aggregation$, the selection can be pushed down where its attributes exist in the group by, then by theorem~\ref{thm: thm-sel-push} proof done, we get the Rule 5.


\underline{CASE 6 ($\duplicate$):} if $op = \duplicate$, any selection can be pushed down throught a $\duplicate$, by theorem~\ref{thm: thm-sel-push} proof done, we get the Rule 6.


\underline{CASE 7 ($\union$):} if $op = \union$, since the Rule 8 in Table~\ref{tab:bottom-up} the selection can be pushed down to both children ($op_{left}$ and $op_{right}$) of $op$. By theorem~\ref{thm: thm-sel-push} proof done, we get the Rule 7.


\underline{CASE 8 ($\intersection$):} if $op = \intersection$, since the result tuples of $op$ are exist in both $op_{left}$ and $op_{right}$, the selection can be pushed to to the $op_{left}$ and after reversing the renaming the selection can be pushed to to the $op_{right}$. By theorem~\ref{thm: thm-sel-push} proof done, we get the Rule 8.


\underline{CASE 9 ($\difference$):}, if $op = \difference$, the selection can only be pushed down to the left child of $op$ ($op_{left}$) since the tuples in $op_{left} \difference op_{right}$ does not exist in $op_{right}$. By theorem~\ref{thm: thm-sel-push} proof done, we get the Rule 9.

\underline{CASE 10 ($\omega$):}, if $op = \omega$, the selection can be pushed down where its attributes exist in the O and G	, then by theorem~\ref{thm: thm-sel-push} proof done, we get the Rule 10.

\end{proof}

It is worth to mention that at last we union the result of each case with EC(child) to avoid losing any equivlent attributes and applying the ${\cal E}^* $ to remove the duplicated sets.

\begin{Theorem}
Bottom-up inference of property set in Table~\ref{tab:top-down-set} is correct.
\end{Theorem}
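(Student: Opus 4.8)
The plan is to prove the rules in Table~\ref{tab:top-down-set} correct by structural induction along the top-down traversal, mirroring exactly the argument already used for the \textit{icols} property: the induction is on the depth $d(op)$ of an operator (topmost operator at depth $1$, increasing downward), and at each operator I would check that the value assigned to its child (or children) agrees with Definition~\ref{def:def_set}. The key observation that makes the case analysis routine is that \textit{set} is purely syntactic: $set(op)$ is true precisely when some duplicate-elimination operator $op'$ lies on all paths from $op$ to the root with no operator from $\{\aggregation, \omega\}$ strictly between $op$ and $op'$. Thus each rule only has to track, for a given child, whether such a witness $op'$ still exists after stepping from a parent down to that child.

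For the base case, the operator directly below the virtual root $\circledast$ is assigned $set = false$, which is correct because it has no proper ancestor and so admits no witness $op'$. For the inductive step I would assume the property is computed correctly for every operator of depth at most $n$ (in particular for the parent $op$ at depth $n$ of the operator under consideration) and then case-split on the type of $op$. If $op \in \{\aggregation, \omega\}$, the rule forces the child's value to $false$: any candidate $op'$ for the child would have $op$ itself lying strictly between the child and $op'$, and $op$ is exactly one of the two forbidden types, so no witness survives. If $op = \duplicate$, then $op$ is an immediate ancestor of the child and is itself a duplicate elimination, so the condition holds vacuously (nothing lies strictly between child and this $op'$); this is the contribution $true$ in the rule $set(R) = set(R) \wedge true$. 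For every remaining operator type, $op$ is neither $\aggregation$ nor $\omega$, so a witness $op'$ for $op$ remains a witness for the child (adding the single non-forbidden operator $op$ to the path changes nothing) and conversely; hence the child inherits $set(op)$, which is the bare \textit{set} on the right-hand side of each such rule.

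The last ingredient, and the step I expect to be the main obstacle, is justifying the conjunctive accumulation $set(R) = set(R) \wedge \ldots$ that appears in every rule. Because instrumented expressions are DAGs with shared subexpressions, an operator may have several parents, and Definition~\ref{def:def_set} demands a witness on \emph{all} paths to the root (the $\allAncestor$ quantifier). I would argue that processing each parent contributes the boolean ``a valid witness exists along this parent's direction,'' and that visiting $R$ once from each parent and ANDing the contributions -- with the left-hand $set(R)$ acting as the accumulator over parents already processed -- computes exactly the ``on all paths'' condition. The delicate point to discharge is that the conjunction of per-direction contributions coincides with the existence of a duplicate elimination common to the relevant paths rather than merely a possibly different $\delta$ on each path; I would settle this by appealing to the intended use of \textit{set}, namely rewrites that are insensitive to tuple multiplicities (as in the correctness proof of Rule~\eqref{eq:duplicate-remove-set}), for which it suffices that along each path some $\delta$ with no intervening $\aggregation$ or $\omega$ exists -- precisely what the conjunction records.
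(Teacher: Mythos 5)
Your proposal is correct and follows essentially the same route as the paper's own proof: induction on operator depth with a case split on the parent's type into $\{\aggregation,\omega\}$ (child forced to $false$), $\duplicate$ (child set to $true$), and all remaining operator types (child inherits the parent's value), with the base case assigning $false$ below the virtual root. Your extra discussion of the conjunctive accumulation for operators with multiple parents addresses a point the paper's proof silently glosses over, and your observation that the per-path reading of the property is what the correctness of Rule~\eqref{eq:duplicate-remove-set} actually needs is a reasonable way to discharge it.
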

\begin{proof}
\underline{BASE CASE:} Let $op_1$ be the operator of depth 1, we firstly prove the theorem for the set of $op_1$ of arbitrary query Q is correct no matter what the type of $op_1$ is. By definition\ref{def:def_set}, since $set(op) = true$, we initialize $set(op_1) = false$. Now assume we have proved set for the operator of depth up to n of arbitrary query Q, we denote the operator in depth n as $op_n$ and use $set(op_n)$ to denote its set. Now let us prove it for the operator of depth up to n+1, WLOG, let $op_{n+1}$ be an operator with $d(op_{n+1}) = n + 1$ and use $set(op_{n+1})$ to denote its set. Since $op_{n+1}$ is based on $op_n$, we discuss different types of $op_n$. 

\underline{CASE 1 ($\{\selection, \projection, \join, \crossprod, \union, \intersection, \difference\}$):} If $op_n \in \{\selection, \projection, \join, \crossprod, \union, \intersection, \difference\}$, by definition~\ref{def:def_set} only $\{\aggregation, \omega\}$ affect the set property, $set(op_{n+1})$ have to keep the same with the $set(op_n)$. Thus we initialize $set(op) = true$ while $d(op) \neq 1$, then $set(op_n) \wedge set(op_{n+1})$ can be used to keep the $new$ $set(op_n) = set(op_{n+1})$. 

\underline{CASE 2 ($\duplicate$):} If $op_n = \duplicate$, by definition~\ref{def:def_set}, the set property of $\duplicate$ in the downstream of $op_n$ have to be true if no $\{\aggregation, \omega\}$ exist by using the $new$ $set(op_{n+1}) = set(op_{n+1}) \wedge true$. 

\underline{CASE 3 ($\{\aggregation, \omega\}$):} If $op_n \in \{\aggregation, \omega\}$, by definition~\ref{def:def_set}, we have to keep the set property of $\duplicate$ in the downstream of $op_n$ equal to false by using  $set(op_{n+1}) = false$.

\end{proof}

\begin{Theorem}
Bottom-up inference of property key in Table~\ref{tab:bottom-up-key} is correct.
\end{Theorem}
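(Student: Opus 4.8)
The plan is to prove correctness by structural induction bottom-up over the query, mirroring the preceding correctness proofs, and to verify for each operator that every key produced by its rule satisfies Definition~\ref{def:def_keys}. The crucial point to keep in mind throughout is that this definition bundles two requirements: an inferred key $E$ must (i) uniquely determine each output tuple ($t.E = t'.E \rightarrow t = t'$) and (ii) guarantee that the output is duplicate-free ($t^n \rightarrow n \leq 1$). Both must be re-established at every step, so for operators that manipulate multiplicities --- projection (which sums multiplicities of merged tuples per Table~\ref{tab:rel-algebra-def}) and crossproduct (which multiplies them) --- I would track multiplicities explicitly rather than reasoning only about tuple identity. For the base case (a base relation) the declared candidate keys satisfy both conditions by assumption.

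For the inductive step I would treat the operators in roughly increasing order of difficulty. The selection, set difference, intersection, and window rules (Rules 1, 10, 9, 11) are immediate: each returns a sub-multiset of an input (or a multiplicity-preserving one-to-one extension of it), so any super key of that input is inherited together with its multiplicity bound. For $\projection$ (Rule 2), when a key $E \subseteq \{a_1, \ldots, a_n\}$ is fully retained, agreement on the renamed key forces agreement on $E$ and hence tuple equality by the induction hypothesis; this injectivity also shows no two distinct input tuples collapse, so multiplicities stay bounded by $1$. The two $\aggregation$ rules (Rules 5 and 6) follow from the semantics in Table~\ref{tab:rel-algebra-def}: aggregation emits exactly one tuple per group, so the group-by attributes form a super key and the output is duplicate-free, and in the no-group-by case a single tuple is produced so $\{F(a)\}$ is trivially a key. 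The $\union$ rule (Rule 8) returns $\emptyset$ and is vacuously sound. For $\duplicate$ (Rule 7) I would use that the operator forces every multiplicity to $1$; existing input keys are thus preserved, and when the input has no key the full schema $\schema{R}$ becomes one, since distinct tuples of a duplicate-free relation differ on some attribute.

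The main obstacle is the equality join (Rule 3). Here I would first record the functional dependencies $a \to b$ and $b \to a$ that hold on the join output because $t.a = t.b$ for every result tuple, exactly as sketched in the surrounding text. Given $k \in key(R)$ and $k' \in key(S)$, the pair clearly determines each matching tuple pair; the substitution of $a$ for $b$ (and vice versa) licensed by these dependencies then lets one drop $b$ (resp. $a$) while still determining the tuple, yielding the two families $k \cup k' - \{b\}$ and $k \cup k' - \{a\}$. I would argue the multiplicity bound separately: each output tuple arises from a unique pair of input tuples whose multiplicities are each $\leq 1$ by the induction hypothesis, so their product is $\leq 1$. Finally I would justify the outer $MIN$: discarding a key that is a superset of another never loses soundness, since the contained key already satisfies Definition~\ref{def:def_keys}, and it keeps the inferred set minimal as claimed. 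The crossproduct rule (Rule 4) is the degenerate case of this argument with no join condition and hence no substitution, so I would present it as a corollary of the join analysis.
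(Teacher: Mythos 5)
Your proof is correct and takes essentially the same route as the paper's own proof: an operator-by-operator verification that each inferred key satisfies Definition~\ref{def:def_keys}, with matching arguments for each case (sub-bag preservation for selection, intersection, difference, and the window operator; one-tuple-per-group for aggregation; the $a \to b$, $b \to a$ substitution for the equality join, which the paper itself sketches in the prose accompanying Table~\ref{tab:bottom-up-key}). The only substantive difference is that you prove the projection, join, and crossproduct cases directly and explicitly re-establish the multiplicity clause ($t^n \rightarrow n \leq 1$) at every step, whereas the paper delegates those three cases to~\cite{SD96}; your version is therefore more self-contained but not a different argument.
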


Recall that we are not claiming that our EC are maximum, such that we can not guarantee that our key property is complete and if it is a candidate key. For example, for a query $Q=\projection_{f(a) \to c, g(b) \to d}(R)$, even through $c=d$, but the equivalence of $f(a)$ and $g(b)$ is undecidable. 

\begin{proof}
Assume operator $op$ is the parent of operator $op_{child}$ and $key(op)$ is based on $key(op_{child})$, 
we consider different type of operator for $op$. (If $op$ contains two children, we split $key(op_{child})$ as $key(op_{left})$ and $key(op_{right})$)

\underline{CASE 1 ($\selection$):} if $op = \selection$, since the output tuples of $\selection$ is a subset of its input, if every tuples in the input is satisify the definition~\ref{def:def_keys}, then the output is also satisifiy the definition~\ref{def:def_keys}. Thus we get the Rule 1.

\underline{CASE 2 ($\projection, \join, \crossprod$):} if $op = \projection$ or $\join$ or $\crossprod$, these three operators are already proved in Section 5.2.1 of Paper~\cite{SD96}.



\underline{CASE 3 ($\aggregation$):} if $op = \aggregation$, since the new key can be added via GROUP BY operation and the definition\ref{def:def_keys} holds. However, if the original key is a subset of the new key, we should keep the original one. We get the Rule 5


\underline{CASE 4 ($\duplicate$):} if $op = \duplicate$, since $\duplicate$ only removes the duplicated tuples such that the output tuples of $\duplicate$ is a subset of its input, if every tuples in the input is satisify the definition~\ref{def:def_keys}, then the output is also satisifiy the definition~\ref{def:def_keys}. But if no key exist in the input, 
every output attributes should be the key. Thus we get the Rule 6.


\underline{CASE 5 ($\union$):} if $op = \union$, since after $\union$, the output may contain the duplicated tuples such that the definition~\ref{def:def_keys} does not hold. Thus no key exist, we get Rule 7.


\underline{CASE 6 ($\intersection$):} if $op = \intersection$, since the output tuples of $\intersection$ is a subset of its left input tuples and also be a subset of its right input tuples, whether the $key(op) = key(op_{left})$ or $key(op) = key(op_{right})$, the definition~\ref{def:def_keys} holds, we get Rule 8.

\underline{CASE 7 ($\difference$):}, since $\difference$ only removes some tuples of its left child such that the output tuples of $\difference$ is a subset of its left input, if every tuples in the left input is satisify the definition~\ref{def:def_keys}, then the output is also satisifiy the definition~\ref{def:def_keys}, thus when $key(op) = key(op_{left})$,  definition~\ref{def:def_keys} holds, we get Rule 9.

\underline{CASE 8 ($\omega$):}, the output tuples of $\omega$ is a subset of the input tuples only with one additional attribute derived from the function, such that if every tuples in the input is satisify the definition~\ref{def:def_keys}, then the output is also satisifiy the definition~\ref{def:def_keys}. Thus we get the Rule 10.
\end{proof}





\section{Instrumentation Choices}
\label{sec:transf-appl-during}

\parttitle{Window vs. Join}\label{sec:window-vs-join}
The \textit{Join} method for instrumenting an aggregation operator for provenance capture was first used by Perm~\cite{glavic2013using}. To propagate provenance from the input of the aggregation to produce results annotated with provenance, the original aggregation is computed and then joined with the provenance of the aggregations input on the group-by attributes. This will match the aggregation result for a group with the provenance of tuples in the input of the aggregation that belong to that group (see~\cite{glavic2013using} for details). For instance, $_{b}\aggregation_{sum(a)}(R)$ with ${\bf R} = (a,b)$ would be rewritten into $\projection_{b,sum(a),P(a),P(b)}(_{b}\aggregation_{sum(a)}(R) \join_{b = b'} \projection_{b\to b', a \to P(a), b \to P(b)}(R))$. Alternatively, the aggregation can be computed over the input with provenance using the window operator $\omega$ by turning  
the group-by into a partition-by. The rewritten expression is $\projection_{b,sum(a),P(a),P(b)}(_{b \|}\omega_{sum(a)}(R)(\projection_{b\to b', a \to P(a), b \to P(b)}(R)))$.
The \textit{Window} method has the advantage that 
no additional joins are introduced. However, as we will show in Sec.~\ref{sec:experiments}, the \textit{Join} method is superior in some cases and, thus, the choice between these alternatives should be cost-based.


\parttitle{FilterUpdated vs. HistJoin}
Our approach for capturing the provenance of a transaction $T$~\cite{AG16a,AG17} only returns the provenance of tuples that were affected by $T$. We consider two alternatives for achieving this. 
The first method is called \textit{FilterUpdated}.  Consider a transaction $T$ with $n$ updates and let $\theta_i$ denote the condition (\lstinline!WHERE!-clause) of the $i^{th}$ update. Every tuple updated by the transaction has to fulfill at least one $\theta_i$. Thus, this set of tuples can be computed by applying a selection on condition $\theta_1 \vee \ldots \vee \theta_n$ to the input of reenactment. The alternative called \textit{HistJoin} uses time travel to determine based on the database version at transaction commit which tuples where updated by the transaction. It then joins this set of tuples with the version at transaction start to recover the original inputs of the transaction. For a detailed description see~\cite{AG16a}.
\textit{FilterUpdated} is typically superior, because it avoids the join applied by \textit{HistJoin}. However, for transactions with a large number of operations or complex \lstinline!WHERE!-clause conditions, the cost of evaluating the selection condition $\theta_1 \vee \ldots \vee \theta_n$ can be higher than the cost of the join.


\begin{algorithm}[t]
  \caption{CBO}
  \label{alg:cbo-skeleton}
  \begin{algorithmic}[1]
    \Procedure{CBO}{$Q$}
      \State $T_{best} \gets \infty$, $T_{opt} \gets 0.0$
      \While {$\Call{hasMorePlans}{ } \wedge \Call{continue}$}
        \State $t_{before} \gets \Call{currentTime}{ }$
        \State $P \gets \Call{generatePlan}{Q}$ 
        \State $T \gets \Call{getCost}{P}$
        \If {$T<T_{best}$}
          \State $T_{best} \gets T$
          \State $P_{best} \gets P$
        \EndIf
        \State \Call{genNextIterChoices}{ }
        \State $t_{after} \gets \Call{currentTime}{ }$
        \State $T_{opt} = T_{opt} + (t_{after} - t_{before})$
      \EndWhile
      \State \Return $P_{best}$
    \EndProcedure
        
  \end{algorithmic}
\end{algorithm}

\section{Cost-based Optimization}\label{sec:cbo}

\parttitle{CBO Algorithm} The pseudocode for our CBO algorithm is shown in Algorithm~\ref{alg:cbo-skeleton}. The algorithm consists of a main loop that is executed until the whole plan space has been explored (function $\Call{hasMorePlans}{}$) or until a stopping criterion has been reached (function $\Call{Continue}{}$). 
In each iteration, function $\Call{generatePlan}{}$ takes the output of the parser and runs it through the instrumentation pipeline (e.g, the one shown in Fig.~\ref{fig:cbo-arch}) to produce an SQL query. The pipeline components inform the optimizer about choice points using function $\Call{makeChoice}{}$. The resulting plan $P$ is then costed. 
If the cost $T$ of plan $P$ generated in the current iteration is less than the cost $T_{best}$ of the best plan found so far, then $P$ becomes the next best plan. Finally, we decide which optimization choices to make in the next iteration using function $\Call{genNextIterChoices}{}$.
While on the surface our CBO algorithm resembles standard CBO, the difference lies in implementation of the $\Call{generatePlan}{}$ and $\Call{genNextIterChoices}{}$ functions and their interaction with the optimizer's callback API described below. This API achieves decoupling of enumeration, costing, and plan space traversal order from plan construction. As will be explained further in Sec.~\ref{sec:search-space} this enables the optimizer to enumerate all plans for a blackbox  instrumentation pipeline. 
New choices are discovered at runtime when a step in the pipeline informs the optimizer about a choice point. 

\parttitle{Costing} Our default cost estimation implementation uses the DBMS to create an optimal execution plan for $P$ and estimate its cost. 
This ensures that we get the estimated cost for the plan that would be executed by the backend instead of estimating cost based on the properties of the query alone. However, this is not a requirement in our framework, e.g., a separate costing module may be used for backends that do not apply CBO. 

\parttitle{Search Strategies}
Different strategies for exploring the plan space 
are implemented as different versions  of the $\Call{continue}{}$, $\Call{genNextIterChoices}{}$, and $\Call{makeChoice}{}$ functions. 
The default 
setting guarantees that the whole search space will be explored ($\Call{continue}{}$ returns true). 
Our CBO algorithm keeps track of how much time has been spent on optimization so far ($T_{opt}$) which 
may be used to decide when to stop optimization.
\subsection{Registering Optimization Choices}



We want to make the optimizer aware of choices available in an instrumentation pipeline without having to significantly change existing code.
%
This is achieved by registering choices though a callback interface. 
Thus, it is easy to introduce new choices in any step of an instrumentation pipeline at runtime by adding calls to the optimizer's $\Call{makeChoice}{}$ function
without any modifications to the optimizer and only trivial changes to the code containing the choice.
The callback interface has two purposes: 1) telling the optimizer about when a choice needs to be made and how many alternatives to choose from and 2) allowing it to control which options are chosen.   
%
Recall that we refer to a point in the code where a choice is enforced as 
a \textit{choice point}. 
A choice point 
has a fixed 
number of \textit{options}, 
the optimizer's callback function returns an integer indicating to the caller which option should be taken. 
%

\begin{Example}
  Assume a provenance engine implements the \emph{Join} and \emph{Window} methods as functions \lstinline!instAggJoin! and \lstinline!instAggWin!.
To make a cost-based choice between these methods, we call $\Call{makeChoice}{}$. The parameter passed to this function is the number $n$ of 
choices ($n=2$ in this example).  
$\,$\\[-5mm]
\begin{lstlisting}[language=c] 
if (makeChoice(2) == 0) instAggWin(Q);
else instAggJoin(Q);
\end{lstlisting}
$\,$\\[-8mm]
The optimizer responds 
by returning a number between 0 and $n-1$ representing the choice to be taking. In our example, we would use the  \emph{Window} method if 0 is returned. 
\end{Example}

During one iteration a code fragment containing a call to $\Call{makeChoice}{}$ may be executed several times.
The optimizer treats every call as an independent choice point, e.g., 4 possible combinations of the \emph{Join} and \emph{Window} methods will be considered for instrumenting a query if it has two aggregations.

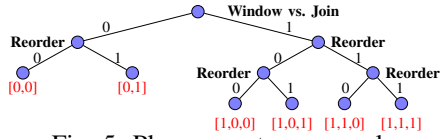
\begin{figure}[t]
  \centering
  $\,$\\[-7mm]
\resizebox{!}{0.3\columnwidth}{
\begin{tikzpicture}
[every node/.style={circle,draw,fill=blue!50,label position={east},inner sep={1mm}},
el/.style={draw=none,fill=none,inner sep={1mm}},
level distance=7mm,
level 1/.style={sibling distance=55mm},
level 2/.style={sibling distance=25mm},
level 3/.style={sibling distance=13mm}
]
  \node[label={[label distance=0.4cm]{\textbf{Window vs. Join}}}] {} 
        child {node[label=left:{\textbf{Reorder}}] {} 
            child {node[label={[label distance=-0.3cm]below:{\textcolor{red}{[0,0]}}}] {} 
              edge from parent node[left,el] {0}
            }
            child {node[label={[label distance=-0.3cm]below:{\textcolor{red}{[0,1]}}}] {} 
              edge from parent node[right,el] {1}
            }
            edge from parent node[left,el,inner sep={4mm}] {0}
        }
        child {node[label={\textbf{Reorder}}] {}
          child {node[label=left:{\textbf{Reorder}}] {} 
            child {node[label={[label distance=-0.3cm]below:{\textcolor{red}{[1,0,0]}}}] {} 
              edge from parent node[left,el] {0}
            }
            child {node[label={[label distance=-0.3cm]below:{\textcolor{red}{[1,0,1]}}}] {} 
              edge from parent node[right,el] {1}
            }
            edge from parent node[left,el] {0}
          }
          child {node[label=right:{\textbf{Reorder}}] {} 
            child {node[label={[label distance=-0.3cm]below:{\textcolor{red}{[1,1,0]}}}] {} 
              edge from parent node[left,el] {0}
            }
            child {node[label={[label distance=-0.3cm]below:{\textcolor{red}{[1,1,1]}}}] {} 
              edge from parent node[right,el] {1}
            }
            edge from parent node[right,el] {1}            
          }
          edge from parent node[right,el,inner sep={4mm}] {1}
        }
  ;
\end{tikzpicture}
}\\[-5mm]
\caption{Plan space tree example}
\label{fig:plan-tree-example}
\end{figure}

\subsection{Plan Space}
\label{sec:search-space}

We now 
look at the shape of the search space for given a query and set of choice points. 
During one iteration we may hit any number of choice points and each choice made 
may effect what other choices have to be made in the remainder of this iteration. 
We use a data structure called \textit{plan tree} that models the plan space shape. In the plan tree each intermediate node represents a choice point, outgoing edges from a node are labelled with options and children represent choice points that are hit next. 
A path from the root of the tree to a leaf node represents a particular sequence of choices that results in the plan represented by this leaf node. 

\begin{Example}
  Assume we use the two choice points: 1) Window vs. Join; 2) reordering join inputs. The second choice point can only be hit if a join operator exist, e.g., if we choose to use the \emph{Window} method then the resulting algebra expression may not have any joins and this choice point would never be hit.
 Assume we have to instrument a query which is an aggregation over the result of a join.   
Fig.~\ref{fig:plan-tree-example} shows the corresponding plan tree. When instrumenting the aggregation, we have to decide whether to use the \textit{Window} (0) or the \textit{Join}
method (1). If we choose 
(0), then we can still decide wether to reorder the inputs of the join or not. If we 
choose (1), then there is an additional join for which we have to decide whether to reorder its input. 
%
The tree is asymmetric, i.e., the number of choices to be made in each iteration (path in the tree) is not constant. 
\end{Example}

\begin{algorithm}[t]
  \caption{Default \textsc{makeChoice} Function}
  \label{alg:callback}
  \begin{algorithmic}[1]
    
    \Procedure{makeChoice}{$numChoices$}
      \If {$len(p_{next}) > 0$}
        \State $choice \gets popHead(p_{next})$
      \Else
        \State $choice \gets 0 $
      \EndIf
        \State $p_{cur} \gets p_{cur} \cList choice$
        \State $n_{opts} \gets n_{opts} \cList numChoices$
      \State \textbf{return} choice
    \EndProcedure
        
  \end{algorithmic}
\end{algorithm}

\begin{algorithm}[t]
  \caption{Default \textsc{genNextIterChoices} Function}
  \label{alg:gen-next-option}
  \begin{algorithmic}[1]
    
    \Procedure{genNextIterChoices}{ }
      \State $p_{next} \gets p_{cur}$
      \For {$i \in \{ len(p_{next}), \ldots, 1\}$} 
        \State $c \gets popTail(p_{next})$
        \State $nops \gets popTail(n_{opts})$
        \If{$c+1 < nops$}
          \State $c \gets c+1$
          \State $p_{next} \gets p_{next} \cList c$
          \State \textbf{break}
        \EndIf
      \EndFor    
      \State $p_{cur} \gets []$
      \State $n_{opts} \gets []$
    \EndProcedure
    
  \end{algorithmic}
\end{algorithm}

\subsection{Plan Enumeration}\label{sec:plan-enumerate}

While the plan space tree encodes all possible plans for a given query and set of choice points, it would not be feasible to fully materialize it, because its size can be exponential in the maximum number of choice points that are  hit during one iteration (the depth $d$ of the plan tree).  
Our default implementation of the $\Call{generateNextPlan}{}$ and $\Call{makeChoice}{}$ functions 
explores the whole plan space using ${\cal O}(d)$ space. 
As long as we know the path taken in the previous iteration (represented as a list of choices as shown in Fig.~\ref{fig:plan-tree-example}) and for each node (choice point) on this path the number of available options, then we can determine what choices should be made in the next iteration to reach the leaf node (plan) immediately to the right of the previous iteration's plan.  
If $p_{cur}$ is the path explored in the previous iteration, then by taking the next available choice as late as possible on the path will lead to the next node on the leaf level.  
Let $p_{next}$ be the prefix of $p_{cur}$ that ends in the new choice to be taken at the latest possible step. If following $p_{next}$  leads to a path is longer than $p_{next}$, then after making $len(p_{next})$ choices the first option should be chosen for the remaining choice points. 

\begin{Example}
Reconsider the plan tree shown in Fig.~\ref{fig:plan-tree-example} and assume the plan created in the previous iteration is $[0,1]$. The second choice point hit does have no additional options, but the first 
one does have one additional option. That is, to reach the next leaf node to the right of $p_{cur} = [0,1]$, we should choose the other option $p_{next} = [1]$. This option leads us to a path that is longer than $len(p_{next}) = 1$, thus we choose the first option for all remaining choice points leading us to the leaf node $[1,0,0]$. In the next iteration we have one more choice available in the last step of the path leading us to $[1,0,1]$. This process continues until no more choices are available.
\end{Example}
%
%
We use square brackets to denote lists, e.g., $p_{next}=[0,1]$ denotes a list with two element $0$ and $1$. We use $[]$ to represent an empty list.
$L \gets L \cList e$ denotes appending element $e$ to list $L$. Functions $popHead(L)$ and $popTail(L)$ remove and return the first (respective last) element of list $L$.

\parttitle{The makeChoice Function} Algorithm~\ref{alg:callback} shows 
as long as there are predetermined choices available (list $p_{next}$) we pick the next choice from this list. If list $p_{next}$ is empty, then we pick the first choice ($0$). In both cases the choice is appended to the current path and the number of available options for the current choice point is appended to list $n_{opts}$.

\parttitle{Determining Choices for the Next Iteration} 
Algorithm~\ref{alg:gen-next-option} determines which options to be picked in the next iteration. We copy the path from the previous iteration (line 2) and then repetitively remove elements from the tail of the path and from the list storing the number of options ($n_{ops}$) until we have removed an element $c$ for which at least one more alternative exits ($c + 1 < n_{ops}$). Once we have found such an element we append $c+1$ as the new last element to the path.

Given a set of choice points, our algorithm is guaranteed to enumerate all plans for an input. 

\begin{Theorem}
Let $Q$ be input query. Algorithm~\ref{alg:cbo-skeleton} iterates over all plans that can be created for the given choice points.
\end{Theorem}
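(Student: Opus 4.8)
The plan is to prove that the algorithm visits the leaves of the plan tree (Fig.~\ref{fig:plan-tree-example}) in left-to-right (lexicographic) order, producing exactly one leaf per iteration, starting at the leftmost leaf and halting immediately after the rightmost one. First I would fix notation: identify each plan with its root-to-leaf path, written as a list of option indices $[c_1, \ldots, c_k]$, and order these paths lexicographically. Since a leaf has no children, no leaf-path is a proper prefix of another, so any two distinct leaves diverge at a unique internal node; hence lexicographic order is a total order on leaves, and it coincides with the left-to-right order of leaves in the tree.

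The first key step is a single-iteration lemma characterizing the leaf produced by one pass of \textsc{generatePlan}. Because the pipeline is deterministic once the outcome of every \textsc{makeChoice} call is fixed, and because \textsc{makeChoice} (Alg.~\ref{alg:callback}) returns the successive entries of $p_{next}$ while they last and $0$ afterwards, one pass follows $p_{next}$ as a prefix and then takes option $0$ at every further choice point, reaching a well-defined leaf. At the end of the pass, $p_{cur}$ records the complete path to that leaf and $n_{opts}$ records the number of options at each node on the path. In particular, the initial iteration (with $p_{next}$ empty) produces the all-zero path, i.e.\ the minimum leaf in lexicographic order.

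The central step is the successor lemma: from the recorded path $[c_1, \ldots, c_k]$ with option counts $[n_1, \ldots, n_k]$, I would show that \textsc{genNextIterChoices} (Alg.~\ref{alg:gen-next-option}) produces $p_{next} = [c_1, \ldots, c_{j-1}, c_j + 1]$, where $j$ is the largest index with $c_j + 1 < n_j$, and that the next iteration therefore produces the immediate lexicographic successor of the current leaf. The argument is a divergence-position analysis: any leaf larger than the current one first differs from it at some position $i$ where it takes an option $> c_i$; to obtain the \emph{smallest} such leaf one maximizes the shared prefix (deepest feasible divergence position), takes the smallest larger option there, and then takes the leftmost continuation. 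Positions to the right of $j$ on the current path are maxed out ($c_i = n_i - 1$, which is why the tail scan skipped them), so $j$ is the deepest position admitting a larger option; the smallest larger option is $c_j + 1$ (valid since $c_j + 1 < n_j$); and the leftmost continuation is exactly the default-$0$ behaviour of \textsc{makeChoice}. This matches the algorithm's output precisely.

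Finally I would assemble these by induction on the iteration count: iteration $t$ emits the $t$-th leaf in lexicographic order. When no incrementable position exists (every $c_i = n_i - 1$), the loop in Alg.~\ref{alg:gen-next-option} empties $p_{next}$ and \textsc{hasMorePlans} reports termination; this occurs exactly at the all-maximal path, i.e.\ the rightmost leaf. Hence the run begins at the minimum leaf, steps through successors without repetition or omission, and stops just after the maximum, so every plan is produced exactly once. I expect the main obstacle to be the successor lemma in the presence of the tree's \emph{asymmetry}: choice points are discovered dynamically and the successor's path may be longer or shorter than the current one, so the divergence-position argument must show that incrementing the rightmost incrementable choice and defaulting to $0$ thereafter yields the true lexicographic successor rather than merely some lexicographically larger leaf. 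A secondary point to pin down is the intended semantics of \textsc{hasMorePlans}, namely that it signals completion exactly when \textsc{genNextIterChoices} finds no choice point with a remaining option.
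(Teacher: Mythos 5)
Your proposal is correct: the lexicographic-successor argument (initial iteration yields the all-zero leftmost leaf; \textsc{genNextIterChoices} increments the deepest non-maxed-out choice and \textsc{makeChoice} defaults to option $0$ thereafter, yielding the immediate successor; induction over iterations) is precisely the formalization of the informal enumeration argument the paper sketches in Sec.~\ref{sec:plan-enumerate}, and the paper itself offers no formal proof of this theorem beyond that sketch. The two caveats you flag --- that the pipeline must be deterministic given the choice outcomes (so the plan tree is well defined and no leaf path is a prefix of another) and that \textsc{hasMorePlans} must signal termination exactly when no incrementable position remains --- are exactly the assumptions the paper leaves implicit, so your treatment is, if anything, more careful than the original.
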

\subsection{Alternative Search Strategies} \label{sec:traversal-strategies}

\parttitle{Traversal Order}
A simple solution for dealing with large search spaces is to define a threshold $\stopThresh$ for the number of iterations and stop optimization once this threshold is reached. However, the search space traversal strategy we have introduced in Sec.~\ref{sec:plan-enumerate} (we call it \textit{sequential-leaf-traversal} in the following) is not suited well for this solution. 
Since it only changes one choice at a time, the plans explored by this strategy for a  plan space that is large compared to threshold $\stopThresh$ are likely quite similar. To address this problem, we have developed a second  strategy which we call \textit{binary-search-traversal}. This strategy approximates a binary search over the leaves of a plan tree. The method maintains a queue of intervals (pairs of paths in the plan tree) initialized with the path to the left-most and right-most leaf of the plan tree. The strategy repeats the following steps until all plans have been explored: 1) fetch an interval $[P_{low},P_{high}]$ from the queue; 2) compute a path that is approximately a prefix of the path to the leaf that lies in the middle of the interval; 3) extend this path to create a plan $P_{middle}$; and 4) push two new intervals to the end of the queue: $[P_{low}, P_{middle}]$ and $(P_{middle}, P_{high}]$. 

\parttitle{Simulated Annealing}
Metaheuristics such as simulated annealing and genetic algorithms have a long tradition in query optimization to deal with large search spaces, e.g., some systems apply metaheuristics for join enumeration once the number of joins exceeds a threshold 
or for cost-based query transformations~\cite{AL06}. 
We have implemented the \textit{Simulated Annealing} metaheuristic. 
This method starts from a randomly generated plan and 
in each step applies a random transformation  to derive a plan $P_{cur}$ from the previous plan $P_{pre}$ (let $C_{cur}$ and $C_{pre}$ denote the costs of these plans). 
If $C_{cur} <C_{pre}$, $P_{cur}$ is used as $P_{pre}$ for the next iteration. 
Otherwise, the choice of whether to discard $P_{cur}$ or use it as the new $P_{pre}$ is made probabilistically. The probability depends on the cost difference $C_{cur} - C_{pre}$ and a parameter $temp$ called the temperature which is decreased over time based on a cooling  rate. Initially, the probability to choose an inferior plan is higher  to avoid getting stuck in a local minima early on. By decreasing the temperature (and, thus also probability) over time, the approach will converge eventually. 
\parttitle{Balancing Optimization vs. Runtime}\label{sec:balance-opt}
%
All strategies discussed so far have the disadvantage that they do not adapt the effort spend on optimization based on how expensive the input query is. Obviously, spending more time on optimization than on execution is undesirable (assuming that provenance requests are typically ad hoc). Ideally, we would like to minimize the sum of the time spend on optimization ($T_{opt}$) and the execution time of the best plan $T_{best}$ by stopping optimization once a cheap enough plan has been found. This is obviously an online problem
, i.e., after each iteration we can decide to either execute the current best plan or continue to produce more plans with the hope to discover a better plan in future iterations. 
The following simple stopping condition results in a 2-competitive algorithm (i.e., $T_{opt} + T_{best}$ is guaranteed to be less than 2 times the minimal achievable cost)  if the time spend in each iteration is bound by a  constant: stop iteration if $T_{best} < T_{opt}$.

\section{Related Work}\label{sec:relatedwork}

Our work is related to optimization techniques that sit on top of standard CBO, to other approaches for compiling non-relational languages into SQL, and to optimization of provenance capture and storage.

\parttitle{Cost-based Query Transformation}
State-of-the-art DBMS apply transformations such as decorrelation of nested subqueries~\cite{SP96} 
in addition to (typically exhaustive) join enumeration and choice of physical operators. Often such transformations are integrated with CBO~\cite{AL06} by iteratively rewriting the input query through transformation rules and then finding the best plan for each rewritten query. 
Typically, metaheuristics (randomized search) are applied to deal with the large search space. Extensibility of query optimizers has been studied in, e.g., ~\cite{graefe1993volcano}. 
While our CBO framework 
is also applied on-top of standard database optimization, we 
 can  turn any choice (e.g., ICs) within an instrumentation pipeline into a cost-based decision while cost-based query transformation is typically limited to algebraic transformations. 
Furthermore, our framework 
has the advantage  
that new optimization choices can be added without modifying the optimizer and with minimal changes to existing code.

\parttitle{Compilation of Non-relational Languages into SQL}\label{sec:rel-non-rel-to-rel}
Approaches that compile non-relational languages (e.g., XQuery~\cite{grust2010let,LK05}) or extensions of relational languages (e.g., temporal~\cite{SJ01} and nested collection models~\cite{CL14}) into SQL face similar challenges as we do.  
Grust et al.~\cite{grust2010let} optimize compilation of XQuery into SQL. The approach heuristically applies algebraic transformations  to cluster join operations 
with the goal to produce an SQL query that can successfully be optimized by a relational database. 
We adopt the idea of inferring properties over algebra graphs introduced in this work.  However, to the best of our knowledge we are the first to integrate these ideas with CBO. Furthermore, we also optimize 
the compilation steps in an instrumentation pipeline. 

\parttitle{Provenance Instrumentation}
Several systems such as \textit{DBNotes}~\cite{bhagwat2005annotation}, \textit{Trio}~\cite{aggarwal2009trio}, \textit{ORCHESTRA}~\cite{karvounarakis2013collaborative}, \textit{Perm}~\cite{glavic2013using}, \textit{LogicBox}~\cite{GA12}, \textit{ExSPAN}~\cite{ZS10}, and \textit{GProM}~\cite{arab2014generic} model provenance as annotations on data and capture provenance by propagating annotations. Most systems apply the \textit{provenance instrumentation} approach described in the introduction by compiling provenance capture and queries into a relational query language (typically SQL).
%
%
Thus, the techniques we introduce in this work are applicable to a wide range  of systems. 

\parttitle{Optimizing Provenance Computation and Storage}
Optimization of provenance has mostly focused on minimizing the storage size of provenance. Chapman et al.~\cite{CJ08a} introduce several techniques for compressing provenance information, e.g., by replacing repeated elements with references and discuss how to maintain such a storage representation under updates. Similar techniques have been applied to reduce the storage size of provenance for workflows that exchange data as nested collections~\cite{AB09}. A cost-based framework for choosing between reference-based provenance storage (the provenance of a tuple is distributed over several nodes) and propagating full provenance (full provenance is propagated alongside the tuple) was introduced in the context of declarative networking~\cite{ZS10}. 
%
%
This idea of storing just enough information to be able to reconstruct provenance through instrumented replay, has also been adopted for computing the provenance for transactions~\cite{arab2014generic,AG16a,AG17} 
and in the Subzero system~\cite{wu2013subzero}. 
Subzero switches between different provenance storage representations in an adaptive manner to optimize the cost of provenance queries.  
Amsterdamer et al.~\cite{AD12} demonstrate how to rewrite a UCQ query with inequalities into an equivalent query with 
provenance of minimal size. 
%
%
Our work is orthogonal to these approaches in that we try to minimize the time for on-demand provenance generation and queries over provenance instead of compressing provenance to minimize storage size. 
It would be interesting to integrate compact representations of provenance with CBO, e.g., choose among alternative compression methods. 
\section{Experiments}\label{sec:experiments}

Our evaluation focuses on measuring 1) the effectiveness of CBO in choosing the most efficient ICs and PATs, 2) the effectiveness of  heuristic application of PATs, 3) the overhead of heuristic and cost-based optimization, and 4) the impact of CBO search space traversal strategies on optimization and execution time.
All experiments were executed on a  machine with 2 AMD Opteron 4238 CPUs, 128GB RAM, and a hardware RAID with  4 $\times$ 1TB 72.K HDs in RAID 5 running commercial DBMS X (name omitted due to licensing restrictions).

To evaluate the effectiveness of our CBO 
vs.  
heuristic optimization choices, 
we compare the performance of instrumented queries generated by the CBO (denoted as \textbf{\textit{Cost}}) against queries generated by selecting a predetermined option for each choice point. Based on a preliminary study we have selected three choice points:
 1) using the \textbf{\textit{Window}} or  \textbf{\textit{Join}} method; 2) using \textbf{\textit{Filter\-Updated}} or \textbf{\textit{HistJoin}} and 3) choosing whether to apply PAT rule~\eqref{eq:duplicate-remove-set} (remove duplicate elimination). If CBO is deactivated, then we always remove such operators if possible. 
The application of the remaining PATs introduced in Sec.~\ref{sec:heuristic} turned out to be always beneficial in our experiments. 
Thus, these PATs are always applied as long as their precondition is fulfilled.
We consider two variants for each of method: activating heuristic application of the remaining PATs (suffix \textbf{\textit{Heu}}) or deactivating them (\textbf{\textit{NoHeu}}).  Unless noted otherwise, results were averaged over 100 runs.



\subsection{Datasets \& Workloads}


\parttitle{Datasets}
\underline{TPC-H}:
We have generated TPC-H benchmark datasets of size 10MB, 100MB, 1GB, and 10GB (SF0.01 to SF10). 
%
\underline{Synthetic}:
For the transaction provenance experiments we use a 1M tuple relation with uniformly distributed numeric attributes. 
We vary the size of the transactional history (this affects
performance, because the database has to store this history to enable time
travel which is used when capturing transaction provenance). Parameter $HX$
indicates $X\%$ of history, e.g., $H10$ represents $10\%$ history (100K tuples).
%
\underline{DBLP}:
This dataset consistes of 8 million co-author pairs 
extracted from DBLP (\url{http://dblp.uni-trier.de/xml/}). 



\parttitle{Simple aggregation queries}
This workloads computes the provenance of queries consisting solely of aggregation operations using the instrumentation technique based on the rewrite rules pioneered in Perm~\cite{glavic2013using} and extended in GProM~\cite{arab2014generic}. An aggregation query consisting of $i$ aggregation operations where each
aggregation operates on the result of the previous aggregation. The leaf operation accesses the TPC-H \texttt{part} table. Every aggregation groups the input on a range of primary key attribute values such that the last step returns the same number of results independent of $i$. 

\parttitle{TPC-H queries}
We have selected 11 
queries from the 22 TPC-H queries to evaluate optimization of provenance computations for complex queries. We use GProM's instrumentation approach to compute provenance.

\parttitle{Transactions}
We use the \textit{reenactment} approach of GProM~\cite{AG16a,AG17} to compute provenance for transactions. 
The transactional workload is run upfront (not included in the measured execution time) and provenance is computed retroactively.
We vary the number of
updates per transaction, e.g., $U10$ is a transaction with 10 updates. 
The tuples to be updated are selected randomly using the primary key of
the relation.  All transactions were executed under isolation
level~\texttt{SERIALIZABLE}. 



\parttitle{Provenance export}
We use the approach from~\cite{NX15} to translate a relational encoding of provenance (see Sec.~\ref{sec:intro}) into PROV-JSON. 
We export the provenance for a query over the TPC-H schema that is a foreign key join across relations nation, customer, and orders. 

\parttitle{Datalog provenance queries}
We use the approach described in~\cite{LS16} using the pipeline shown in Fig.~\ref{fig:DL-rewrite-approach}. The input is a non-recursive Datalog 
query 
$Q$ and a user question asking why (or why-not) a set of tuples is in the result of query $Q$.  
We use the DBLP co-author dataset for this experiment and the following queries. \textbf{Q1}: Returns authors which have co-authors that have co-authors.
\textbf{Q2}: Returns authors that are co-authors, but not of themselves (while this is semantically meaningless, it is a  way to test negation).
\textbf{Q3}: Return pairs of authors that are indirect co-authors, but are not direct co-authors.
\textbf{Q4}: Return start points of paths of length 3 in the co-author graph. For each query we consider multiple why questions that specify the set of results for which provenance should be generated. 
We use Qi.j to denote the $j^{th}$ why question for query Qi.




\begin{figure*}[p]

\begin{minipage}{1\linewidth}
\hspace{-10mm}
{
  \resizebox{0.6\linewidth}{!}{
  \begin{minipage}{0.9\linewidth}
  \centering
  \begin{tabular}{|c|r|r|r|r|r|} \hline 
\rowcolor[gray]{.9}  Queries & NoHeu+Join & Heu+Join & NoHeu+Window & Heu+Window & Cost+Heu\\ \hline 
SAgg 1G & 4.79 & 20.21 & 4.38 & 2.69 & \textbf{0.81} \\ \hline 
SAgg 10G & 44.06 & 524.78 & 42.62& 27.47 & \textbf{7.65} \\ \hline 
    \hline
TPC-H 1G & $+$173,053.17 & \textbf{199.62} & 173,041.27 & 250.18 &  235.79 \\ \hline 
    TPC-H 10G & $+$175,371.02 & \textbf{2,033.71} & 175,530.53 & 2,247.39 & 2,196.01 \\ \hline
  \end{tabular}
\end{minipage}
}
}
\hspace{-20mm}
{
  \resizebox{0.6\linewidth}{!}{
  \begin{minipage}{0.9\linewidth}
  \centering
  \begin{tabular}{|c|r|r|r|r|r|} \hline 
\rowcolor[gray]{.9}  Queries & NoHeu+Join & Heu+Join & NoHeu+Window & Heu+Window & Cost+Heu\\ \hline 
SAgg 1G & 1 & 3.927 & 0.946 & 0.600 & \textbf{0.261} \\ \hline 
SAgg 10G & 1 & 9.148 & 0.984 & 0.655 & \textbf{0.265} \\ \hline 
\hline
    TPC-H 1G & 1  & \textbf{0.187}  & 0.955 & 0.220 &  0.203\\ \hline 
TPC-H 10G & 1  & 0.198 & 0.975  & 0.180 & \textbf{0.174} \\ \hline
  \end{tabular}
\end{minipage}
}
}
\caption{Total runtime (\textbf{Left}) and average runtime (\textbf{Right}) per query relative to NoHeu+Join for \textit{SimpleAgg} and \textit{TPC-H} workloads}
\label{tab:overview-sum-avg-sagg-tpch}
\end{minipage}
\begin{minipage}{0.66\linewidth}
  \begin{minipage}[b]{0.5\linewidth}
  \includegraphics[width=1\linewidth,trim=0 50pt 0 100pt, clip]{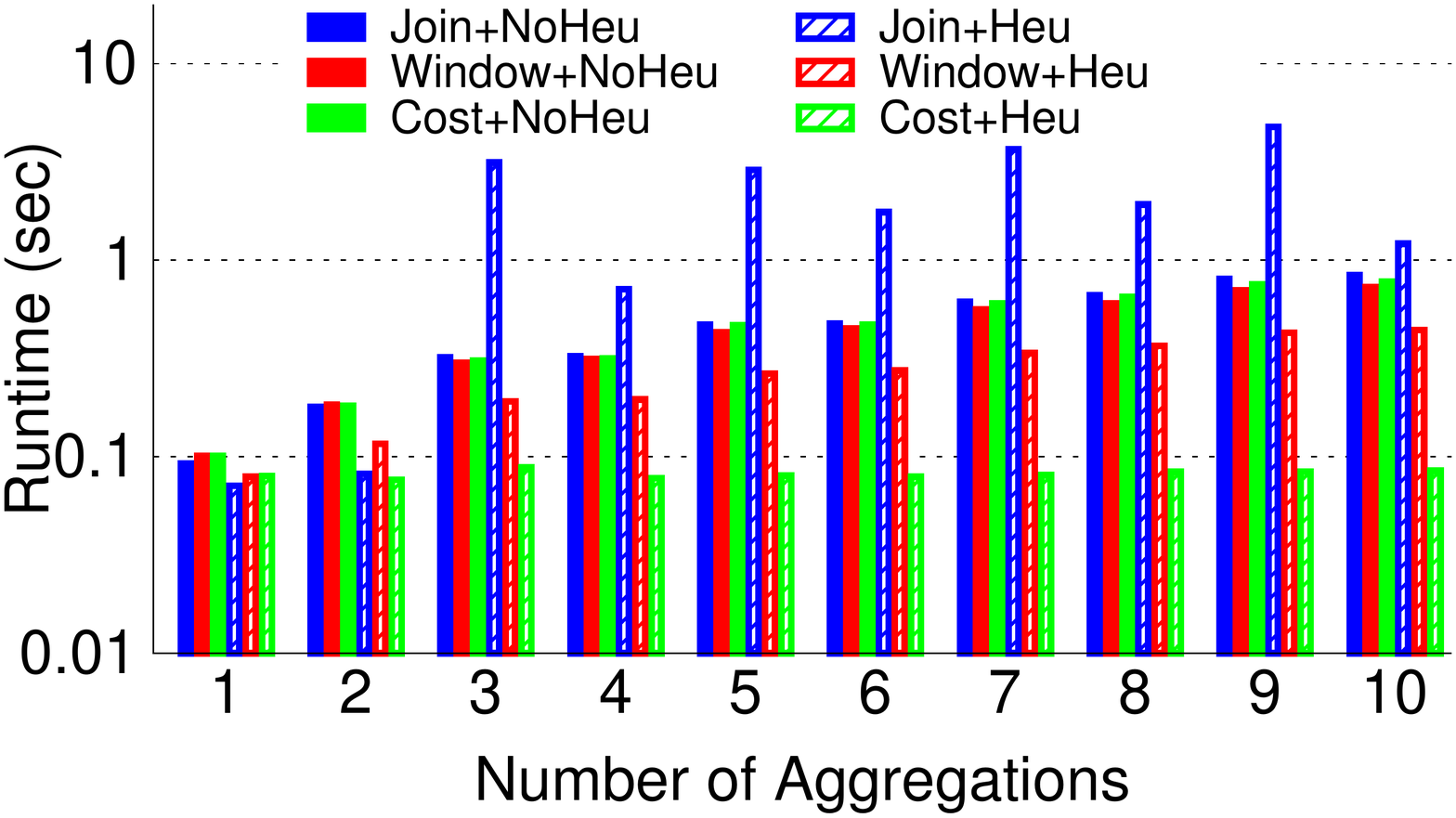}\\[-7mm]
  \caption{1GB \textit{SimpleAgg} Runtime}
  \label{fig:simpleAgg-comb-1GB}  
  \end{minipage}
  \begin{minipage}[b]{0.5\linewidth}
  \includegraphics[width=0.97\linewidth,trim=0 50pt 0 90pt, clip]{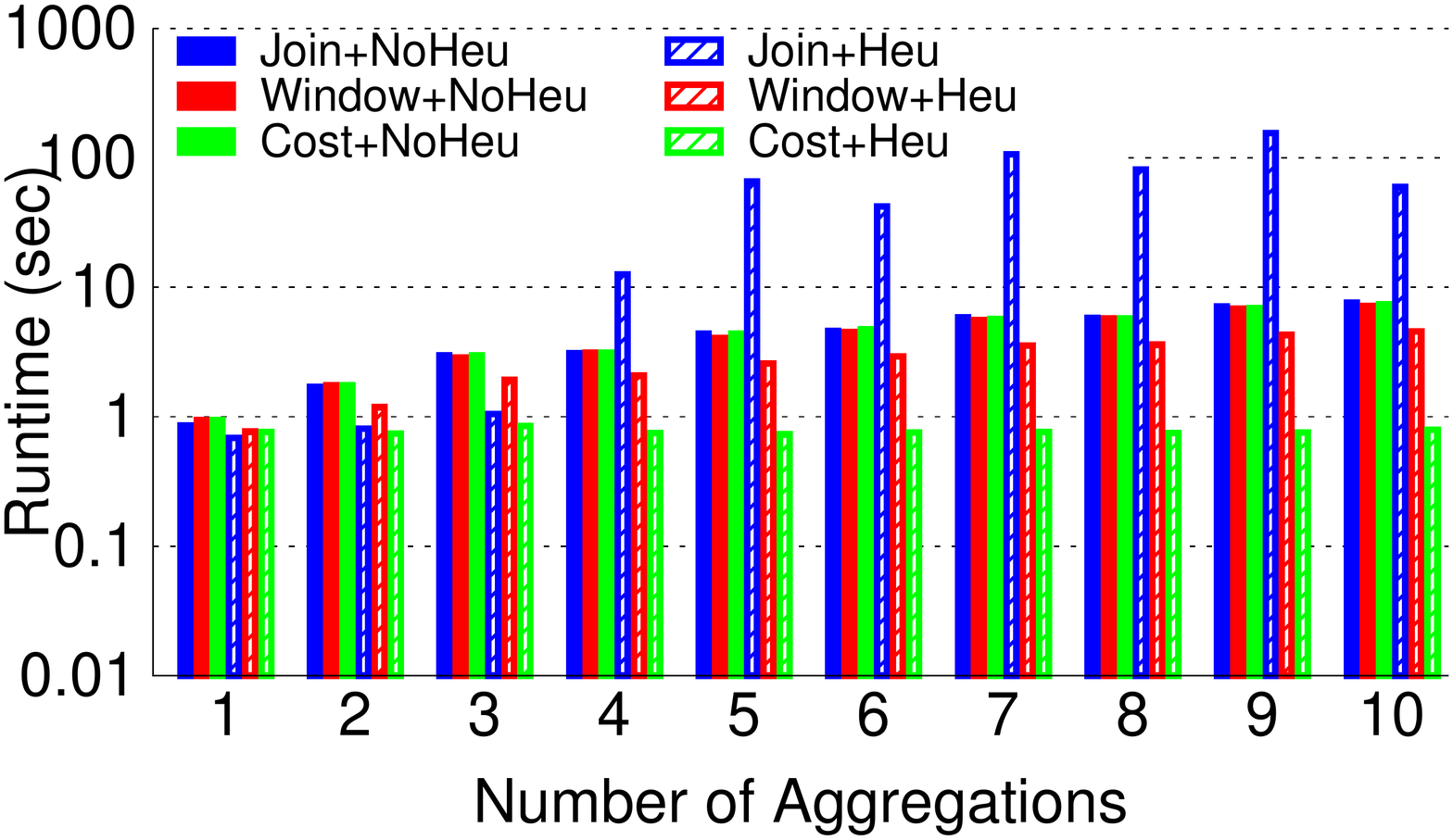}\\[-7mm] 
  \caption{10GB \textit{SimpleAgg} Runtime}
  \label{fig:simpleAgg-comb-10GB}
  \end{minipage}

  \begin{minipage}[b]{0.5\linewidth}
  \includegraphics[width=1\linewidth,trim=0 50pt 0 100pt, clip]{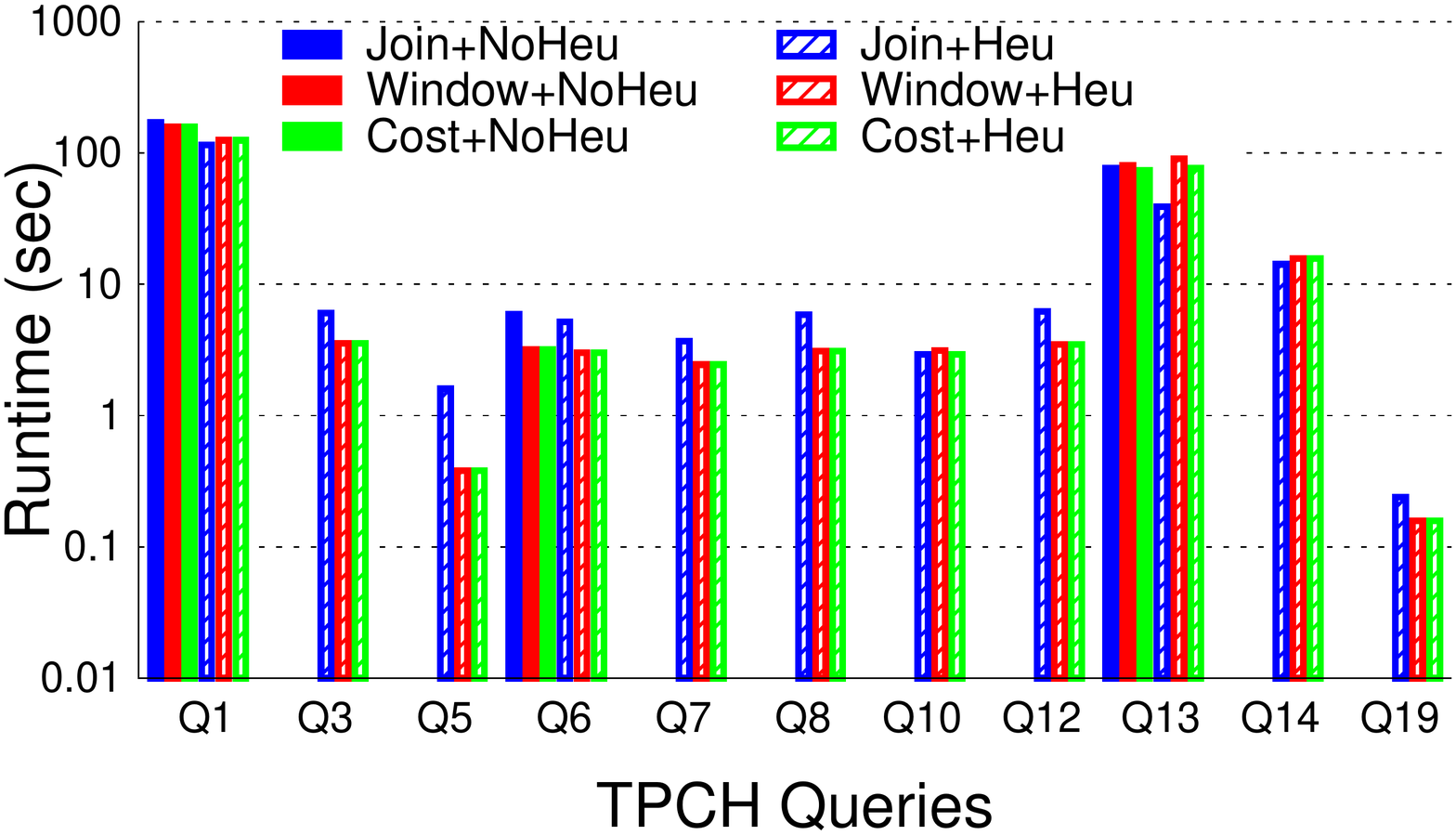}\\[-7mm]
  \caption{Runtime \textit{TPC-H} - 1GB}
  \label{fig:tpch-comb-1GB}  
  \end{minipage}
  \begin{minipage}[b]{0.5\linewidth}
  \includegraphics[width=1\linewidth,trim=0 50pt 0 100pt, clip]{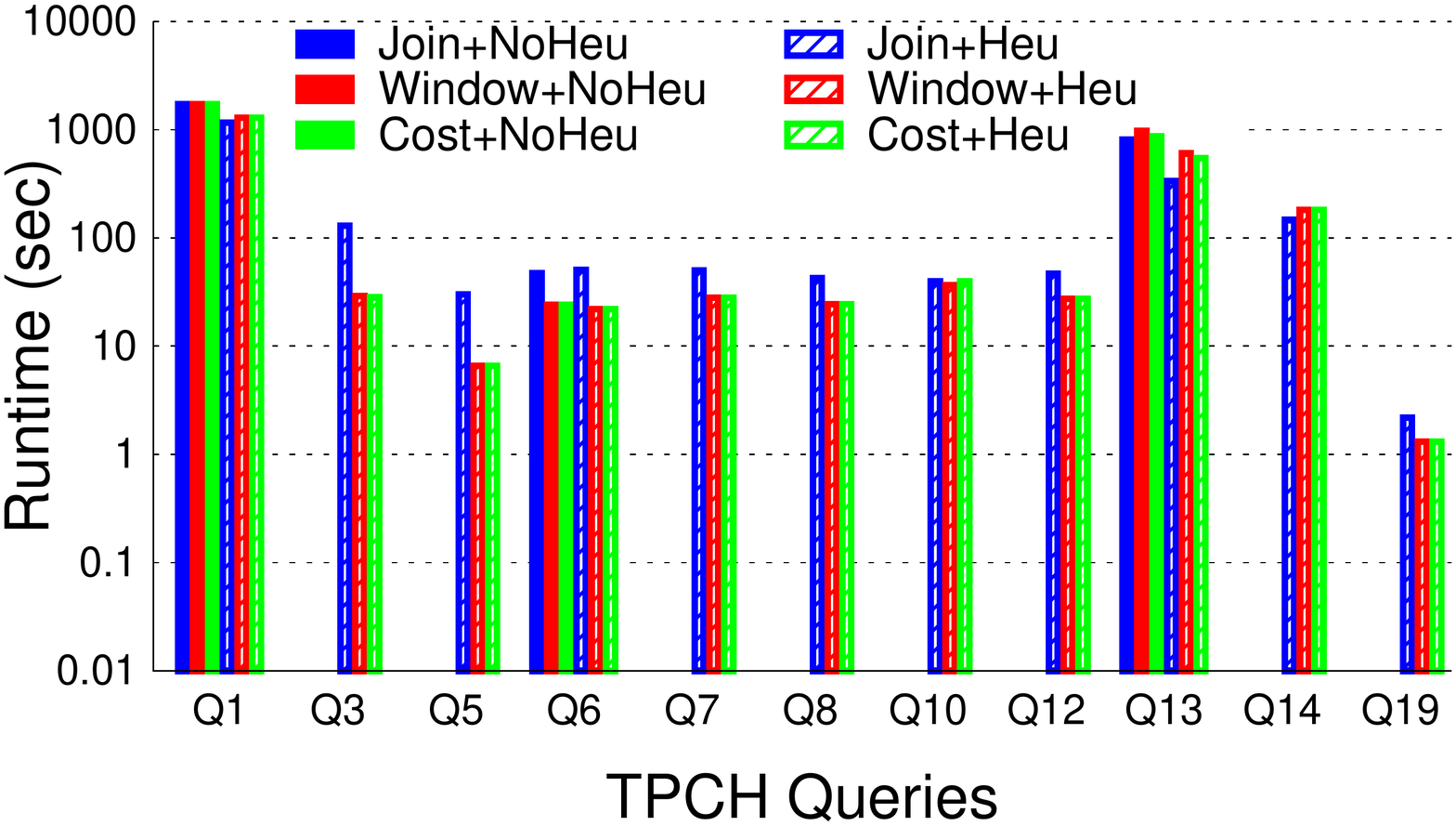}\\[-7mm] 
  \caption{Runtime \textit{TPC-H} - 10GB}
  \label{fig:tpch-comb-10GB}
\end{minipage}


  \begin{minipage}[b]{0.5\linewidth}
  \includegraphics[width=1\linewidth,trim=0 60pt 0 100pt, clip]{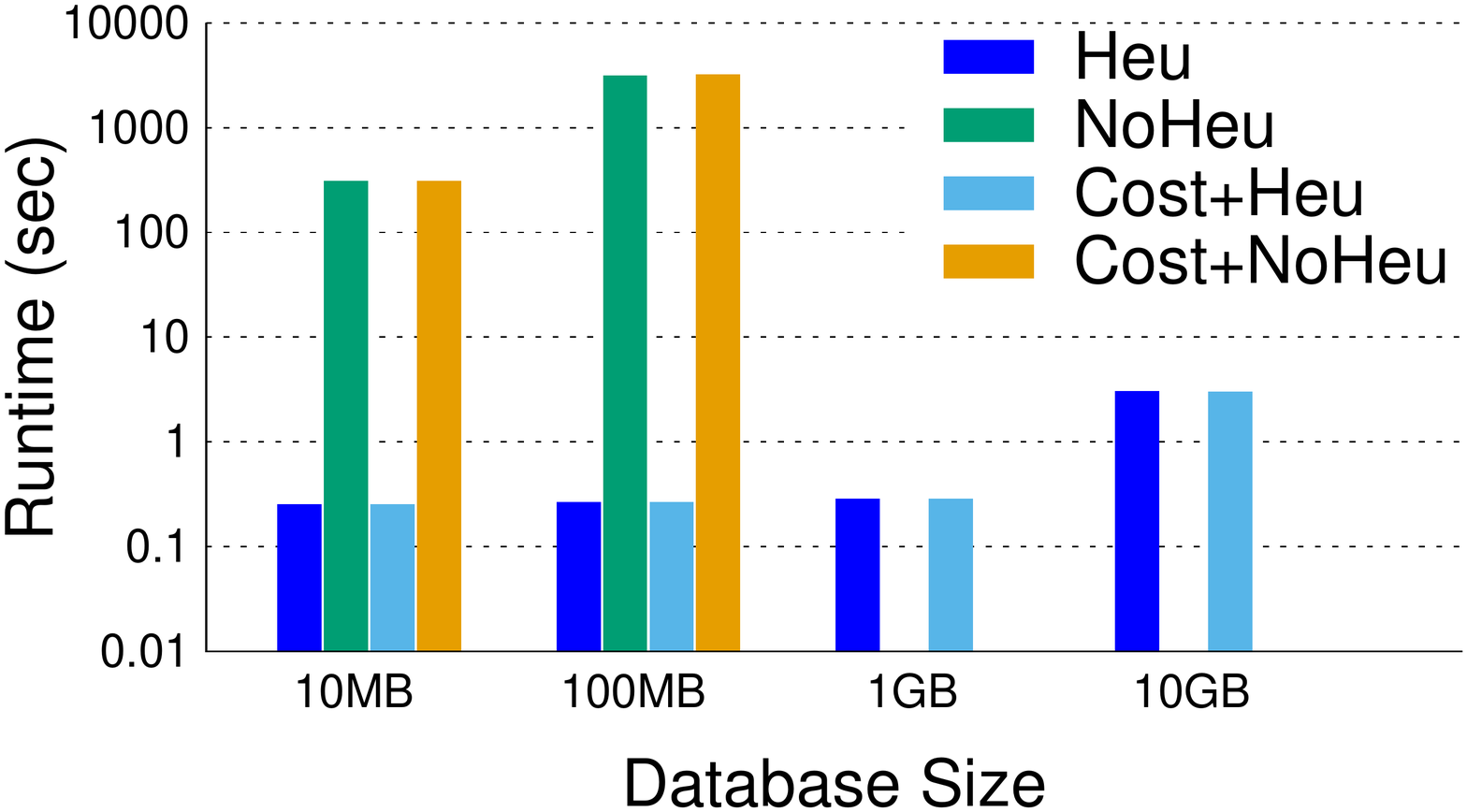}\\[-7mm]
  \caption{Provenance Export}
  \label{fig:export}  
  \end{minipage}
  \begin{minipage}[b]{0.5\linewidth}
  \includegraphics[width=1\linewidth,trim=0 60pt 0 100pt, clip]{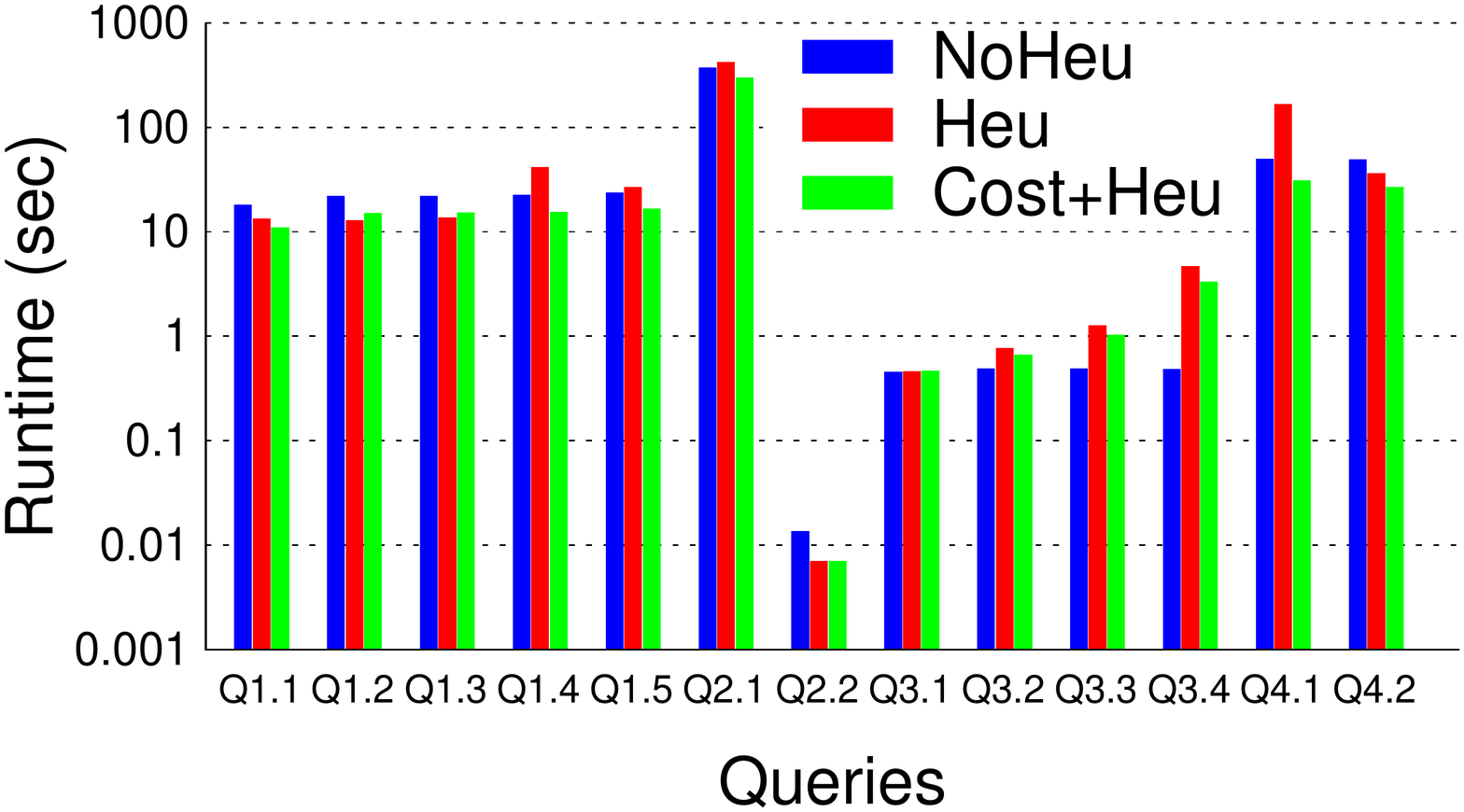}\\[-7mm] 
  \caption{Datalog Provenance}
  \label{fig:provenance-game}
  \end{minipage}

  
\begin{minipage}[b]{1\linewidth}  
\begin{minipage}[b]{0.48\linewidth}
\includegraphics[width=1\linewidth,trim=0 -10pt 0 30pt, clip]{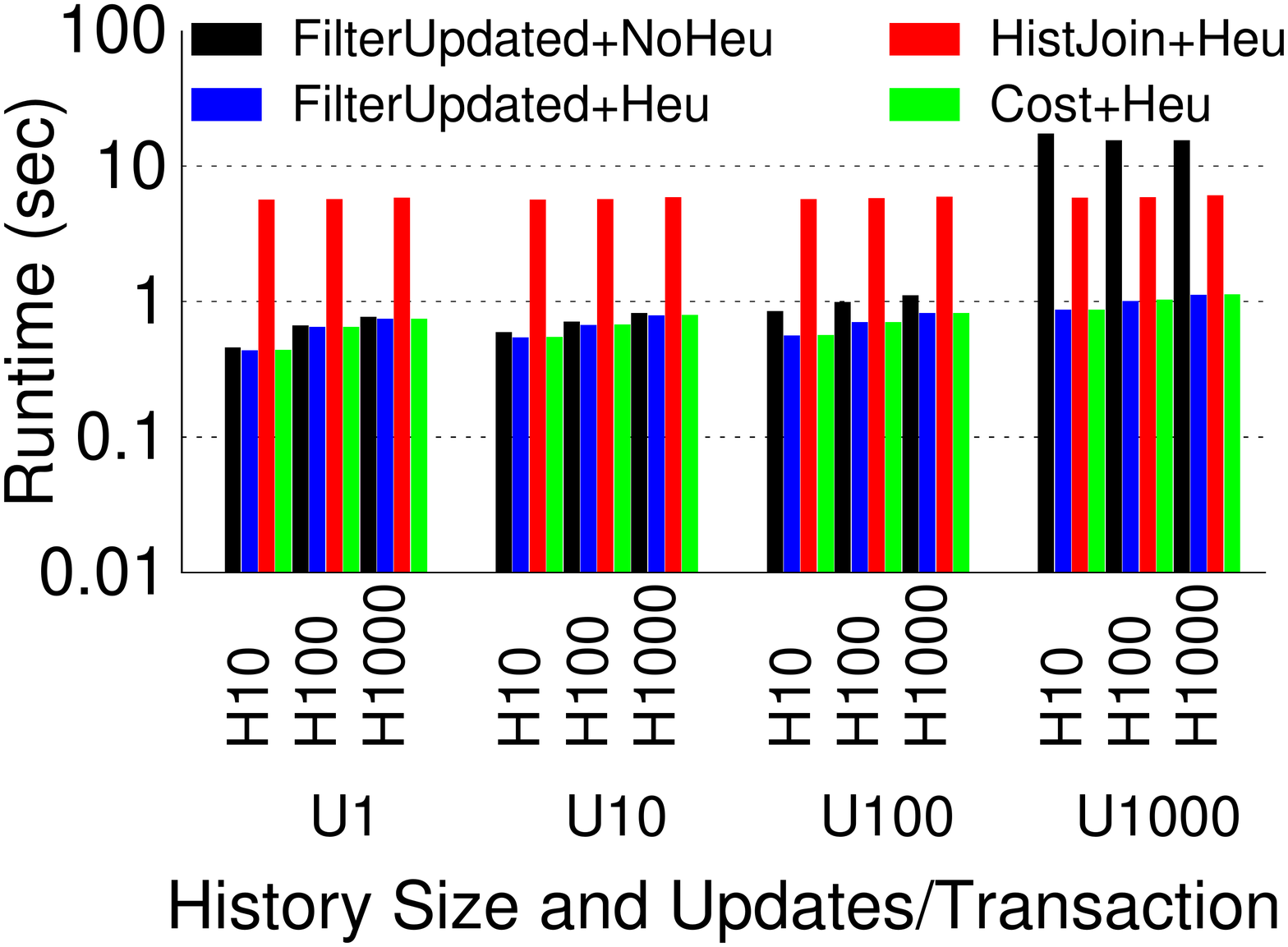}\\
\vspace{-7mm}
\end{minipage}
  \begin{minipage}[b]{0.48\linewidth}
\includegraphics[width=1\linewidth,trim=0 0pt 0 30pt, clip]{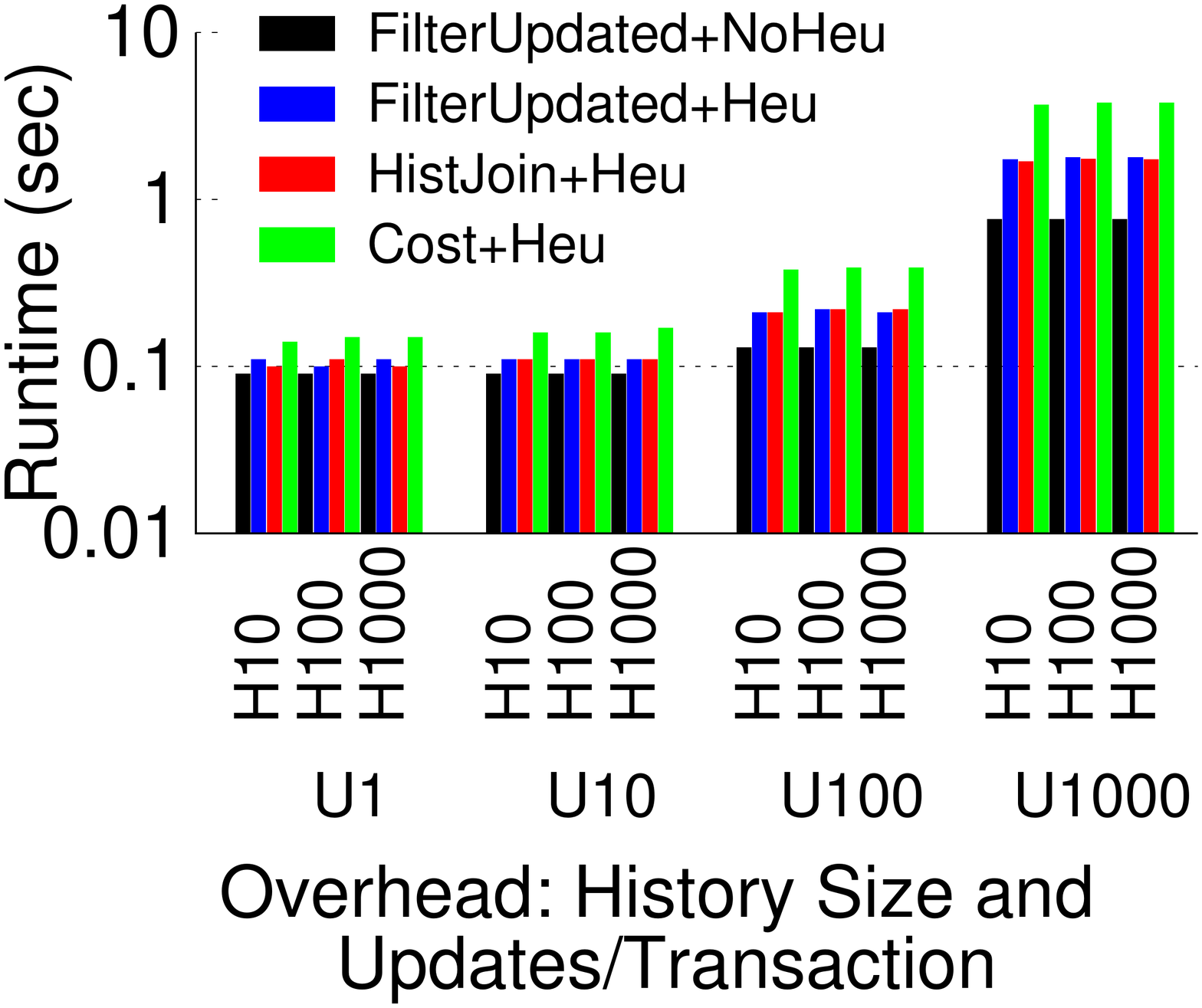}\\
  \vspace{-7mm}
\end{minipage}
  \caption{Transaction provenance - runtime and overhead}
  \label{fig:Transaction-provenance-runtime}
\end{minipage}


  \begin{minipage}{1\linewidth}
  \begin{minipage}[b]{0.49\linewidth}
  \includegraphics[width=1\linewidth,trim=0 50pt 0 100pt, clip]{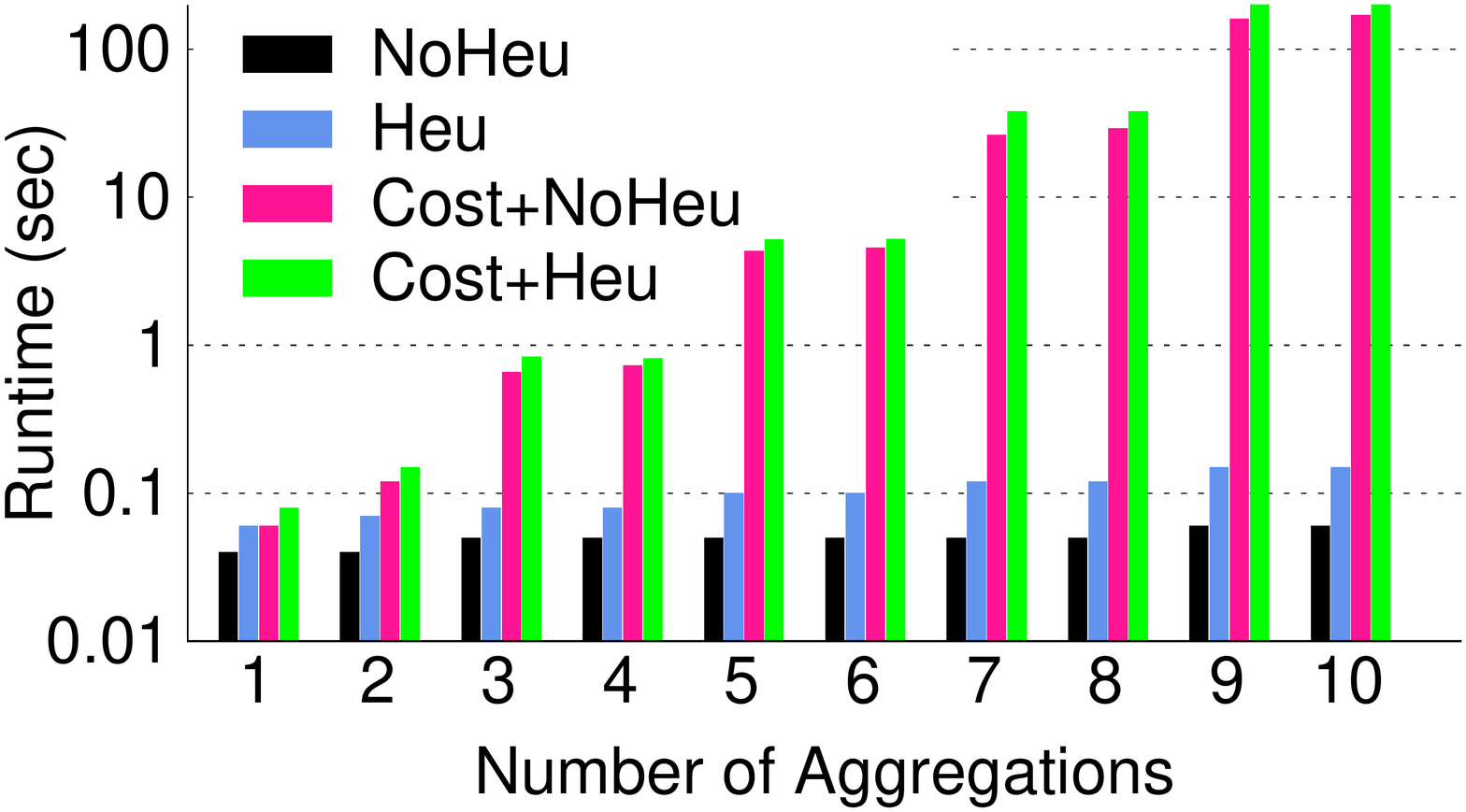}\\[-7mm]
  \label{fig:simple-agg-overhead}  
  \end{minipage}
  \begin{minipage}[b]{0.49\linewidth}
  \includegraphics[width=1\linewidth,trim=0 50pt 0 100pt, clip]{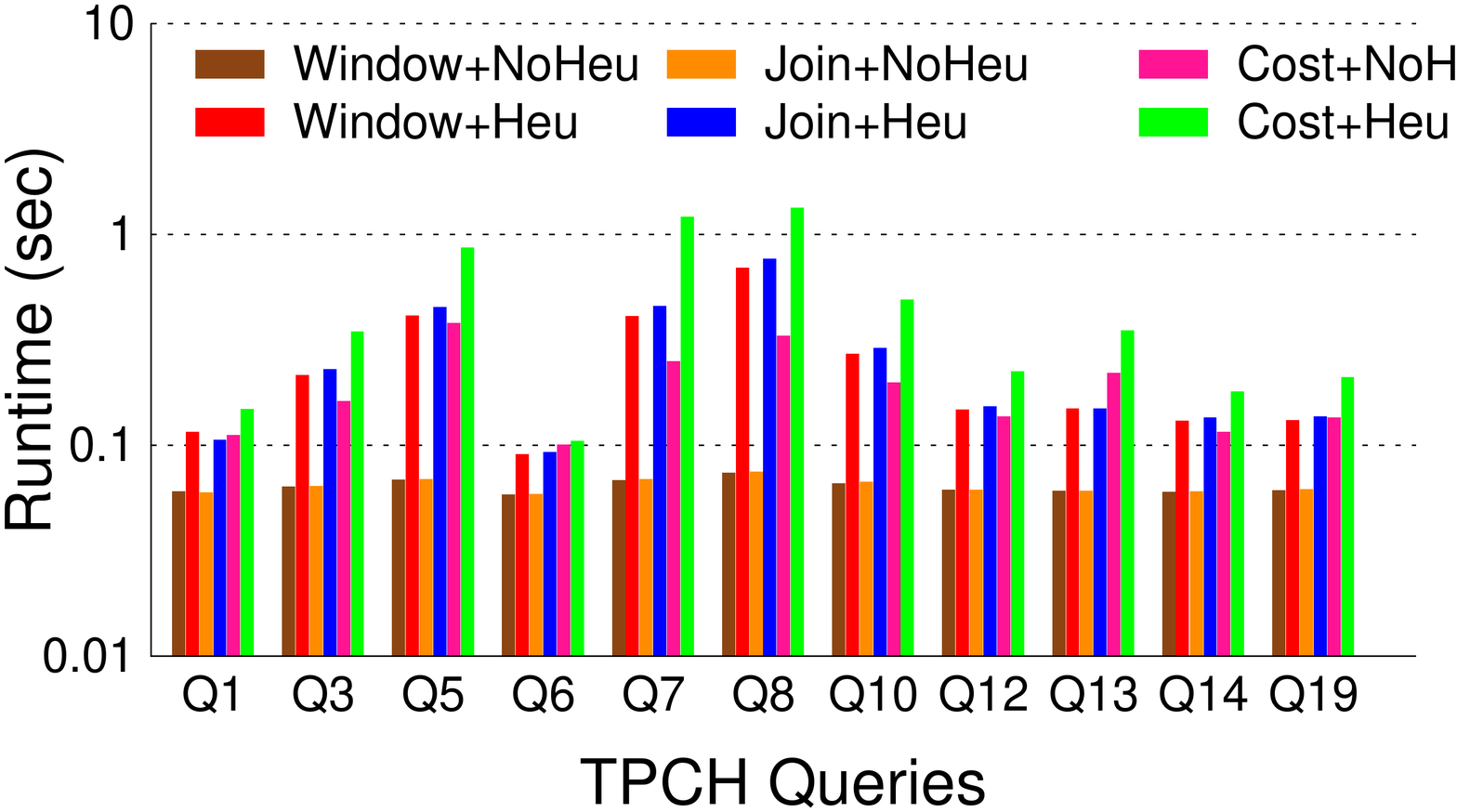}\\[-7mm] 
  \label{fig:tpch-overhead}
  \end{minipage}
  \caption{\textit{SimpleAgg} (\textbf{Left}) and \textit{TPC-H} (\textbf{Right}) Overhead}
  \label{fig:simagg-tpch-overhead}
\end{minipage}

\end{minipage}
\begin{minipage}{0.33\linewidth}
  \begin{minipage}{1\linewidth}
  \centering
\resizebox{1\linewidth}{!}{  
  \begin{minipage}{1.55\linewidth}
    \centering
  \begin{tabular}{|c|r|r|r|r|r|} \hline 
    \rowcolor[gray]{.9}   & \multicolumn{1}{|c|}{NoHeu}  & \multicolumn{1}{|c|}{NoHeu} & \multicolumn{1}{|c|}{Heu}  & \multicolumn{1}{|c|}{Heu}  & \multicolumn{1}{|c|}{Cost+Heu}\\
    \rowcolor[gray]{.9}   & \multicolumn{1}{|c|}{(Worst)} & \multicolumn{1}{|c|}{(Best)} & \multicolumn{1}{|c|}{(Worst)} & \multicolumn{1}{|c|}{(Best)} & \\
    \hline 
Min & 1.33 & 1.33 & \textbf{1.00} & \textbf{1.00} &  \textbf{1.00}\\ \hline 
Avg & 1,878.76 & 1,877.95 & 14.16 & 2.82 & \textbf{1.04} \\ \hline 
Max & $+$12,173.35& $+$12,173.35 & 68.63 & 7.80 & \textbf{1.18} \\ \hline 
  \end{tabular}
\end{minipage}
}\\
\caption{Min, max, and avg runtime relative  to the best method per workload aggregated over all workloads.}
\label{tab:overview-all}

\end{minipage}

\begin{minipage}{1\linewidth}
\resizebox{0.7\linewidth}{!}{  
  \begin{minipage}{0.98\linewidth}
    \centering
  \begin{tabular}{|c|r|r|r|r|} \hline 
  \rowcolor[gray]{.9}  Queries & FilterUpdated & HistJoin& FilterUpdated & Cost\\
  \rowcolor[gray]{.9}  Queries & +NoHeu & +Heu & +Heu & +Heu\\ \hline 
HSU/T & 55.11 & 69.50 &  \textbf{8.91} & \textbf{8.96} \\ \hline 
TAPU & 30.13 & 26.08 & \textbf{12.94} & \textbf{12.89} \\ \hline 
  \end{tabular}
\end{minipage}
}\\
\caption{Total workload runtime for transaction provenance}
\label{tab:overview-sum-transaction}
\end{minipage}

\begin{minipage}{1\linewidth}
 
\resizebox{0.7\linewidth}{!}{
  \begin{minipage}{0.88\linewidth}
  \begin{tabular}{|c|r|r|r|} \hline 
\rowcolor[gray]{.9}  Queries & NoHeu & Heu & Cost+Heu \\ \hline 
Export 10M & 310.49 &  \textbf{0.25}&  \textbf{0.25} \\ \hline 
Export 100M & 3,136.94 & \textbf{0.27} &  \textbf{0.26}  \\ \hline 
Export 1G & +21,600 & \textbf{0.28} &  \textbf{0.28}   \\ \hline 
Export 10G & +21,600 & \textbf{3.03} & \textbf{3.01}  \\ \hline \hline
Datalog Provenance & 583.96 & 736.50 & \textbf{437.75} \\ \hline 
  \end{tabular}
\end{minipage}
}\\
\caption{Total runtime for export and Datalog workloads}
\label{tab:overview-sum-export-gp}
\end{minipage}

  \begin{minipage}[b]{1\linewidth}
  \includegraphics[width=1\linewidth,trim=0 0pt 0 20pt, clip]{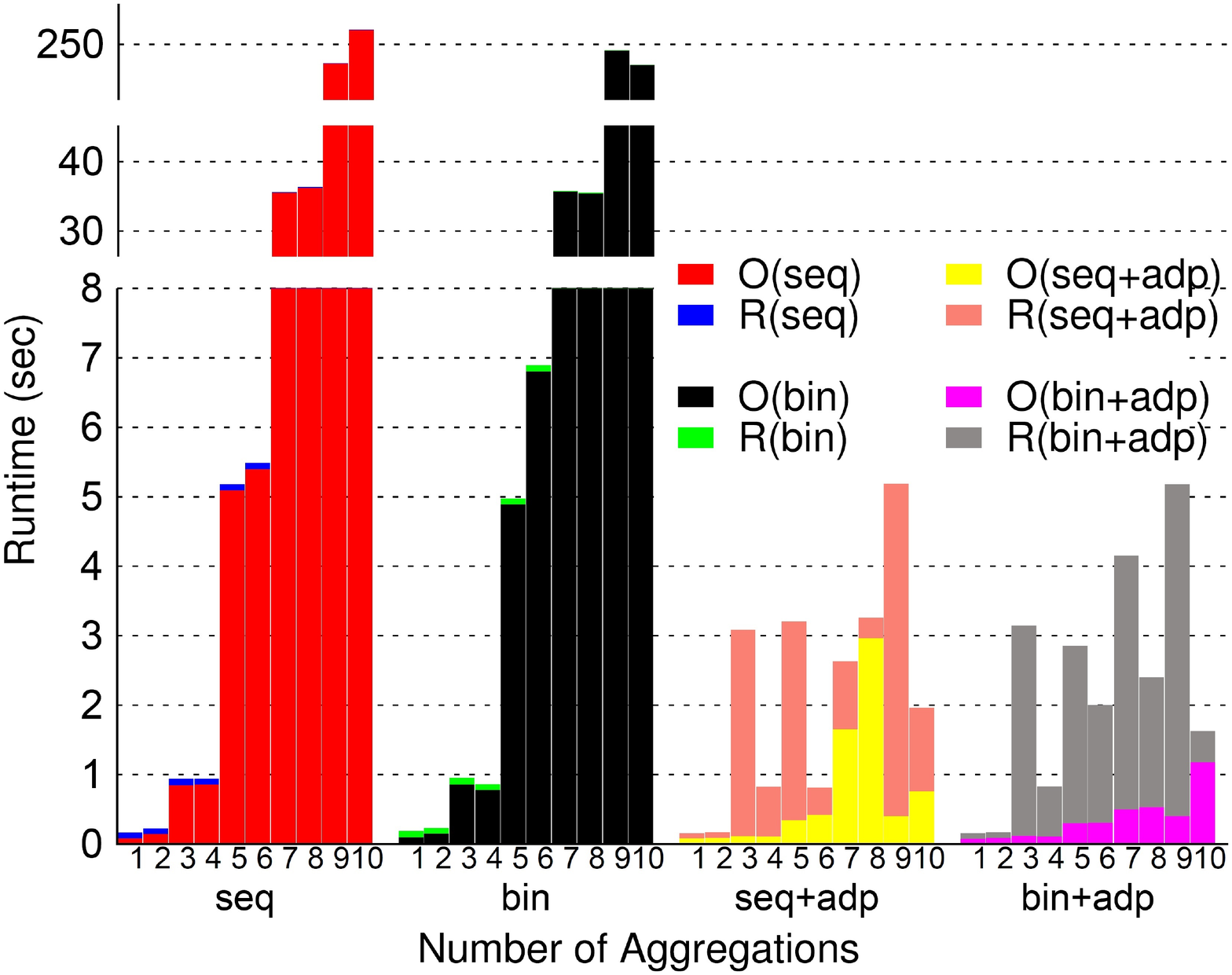}\\[-5mm]
  \caption{Optimization + runtime for Simple Agg. - 1GB}
  \label{fig:simAggs-all-stack}  
  \end{minipage}
  \begin{minipage}[b]{1\linewidth}
  \includegraphics[width=1\linewidth,trim=0 30pt 0 40pt, clip]{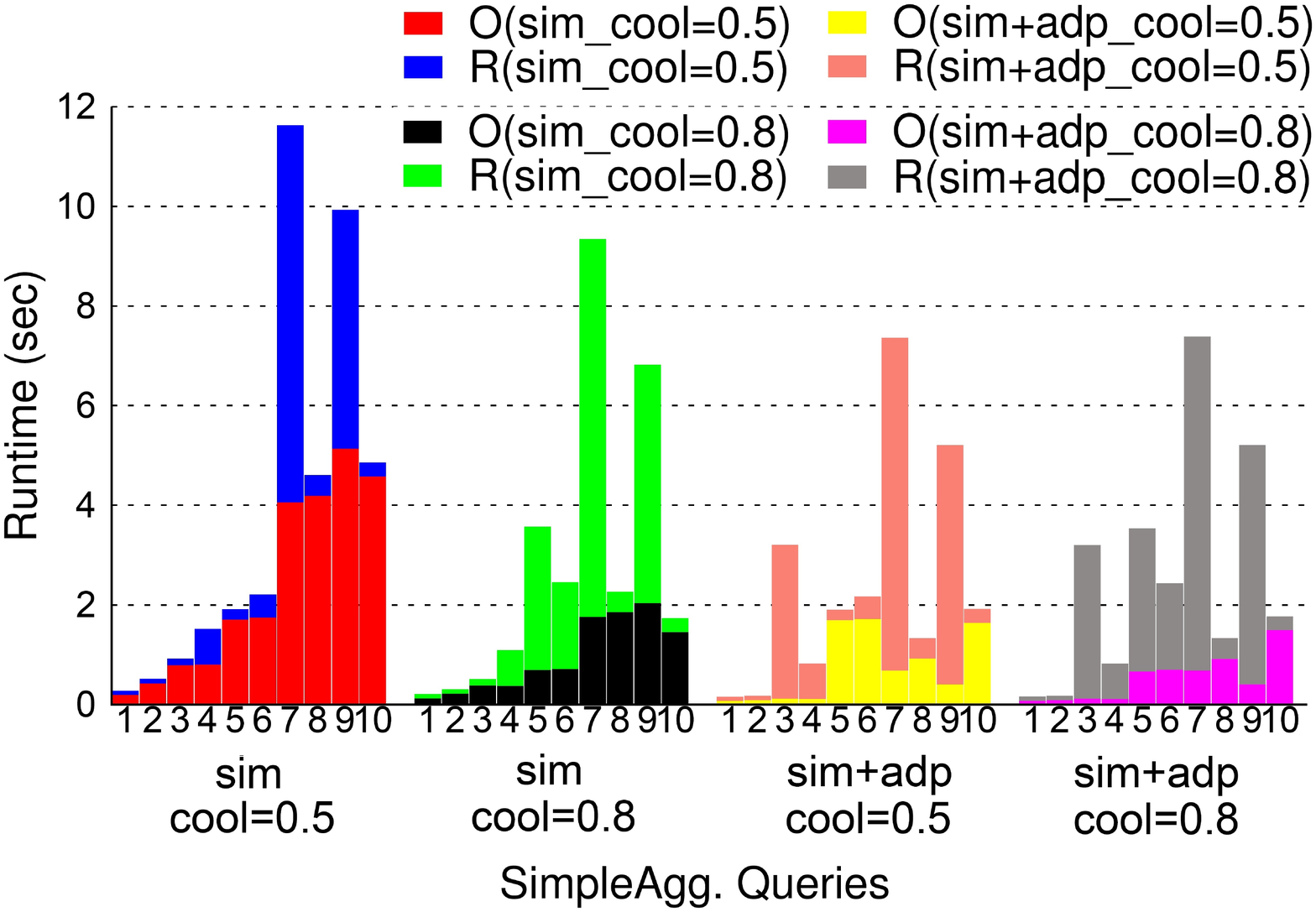}\\[-5mm] 
  \caption{Optimization + runtime for Simple Aggregation workload using Simulated Annealing - 1GB dataset}
  \label{fig:simple-agg-sim-annl}
  \end{minipage}

\end{minipage}
\end{figure*}

\subsection{Measuring Query Runtime}

\parttitle{Overview} 
Figure~\ref{tab:overview-all} shows an overview of our results. We show the average  runtime of each method relative to the best method per workload, e.g., if \textit{Cost} performs best for a workload then its runtime is normalized to 1. We  use relative overhead instead of total runtime over all workloads, because some of the workloads are significantly more expensive than others and, thus, comparing the results would be biased towards these workloads. For the \textit{NoHeu} and \textit{Heu} methods we report the performance of the best and the worst option for each choice point. For instance, for the \textit{SimpleAgg} workload the performance is impacted by the choice of whether the \textit{Join} or \textit{Window} method is used to instrument aggregation operators with \textit{Window} performing better (\textit{Best}). Numbers prefixed by a $'+'$  indicate that for this method some queries of the workload did not finish within the maximum time we have allocated for each query. Hence, the runtime for these cases should be interpreted as a lower bound on the actual runtime.  Compared with other methods, \textit{Cost+Heu} is on average only 4\% worth then the best method for the workload and has 18\% overhead in the worst case. Note that we confirmed that in all cases where an inferior plan was chosen by our CBO that was because of inaccurate cost estimations by the backend database. If we heuristically choose the best option for each choice point, then this results in a  178\% overhead over CBO on average. However, achieving this performance requires that the best option for each choice point is known upfront. 
The impact of bad choices on average increases runtime by a factor of $\sim$ 14 compared to CBO. These results also confirm the critical importance of our PATs since deactivating these transformations 
increases runtime by a factor of $\sim$ 1,800 on average and more than 12,000 in the worst case. 

\parttitle{Simple Aggregation Queries} 
We measure the runtime of computing provenance for the \textit{SimpleAgg} workload over the 1GB and 10GB TPC-H datasets varying the number of aggregations per query. 
The total workload runtime is shown in Fig.~\ref{tab:overview-sum-avg-sagg-tpch} (the best method is shown in bold). We also show the average runtime per query relative to the runtime of \textit{NoHeu+Join}. 
 Cost-based optimization significantly outperforms the other methods. The \textit{Window} method is more effective than the \textit{Join} method if a query contains multiple levels of aggregation. Our heuristic optimization improves the runtime of this method by about 50\%. The unexpected high runtimes of \textit{Join+Heu} are explained below. 
Fig.~\ref{fig:simpleAgg-comb-1GB} and \ref{fig:simpleAgg-comb-10GB} show the results for individual queries. Note that the y-axis is log-scale.
Activating heuristic optimizations improves performance in most cases, but for this workload the dominating factor is choosing the right method for instrumenting aggregations. 
The exception is the \textit{Join} method, where runtime increases when heuristic optimization is activated.  
We inspected the plans used by the backend DBMS for this case. A suboptimal join order was chosen for \textit{Join+Heu} based on inaccurate estimations of intermediate result sizes. For \textit{Join} the DBMS did not remove intermediate operators that blocked join reordering and, thus, executed the joins in the order provided in the input query which turned out to be more efficient in this particular case.
%
%
Consistently, the cost-based optimizer was either able to select \textit{Window} as the superior method (we confirmed this by inspecting the generated execution plan) or to outperform both \textit{Window} and \textit{Join} by instrumenting some of the aggregations in a query using the \textit{Window} and others with the \textit{Join} method.
\parttitle{TPC-H Queries} 
We compute the provenance of TPC-H queries to determine whether the results for simple aggregation queries
translate to more complex queries. 
The total workload execution time is shown in Fig.~\ref{tab:overview-sum-avg-sagg-tpch}. 
 We also show the average runtime per query relative to the runtime of \textit{NoHeu+Join}.
Fig.~\ref{fig:tpch-comb-1GB} and \ref{fig:tpch-comb-10GB} show the running time for each query for the 1GB and 10GB  datasets. Our CBO significantly outperforms the other methods with the only exception of \textit{Heu+Join}. Note that the runtime of  \textit{Heu+Join} for Q13 and Q14 is lower than \textit{Cost+Heu} which causes this effect.  
Depending on the dataset size and query, there are cases where the \textit{Join} method is superior and others where the \textit{Window} method is superior. The runtime difference between these methods is less pronounced than for \textit{SimpleAgg} presenting a challenge for our CBO. 
Except for Q13 which contains 2 aggregation operators, all other queries only contain one aggregation operator. The CBO was able to determine the best method to use in almost all cases. 
We confirmed that for the queries where we made an inferior choice, this was based on inaccurate cost estimates.
We also show the results for \textit{NoHeu}. However, only three queries finished in the allocated time slot of 6 hours (Q1, Q6 and Q13). Thus, the TPC-H results demonstrate the need for PATs and the robustness of our CBO in being able to choose the right instrumentation for a given query. 
\parttitle{Transactions}
We next compute the provenance of transactions executed over the synthetic dataset using the techniques introduced in~\cite{AG16a}. We vary the number of updates per
transaction ($U1$ up to $U1000$) and the size of the database's history
($H10$, $H100$, and $H1000$). 
The total workload runtime is shown in Fig.~\ref{tab:overview-sum-transaction}.  
The left graph in Fig.~\ref{fig:Transaction-provenance-runtime} shows detailed results. 
%
We compare the runtime of \textit{FilterUpdated} and \textit{HistJoin} (\textit{Heu} and \textit{NoHeu}) with \textit{Cost+Heu}. 
Our CBO choses \textit{FilterUpdated} as the better option for this workload. 
\parttitle{Provenance Export}
Fig.~\ref{fig:export} shows results for the provenance export workload for dataset sizes from 10MB up to 10GB (total workload runtime is shown in Fig.~\ref{tab:overview-sum-export-gp}). 
\textit{Cost+\-Heu} and \textit{Heu} both outperform \textit{NoHeu} demonstrating the key role of PATs for this workload. 
Our provenance instrumentations introduce window operators for enumerating intermediate result tuples which prevent the database from pushing selections and reordering joins.  \textit{Heu} outperforms \textit{NoHeu}, because \textit{Heu} determines that some of these window operators are redundant and can be removed (PAT rule~\eqref{eq:window-function}).
The CBO does not further improve the runtime for this workload, because this export query does not contain any aggregation and duplicate elimination operators, i.e., none of the choice points were hit. 

\parttitle{Why Questions for Datalog}
The approach~\cite{LS16} we use for generating provenance for 
Datalog queries with negation may produce queries which contain a large amount of duplicate elimination operators and shared subqueries. The heuristic application of PATs would remove all but the top-most duplicate elimination operator (rules~\eqref{eq:duplicate-remove} and~\eqref{eq:duplicate-remove-set}  in Fig.~\ref{fig:algebraic-rules}). However, this is not always the best option, because a duplicate elimination, while adding overhead, can reduce the size of inputs for downstream operators. Thus, as mentioned before we consider the application of rule 2 as an optimization choice in our CBO.
The total workload runtime and results for individual queries are shown in Fig.~\ref{tab:overview-sum-export-gp} respective Fig.~\ref{fig:provenance-game}. 
Removing all redundant duplicate elimination operators (\textit{Heu}) is not always better than removing none (\textit{NoHeu}). Our CBO (\textit{Cost+Heu}) has the best performance in almost all cases by choosing a subset of duplicate elimination operators to remove. Incorrect choices are again based on inaccurate cost estimation. 

\subsection{Optimization Time and CBO Strategies}


\parttitle{Simple Aggregation}
We show the optimization time of several methods in Fig.~\ref{fig:simagg-tpch-overhead} (left). 
Heuristic optimization (\textit{Heu}) results in an overhead of $\sim$50ms  compared to the time of compiling the provenance request without any optimization (\textit{NoHeu}) and this overhead is only slightly affected by the number of aggregations in the query. When increasing the number of aggregations,
the running time of \textit{Cost} increases more significantly because we have 2 choices for each aggregation, i.e., the plan space size is $2^{i}$ for $i$ aggregations. 
We have measure where time is spend during cost-based optimization and have determined that the majority of time is spend in costing SQL queries using the backend DBMS. Note that even though we did use the exhaustive search space traversal method for our CBO,  
the sum of optimization time and runtime for \textit{Cost} is still less than this sum for the \textit{Join} method.


\parttitle{TPC-H Queries}
In Fig.~\ref{fig:simagg-tpch-overhead} (right), we show the optimization time for TPC-H queries. Activating PATs results in $\sim$50ms overhead in most cases with a maximum overhead of $\sim$0.5s.  This is more than offset by the gain in query performance (recall that with \textit{NoHeu} only 3 queries finish within 1 hour for the 1GB dataset). CBO takes up to 3s in the worst case.
\parttitle{CBO Strategies}
We now compare query runtime and optimization time for the CBO search space traversal strategies introduced in Sec.~\ref{sec:cbo}.   
Recall that the \textit{sequential-leaf-traversal \textbf{(seq)}} and
\textit{binary-search-traversal \textbf{(bin)}} strategies are both exhaustive strategies. 
 \textit{Simulated Annealing} \textbf{(sim)} is the metaheuristic as introduced in Sec.~\ref{sec:traversal-strategies}.
We also combine these strategies with our \textit{adaptative \textbf{(adp)}} heuristic that limits time spend on optimization based on the expected runtime of the best plan found so far. 
 Fig.~\ref{fig:simAggs-all-stack} shows the total time (runtime (\textbf{R}) + optimization time (\textbf{O})) for the simple aggregation workload. We use this workload because it contains some queries with a large plan search space.  
 Not surprisingly,  the runtime of queries produced by \textit{seq} and \textit{bin} is better than \textit{seq+adp} and \textit{bin+adp} as \textit{seq} and \textit{bin} traverse the whole search space. However, their total time is much higher than \textit{seq+adp} and \textit{bin+adp} for larger numbers of aggregations. 
Fig.~\ref{fig:simple-agg-sim-annl} shows the total time of \textit{sim} with and without the \textit{adp} strategy for the same workload. We used cooling rates of 0.5 and 0.8 because they result in better performance than other rates that we have tested. The \textit{adp} strategy improves the total runtime in all cases except for the query with 3 aggregation operators.  




\section{Conclusions and Future Work}\label{sec:conclusion}

We present the first cost-based optimization framework for  provenance instrumentation and its implementation in GProM. 
The motivation for this work is that instrumented queries which capture provenance are often not successfully optimized, even by sophisticated database optimizers.
Our approach supports both heuristic and cost-based choices and is applicable to a wide range of instrumentation pipelines.
We have developed several provenance-specific algebraic transformation (PAT) rules which significantly improve performance as well as study instrumentation choices (ICs), i.e., alternative ways of realizing provenance capture.
%
%
Our experimental evaluation demonstrates that our optimizations improve performance by several order of magnitude for diverse provenance tasks.                                       
%
There are several interesting avenues of future work. We would like to improve the performance of CBO by making our optimizer aware of the structure of a query such that it can cache the best plan for a subquery.
Furthermore, we plan to use the CBO to select among alternative compressed and approximate provenance representations when capturing provenance.
\bibliographystyle{abbrv}
{\small\sloppy
\bibliography{2016-prov-optimizer}  
}
%
%

\end{document}